\definecolor{blue1}{RGB}{25,25,112}
\newcites{New}{Appendix References}
\newcommand{\widebar}[1]{\mkern 1.5mu\overline{\mkern-1.5mu#1\mkern-1.5mu}\mkern 1.5mu}
\renewcommand{\Re}{\mathbb{R}}
\newcommand{\oset}[2]{%
  {\mathop{#2}\limits^{\vbox to -.5\ex@{\kern-\tw@\ex@
   \hbox{\scriptsize #1}\vss}}}}
\declaretheoremstyle[notefont=\bfseries,notebraces={}{},%
    headpunct={},postheadspace=1em]{mystyle}
\newcommand{\refthm}[1]{\hyperref[#1]{Theorem \ref{#1}}}
\newcommand{\refprop}[1]{\hyperref[#1]{Proposition \ref{#1}}}
\newcommand{\reflem}[1]{\hyperref[#1]{Lemma \ref{#1}}}
\newcommand{\refsec}[1]{\hyperref[#1]{Section \ref{#1}}}
\newcommand{\refsecs}[1]{\hyperref[#1]{Sections \ref{#1}}}
\newcommand{\refass}[1]{\hyperref[#1]{Assumption \ref{#1}}}
\newcommand{\refappendix}[1]{\hyperref[#1]{Appendix \ref{#1}}}
\newcommand{\refobs}[1]{\hyperref[#1]{Observation \ref{#1}}}
\newcommand{\refeg}[1]{\hyperref[#1]{Example \ref{#1}}}
\newcommand{\reffig}[1]{\hyperref[#1]{Figure \ref{#1}}}
\newcommand{\reftab}[1]{\hyperref[#1]{Table \ref{#1}}}
\newcommand{\refcond}[1]{\hyperref[#1]{Condition \ref{#1}}}
\newcommand{\refAfriat}[1]{\hyperref[#1]{Afriat's Theorem}}
\newcommand{\KS}{\citetalias{kitamura2018} }
\newcommand{\AP}{\citetalias{attanasio2020} }
\begin{document}

\author[Deb]{Rahul Deb$^{\between}$}
\email{rahul.deb@utoronto.ca}

\author[Kitamura]{Yuichi Kitamura$^\dagger$}
\email{yuichi.kitamura@yale.edu}

\author[Quah]{John K.-H. Quah$^\ddagger$}
\email{john.quah@jhu.edu}

\author[Stoye]{J\"org Stoye$^\star$\\ 	\today}
\address{$^{\between}$University of Toronto, $^\dagger$Yale University, $^\ddagger$Johns Hopkins University, $^\star$Cornell University.}
\email{stoye@cornell.edu}

\thanks{We are very grateful for comments provided by the editor, the referees and numerous seminar, conference participants. We also thank Samuel Norris and Krishna Pendakur for generously sharing their data with us and John Hoddinott for patiently explaining Progresa. Deb thanks the SSHRC for their continued financial support and the Cowles Foundation for their hospitality. Stoye acknowledges support from NSF grants SES-1260980 and SES-1824375 as well as from Cornell University's Economics Computer Cluster Organization, which was partially funded through NSF Grant SES-0922005. Part of this work was done while Stoye was at Bonn University and the Hausdorff Center for Mathematics. We thank Michael Sullivan and Matthew Thirkettle for excellent research assistance.}

\makeatletter

\renewcommand\section{\@startsection{section}{2}%
        \z@{.7\linespacing\@plus\linespacing}{.5\linespacing}%
        {\scshape\centering}}
\renewcommand\subsection{\@startsection{subsection}{2}%
        \z@{.7\linespacing\@plus\linespacing}{.5\linespacing}%
        {\normalfont\itshape\centering}}

\defcitealias{kitamura2018}{KS}
\defcitealias{McFadden1991}{MR}
\defcitealias{attanasio2020}{AP}

\makeatother
\newtheorem{cor}{Corollary}
\newtheorem{theorem}{Theorem}
\newtheorem*{afriattheorem*}{Afriat's Theorem}
\newtheorem*{mcrichtheorem*}{\MR Theorem}
\newtheorem{lemma}{Lemma}
\newtheorem{proposition}{Proposition}
\newtheorem{observation}{Observation}
\theoremstyle{definition}
\newtheorem{definition}{Definition}[section]
\newtheorem{example}{Example}
\newtheorem{claim}{Claim}
\newtheorem{assumption}{Assumption}
\newtheorem{condition}{Condition}
\newcommand{\argmax}{\operatornamewithlimits{argmax}}
\newcommand{\argmin}{\operatornamewithlimits{argmin}}

\title[Revealed Price Preference]{Revealed Price Preference: Theory and Empirical Analysis}

\begin{abstract}
To determine the welfare implications of price changes in demand data, we introduce a revealed preference relation over prices. We show that the absence of cycles in this relation characterizes a consumer who trades off the utility of consumption against the disutility of expenditure. Our model can be applied whenever a consumer's demand over a strict subset of all available goods is being analyzed; it can also be extended to settings with discrete goods and nonlinear prices.  To illustrate its use, we apply our model to a single-agent data set and to a data set with repeated cross-sections. We develop a novel test of linear hypotheses on partially identified parameters to estimate the proportion of the population who are revealed better off due to a price change in the latter application. This new technique can be used for nonparametric counterfactual analysis more broadly.
\end{abstract}


\maketitle


\section{Introduction}

A central question in economic analysis is the determination of the welfare effect of price changes. As an example, suppose we observe a consumer's purchases of two goods, gasoline and food, from two separate trips to a grocery store with an on site gasoline retailer. In the first instance $t$, the prices are $p^t=(2,2)$ of gasoline and food respectively and she buys a bundle $x^t=(6,3)$. In her second trip $t'$, the prices are  $p^{t'}=(3,1)$ and she purchases $x^{t'}=(5,4)$.  The most basic welfare question one can ask here is whether the consumer is better off at the prices prevailing at $t$ or at $t'$ (keeping fixed the prices of all other goods she consumes)? In this paper, we introduce a theoretical framework based on revealed preference, along with a nonparametric econometric technique, that would allow us to answer questions of this type.

A typical approach to this problem is to model the consumer as having a quasilinear utility function $\widetilde{U}(x)-p\cdot x$ since, in particular, this allows for simple ``sufficient statistics'' analysis of welfare gains or losses using a Harberger formula (see \citet{chetty2009} and most recently \citet{kleven2021} for an overview of this approach). Of course, the second term ($-p\cdot x$) in the quasilinear utility function captures the fact that the goods being analyzed (food and gasoline in our simple example) do {\em not} constitute the universe of the consumer's consumption; expenditure lowers utility because it reduces the consumption of an outside (numeraire) good.

The point of departure of our analysis is the following simple observation. Without having to model the consumer's preference as quasilinear (or taking any other specific functional form) we can still conclude that she is better off at $t$ compared to $t'$.  This is because $p^{t'}\cdot x^{t'}=19$ whereas $p^{t}\cdot x^{t'}=18$.  In other words, if the prices prevailing at $t'$ were $p^t$ instead of $p^{t'}$, the consumer would be better off since purchasing the same bundle $x^{t'}$ would cost less, leaving the consumer with more money to buy other goods (outside the set of goods analyzed).\footnote{Another way of seeing this is the following.  Suppose $t'$ is a supermarket where the prices are $p^{t'}$ and we observe the bundle $x^{t'}$ being bought by a consumer.  If at supermarket $t$, the prices are $p^t$, then we know that the consumer would prefer this supermarket, since the same purchases at $t'$ would cost less at $t$.}   More generally, the consumer has a {\em preference over prices} that an analyst could at least partially discern from the data: if at observations $t$ and $t'$, we find that $p^{t}\cdot x^{t'}\leq (<) p^{t'}\cdot x^{t'}$, then
\begin{center}
	{\em the consumer has revealed that she (strictly) prefers the price $p^{t}$ to the price $p^{t'}$.}
\end{center}
Welfare comparisons made in this way will only be consistent if the revealed preference relation over prices is free of cycles, a property we call the {\em generalized axiom of price preference} (GAPP). This leads inevitably to the following question: precisely what does GAPP mean for consumer behavior?\vspace{0.1in}

\noindent {\bf Augmented Utility Functions:}\,  To answer this question, we assume that the analyst collects a data set $\mathcal{D}=\{p^t,x^t\}_{t=1}^T$ from a consumer; each observation $t$ consists of the prices $p^t\in\Re^L_{++}$ of $L$ goods (representing some but not all the goods she consumes) and the consumer's demand $x^t\in\mathbb{R}^L_{+}$ at those prices.  We show that GAPP (on $\mathcal{D}$) is both necessary and sufficient for the existence of a strictly increasing function $U:\mathbb{R}^L_+\times \mathbb{R}_{-}\to \mathbb{R}$ that {\em rationalizes $\mathcal{D}$} in the following sense:
$$x^t\in\argmax_{x\in\mathbb{R}^L_+}U(x,-p^t\cdot x)\:\mbox{ for all $t=1,2,...,T$.}$$
The function $U$ should be interpreted as an {\em expenditure-augmented utility function}, where $U(x,-e)$ is the consumer's utility when she acquires $x$ at the cost of $e$. It recognizes that the consumer's expenditure on the observed goods is endogenous and dependent on prices: she could in principle spend more than what she actually spent (note that she optimizes over $x\in \mathbb{R}^L_{+}$) but the trade-off is the dis-utility of greater expenditure. Note that the quasilinear utility function $U(x,-p\cdot x)=\widetilde U(x)-p\cdot x$ is a special case of an augmented utility function.\vspace{0.1in}

The augmented utility model has a number of features that makes it widely applicable and easy to use.  We highlight a few of them.

(1)\, Being more general than the quasilinear model, it does not have some of its overly strong implications on the structure of consumer demand (see \refsec{sec:AUQL}).  In particular, it is broad enough to accommodate phenomena emphasized in the behavioral economics literature, such as reference dependence, mental budgeting and inattention to prices. We briefly describe the first of these here, a more detailed discussion can be found in \refsec{sec:behavioral}. \citet{koszegi2006} and \citet{heidhues2008} argue that consumption decisions can depend not just on the actual prices but also on the prices the consumer \textit{expected} to pay. Specifically, the disutility from spending is greater if the expected price was lower than the sticker price and vice versa.  A simple way they propose of capturing this phenomenon is the following function
\begin{equation*}\label{eq:ref_dep}
U(x,-px)=\widetilde{U}(x)-p\cdot x - F(p\cdot x - \tilde{p}\cdot x).
\end{equation*}
The first two terms capture standard quasilinear preferences whereas the third term captures a general form of reference dependence.\footnote{For a related model of reference prices leading to a similar functional form, see \citet{sakovics2011}.}  In Koszegi and Rabin's terminology, the consumer gets ``gain-loss utility'' by comparing the expenditure $p\cdot x$ she incurs on a bundle $x$ against the expenditure $\tilde{p} \cdot x$ she expected to incur, where $\tilde{p}$ are her reference prices.\footnote{A common choice for $F$ is $F(p\cdot x - \tilde{p}\cdot x)=\max\{\bar{k}(p\cdot x - \tilde{p}\cdot x),0\}+\min\{\underline{k}(p\cdot x - \tilde{p}\cdot x),0\}$ where it is typically assumed that $\bar{k}>\underline{k}>0$ or that  the consumer feels losses relative to the reference point more severely than commensurate gains.}

(2)\,  In this model, a consumer's utility at prices $p$ is given by $\max_{x\in\mathbb{R}^L_+}U(x,-p\cdot x)$, which obviously leads to a ranking or preference on prices.  Going further, it is possible to develop notions analogous to compensating and equivalent variations, which gives us a {\em quantitative} sense of how much one set of prices is ranked above another and could form the basis for interpersonal comparisons (see \refsec{sec:compensation}).

(3)\,  Readers familiar with \refAfriat{thm:Afriat} \citep{afriat1967} will no doubt have already noticed that we are working in a similar framework.  That theorem characterizes a data set $\mathcal{D}=\{p^t,x^t\}_{t=1}^T$ that could be rationalized in the following sense: there is $\widetilde U:\mathbb{R}^L_+\to\mathbb{R}$ such that $\widetilde U(x^t)\geq \widetilde U(x)$ for all $x\in\mathbb{R}^L_+$ that satisfy $p^t\cdot x\leq p^t\cdot x^t$.  The notion of rationalization in our model is distinct from that in \refAfriat{thm:Afriat} (even the utility functions have different domains) and there are data sets that could be rationalized in one sense but not the other.  We explain these differences in \refsecs{sec:Afrsection} and \ref{sec:GAPPvsGARP}.

Empirical researchers who apply \refAfriat{thm:Afriat} must contend with cases where a data set is not exactly rationalizable. They have developed an easily interpretable way of measuring how close a data set is to being rationalized known as the {\em critical cost efficiency index}. In \refsec{sec:index} we develop a similarly intuitive index that should facilitate empirical applications of the augmented utility model.

(4)\,  Our notion of revealed preference over prices is not simply applicable to a Euclidean consumption space.  It applies even when goods can only be consumed in discrete quantities (as is often the case in empirical IO models) or when they are represented by characteristics.   Furthermore,  when prices are nonlinear, it is still possible to compare price systems by asking if an agent could replicate the purchases under one system in another price system.  Requiring non-cycling comparisons in this case leads to a natural extension of GAPP and the augmented utility model, which we explain in  \refsec{sec:nonlinearGAPP}. \vspace{0.1in}


\noindent {\bf Random Augmented Utility Model (RAUM):}\, In the second part of the paper, we develop the random version of the augmented utility model, in order to study the demand distribution of a population of consumers drawn from repeated cross-sectional data.  We first devise a test to check if the data are consistent with the RAUM.  We then develop a procedure to \textit{estimate} the \textit{proportion} of consumers who are made better or worse off by a given change in prices; welfare analysis of this kind under general preference heterogeneity is a challenging empirical issue, and has attracted considerable recent research (see, for example, \citet{hausman2016} and its references).

Unlike the case of data collected from a single individual, it is worth noting that, in this case, both model testing and welfare analysis are statistical since we need to account for sampling error inherent in repeated cross sectional data. Our RAUM test uses existing (though recently developed) econometric methods.  On the other hand, to carry out the welfare analysis, we develop new theoretical econometric results; it is worth stressing that this is a stand alone contribution that has applications beyond this paper.

For reasons we shall now explain, testing the RAUM on actual repeated cross-sectional data (such as household survey data) turns out to be a lot more straightforward than testing the random version of the standard budget-constrained utility model where the population is required to be rational in the sense of \refAfriat{thm:Afriat} (defined earlier). We refer to the latter model as RUM (random utility model) for short.  The test for RUM is broadly set out in \citet{McFadden1991}, but two challenges must first be overcome.  First, \citet{McFadden1991} do not account for finite sample issues as they assume that the econometrician observes the population distributions of demand; this hurdle was recently overcome by \citet{kitamura2018} who develop a testing procedure which incorporates sampling error. Second, the test suggested by \citet{McFadden1991} requires the observation of large samples of consumers who face not only the same prices but also make \textit{identical} total  expenditures. This feature is not true of any real observational data where a consumer's demand (and thus total expenditure) on a set of observed goods will typically be price dependent. Thus to implement their test, \citet{kitamura2018} need to first estimate demand distributions at a fixed level of (median) expenditure, which requires the use an instrumental variable technique (with all its attendant assumptions) to adjust for the endogeneity of observed total expenditure. \vspace{0.1in}

In contrast, the RAUM can be tested directly on household survey data, even when the demand distribution at a given price vector implies {\em heterogenous levels of total expenditure across consumers}.\footnote{A bit more formally, it is possible for two demand bundles $x$ and $x'$ in the support of the demand distribution when prices are $p^t$ to satisfy $p^t\cdot x\neq p^t\cdot x'$.}  This allows us to estimate the demand distribution by simply using sample frequencies and we can avoid the above-mentioned additional layer of demand estimation needed for testing RUM.

The reason for this remarkable simplification is somewhat ironic: we show that a data set is consistent with the RAUM if, and only if, a converted version of the data set (which results in identical expenditures at each price) of the type envisaged by \citet{McFadden1991} passes the RUM test suggested by them. In other words, we apply the test suggested by \citet{McFadden1991}, but not for the model they have in mind. This trick also means that we can use, and in a more straightforward way, the econometric techniques in \citet{kitamura2018}.

Assuming that a data set is consistent with our model, we can then evaluate the welfare impact of an observed change in prices. Indeed, if we observe the true distribution of demand at each price, it is possible to impose bounds (based on theory) on the proportion of the population who are revealed better off or worse off following an observed change in prices. Of course, when samples are finite, these bounds instead have to be estimated. To do so, we develop new econometric techniques that allow us to form confidence intervals on the proportion of consumers who are better or worse off; these techniques build on the econometric theory in \citet{kitamura2018} but are distinct from it.

We emphasize that these new econometric techniques can be more generally applied to linear hypothesis testing of parameter vectors that are partially identified, even in models that are unrelated to demand theory (see, for example, \citet{lazzati2018}).  They provide a new method for estimation and inference in nonparametric counterfactual analysis and, since the evaluation of counterfactuals is an important goal of empirical research, they are potentially very useful to practitioners.\vspace{0.1in}

\noindent {\bf Empirical Applications:}\, We use separate data sets to demonstrate how welfare analysis can be done using both the deterministic and random versions of our model. First, we use the deterministic augmented utility model to analyze panel data from the Mexican conditional cash transfer program Progresa. Recently, \citet{attanasio2020} showed that sellers responded to these transfers by altering the nonlinear prices they charge for staples.  We focus our analysis on the untreated households that did not receive cash transfers; we show, via revealed preference over the nonlinear price systems, that these households have tended to benefit from the price changes that occurred during the observation period. This is consistent with the finding in \citet{attanasio2020} that the change in the wealth distribution induced by Progresa led to larger quantity discounts (which favored the untreated households because they are usually better-off and consumed more).

Finally, we show how the RAUM can be used to estimate the welfare impact of the changes in observed prices in repeated cross-sectional data. Specifically, we take the model to two separate national household expenditure data sets from Canada and the U.K. and show that we can meaningfully estimate bounds on the percentage of households who are better and worse off. Even though these bounds are typically only partially identified, the estimated bounds are almost always narrower than ten percentage points and often substantially narrower than that. This demonstrates how to operationalize our novel econometric methodology to conduct inference for counterfactuals.

\section{The Deterministic Model}\label{sec:det_model}

We consider an econometrician who is studying a consumer's demand for $L$ goods.  We assume an idealized environment suitable for partial equilibrium analysis, where the consumer's demand for these goods at different prices are observed, while the consumer's wealth and the prices of other goods are held fixed.\footnote{Under fairly standard (but strong) assumptions, changes to the external environment can be precisely justified by deflating the prices of the $L$ goods (see \refsec{sec:deflprices}).}

Specifically, the econometrician collects a data set with a finite number of observations; each observation $t$ can be represented as $(p^t,x^t)$, where  $p^t\in \Re^L_{++}$ are the prices of the $L$ goods and $x^t\in \Re^L_+$ is the bundle of those goods purchased by the consumer.\footnote{We postpone the discussion of discrete consumption spaces and nonlinear pricing to \refsec{sec:nonlinearGAPP}.}  We denote the data set by $\mathcal{D}:= \{(p^t,x^t)\}_{t=1}^T$. (We shall slightly abuse notation and use $T$ to refer both to the (finite) number of observations and to the set $\{1,\dots,T\}$; similarly, $L$ could denote both the number, and the set, of commodities.)

We begin with a basic question: given ${\mathcal D}$, can the econometrician sign the welfare impact of a price change from $p^t$ to $p^{t'}$? Perhaps the most intuitive welfare comparison that can be made in this setting is as follows: if at prices $p^{t'}$, the econometrician finds that $p^{t'}\cdot x^t< p^t\cdot x^t$ then he may conclude that the agent is better off at the price vector $p^{t'}$ compared to $p^t$. This is because, at the price $p^{t'}$ the consumer can, if she wishes, buy the bundle bought at $p^t$ and she would still have money left over to buy other things, so she must be strictly better off at $p^{t'}$.  This ranking is eminently sensible, but can it lead to inconsistencies?

\begin{example}\label{eg:GARPnotGAPP}
	Consider a two observation data set
	\begin{equation*}
	p^{t}=(2,1)\, , \, x^{t}=(4,0) \; \text{ and } \; p^{t'}=(1,2)\, , \, x^{t'}=(0,1).
	\end{equation*}
	Since $p^{t'}\cdot x^t< p^t\cdot x^t$, it seems that the consumer is better off at prices $p^{t'}$ than at $p^t$; however, it is also true that $p^{t'}\cdot x^{t'}> p^t\cdot x^{t'}$, which gives the opposite conclusion.
\end{example}

\medskip

This example shows that for an econometrician to be able to consistently compare the consumer's welfare at different prices, some restriction has to be imposed on the data set.  To be precise, define the binary relations $\succeq_p$ and $\succ_p$ on ${\mathcal P}:=\{p^t\}_{t\in T}$, that is, the set of price vectors observed in $\mathcal D$, in the following manner:
$$p^{t'} \succeq_p (\succ_p) p^t \text{ if } p^{t'}\cdot x^{t}\leq (<)p^{t}\cdot x^{t}.$$
We say that price $p^{t'}$ is {\em directly (strictly) revealed preferred} to $p^t$ if $p^{t'} \succeq_p (\succ_p) p^t$, that is, whenever the bundle $x^t$ is (strictly) cheaper at prices $p^{t'}$ than at prices $p^t$. We denote the transitive closure of $\succeq_p$ by $\succeq_p^*$, that is, for $p^{t'}$ and $p^t$ in $\mathcal P$, we have $p^{t'}\succeq_p^* p^t$ if there are $t_1$, $t_2$,...,$t_N$ in $T$ such that $p^{t'}\succeq_p p^{t_1}$, $p^{t_1}\succeq_p p^{t_2}$,..., $p^{t_{N-1}}\succeq_p p^{t_N}$, and $p^{t_N}\succeq_p p^{t}$; in this case we say that $p^{t'}$ is {\em revealed preferred} to $p^t$.  If anywhere along this sequence, it is possible to replace $\succeq_p$ with $\succ_p$ then we say that $p^{t'}$ is {\em revealed strictly preferred} to $p^t$ and denote that relation by $p^{t'}\succ^*_p p^{t}$.\footnote{Notice that it makes sense to write $\hat p\succeq_p p^t$ even if $\hat p$ is not in $\mathcal P$, since the demand at $\hat p$ is not needed in the definition revealed preference. Similarly, it is possible to define $\hat p\succ_p p^t$ and the transitive extensions $\hat p\succeq_p^* p^t$ and $\hat p\succ_p^* p^t$.  This observation is useful later on, in Sections \ref{sec:compensation} and \ref{sec:st_welfare}. \label{non-sample-observe}.}  The following restriction, which excludes circularity in the econometrician's assessment of the consumer's wellbeing at different prices, is a bare minimum condition to impose on $\mathcal{D}$.

\begin{definition}\label{def:GAPP}
The data set $\mathcal{D}=\{(p^t,x^t)\}_{t=1}^T$ satisfies the \textit{Generalized Axiom of Price Preference} or \emph{GAPP} if there are no observations $t,t'\in T$ such that $p^{t'}\succeq^*_p p^{t}$ and $p^{t}\succ^*_p p^{t'}$.
\end{definition}

This in turn leads naturally to the following question:  if a consumer's observed demand behavior obeys GAPP, what could we say about her decision making procedure?

\subsection{The Expenditure-Augmented Utility Model}\label{sec:exp_aug_model}

An \emph{expenditure-augmented utility function} (or simply, an {\em augmented utility function}) is a function $U:\Re^L_{+}\times \Re_{-}\to\mathbb{R}$, where $U(x,-e)$ is the consumer's utility when she spends $e$ to purchase the bundle $x$.  We require that $U(x,-e)$ is strictly increasing in the last argument (in other words, utility is strictly decreasing in expenditure), which captures the tradeoff the consumer faces between consuming $x$ and consuming other goods (outside the set $L$).

At a given price $p$, the consumer chooses a bundle $x$ to maximize $U(x,-p\cdot x)$.  We denote the {\em indirect utility at price $p$} by
\begin{equation}\label{Vee}
V(p):=\sup\nolimits_{x\in\Re_+^L} U(x,-p\cdot x).
\end{equation}
If the consumer's augmented utility maximization problem has a solution at every price vector $p\in\Re_{++}^L$, then $V$ is also defined at those prices and this induces a reflexive, transitive, and complete preference over prices in $\Re_{++}^L$.  \vspace{0.1in}

A data set $\mathcal{D}=\{(p^t,x^t)\}_{t=1}^T$ is \emph{rationalized by an augmented utility function} if there exists such a function $U:\Re_+^{L}\times \Re_{-}\to \Re$ with
\begin{align}\label{eqn:newU}
x^t \in \argmax\nolimits_{x\in\Re_+^L} U(x,-p^t\cdot x) \qquad \text{for all } t\in T.
\end{align}

It is straightforward to see that GAPP is necessary for a data set to be rationalized by an augmented utility function.  First, notice that if $p^{t'} \succeq_p p^t$, then $p^{t'} \cdot x^{t}\leq p^{t}\cdot x^t$, and so
$$V(p^{t'})\geq U(x^t,-p^{t'}\cdot x^t)\geq U(x^t,-p^{t}\cdot x^t)=V(p^t).$$
Furthermore, $U(x^t,-p^{t'}\cdot x^t)> U(x^t,-p^{t}\cdot x^t)$ if $p^{t'} \succ_p p^t$, and in that case $V(p^{t'})> V(p^t)$.  Suppose GAPP were not satisfied and there were two observations $t,t'\in T$ such that $p^{t'}\succeq^*_p p^{t}$ and $p^{t}\succ^*_p p^{t'}$.  Then there would exist   $t_1,t_2,\dots,t_N\in T$ such that
$$V(p^{t'})\geq V(p^{t_1})\geq\cdots \geq V(p^{t_N}) \geq V(p^t) > V(p^{t'})$$
which is impossible.

\medskip

Our main theoretical result, which we state next, also establishes the sufficiency of GAPP for rationalization. Moreover, the result states that whenever $\mathcal{D}$ can be rationalized, it can be rationalized by an augmented utility function $U$ with a list of properties that make it  convenient for analysis.

\begin{theorem}\label{thm:GAPP}
	Given a data set $\mathcal{D}=\{(p^t,x^t)\}_{t=1}^T$, the following are equivalent:
	\begin{enumerate}
		\item $\mathcal{D}$ is rationalized by an augmented utility function.
		\item $\mathcal{D}$ satisfies GAPP.
		\item $\mathcal{D}$ is rationalized by an augmented utility function $U$ that is strictly increasing, continuous, and concave. Moreover, $U$ is such that $\max_{x\in\Re_+^L} U(x,-p\cdot x)$ has a solution for all $p\in\Re_{++}^L$.
	\end{enumerate}
\end{theorem}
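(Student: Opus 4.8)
The plan is to establish the cycle $(3)\Rightarrow(1)\Rightarrow(2)\Rightarrow(3)$. The implication $(3)\Rightarrow(1)$ is immediate, and $(1)\Rightarrow(2)$ is exactly the necessity argument already given just before the theorem. So all the content lies in $(2)\Rightarrow(3)$, and the idea is to convert GAPP into GARP on an auxiliary data set and then quote Afriat's Theorem.

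First I would absorb the expenditure into an artificial numeraire good. Fix $M>\max_{t\in T}p^tx^t$, set $z^t:=M-p^tx^t>0$, and consider the $(L+1)$-good data set $\widehat{\mathcal{D}}:=\{((p^t,1),(x^t,z^t))\}_{t\in T}$, in which the consumer has fixed wealth $M$ at every observation. A one-line computation shows that at prices $(p^{t'},1)$ the bundle $(x^t,z^t)$ is (strictly) affordable if and only if $p^{t'}x^t\le(<)\,p^tx^t$; hence the direct revealed preference relation of $\widehat{\mathcal{D}}$ coincides, index for index, with $\succeq_p$ (and with $\succ_p$), so $\widehat{\mathcal{D}}$ satisfies GARP precisely when $\mathcal{D}$ satisfies GAPP. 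Applying the (strong form of) Afriat's Theorem to $\widehat{\mathcal{D}}$ then produces Afriat numbers $\phi_t\in\Re$ and $\lambda_t>0$, $t\in T$, which after simplifying $(p^t,1)\cdot\big((x^s,z^s)-(x^t,z^t)\big)=p^tx^s-p^sx^s$ read
\begin{equation*}
\phi_s\le\phi_t+\lambda_t\big(p^tx^s-p^sx^s\big)\qquad\text{for all }s,t\in T.
\end{equation*}

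Next I would write down and repair a candidate utility. Put $U_0(x,w):=\min_{t\in T}\{\phi_t+\lambda_t(p^tx+w)\}$ on $\Re^L_+\times\Re_-$. As a minimum of finitely many affine functions whose coefficients are all strictly positive (here $\lambda_t>0$ and $p^t\in\Re^L_{++}$), $U_0$ is concave, continuous, and strictly increasing in both arguments; and the displayed inequalities give $U_0(x,-p^tx)\le\phi_t$ for all $x$ with equality at $x=x^t$, so $U_0$ already rationalizes $\mathcal{D}$. The property $U_0$ may fail is that $x\mapsto U_0(x,-px)$ need not attain a maximum for every $p\in\Re^L_{++}$: when $p$ is small every affine piece is increasing in $x$, so the map can be unbounded above. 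To fix this I would intersect $U_0$ with a sublinear strictly concave cap: choose a strictly increasing, strictly concave, continuous $\psi:\Re^L_+\to\Re$ with $\psi(x)/\|x\|\to 0$ (e.g.\ $\psi(x)=\sum_i\sqrt{x_i}$), choose $C\ge\max_{t\in T}\big(\phi_t+p^tx^t-\psi(x^t)\big)$, set $h(x,w):=C+w+\psi(x)$, and finally let $U:=\min\{U_0,h\}$. Then $U$ is concave, continuous, and (as a minimum of strictly increasing functions) strictly increasing; the choice of $C$ gives $h(x^t,-p^tx^t)\ge\phi_t$, so $U(x^t,-p^tx^t)=\phi_t$ and $U$ still rationalizes $\mathcal{D}$; and since $U(x,-px)\le h(x,-px)=C-px+\psi(x)\to-\infty$ as $\|x\|\to\infty$ (because $p\gg 0$ forces $px\ge(\min_i p_i)\|x\|\to\infty$ while $\psi$ is sublinear), the continuous function $x\mapsto U(x,-px)$ attains a maximum on the closed set $\Re^L_+$ for every $p\in\Re^L_{++}$. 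This is (3).

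The reduction to Afriat's Theorem should be routine once the right auxiliary data set is spotted; the step that needs genuine care --- and the one I would flag as the main obstacle --- is securing the ``a maximizer exists at every price'' clause in (3). Afriat's polyhedral construction is too flat at infinity to guarantee it, and the remedy must inject curvature so that $x\mapsto U(x,-px)$ becomes coercive \emph{simultaneously for all} $p\in\Re^L_{++}$, without disturbing optimality at the observed bundles or breaking monotonicity; the capped function $\min\{U_0,h\}$ with $h$ strictly increasing and sublinear in $x$ is designed precisely to reconcile these demands.
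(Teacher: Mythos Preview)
Your proof is correct and follows essentially the same approach as the paper: both introduce a numeraire good at price $1$ with fixed wealth $M>\max_t p^tx^t$, observe that GAPP on $\mathcal{D}$ is equivalent to GARP on the augmented $(L+1)$-good data set, and then invoke Afriat's Theorem. The only difference is the technical device used to secure the ``maximizer exists at every price'' clause in (3): the paper subtracts a superlinear convex penalty $h(\max\{0,e-M\})$ in expenditure (with $h'\to\infty$) from the Afriat utility, whereas you cap the Afriat utility via $\min\{U_0,\,C+w+\psi(x)\}$ with $\psi$ sublinear in $x$; both corrections preserve concavity, strict monotonicity, and the optimality of the observed bundles, and both force $x\mapsto U(x,-px)$ to be coercive for every $p\gg 0$.
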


\subsection{\refAfriat{thm:Afriat} and Proof of \refthm{thm:GAPP}}  \label{sec:Afrsection}

Before presenting the proof of \refthm{thm:GAPP}, it is worth providing a short description of the standard theory of revealed preference and, \refAfriat{thm:Afriat}, its central result. This will be useful not just because we will invoke the result several times but also since it will serve as an important point of contrast for our axiom and results.

The standard theory due to \citet{afriat1967} is built formally on the same primitives as our model: a finite data set of prices and corresponding consumption bundles. Unlike our model however, it is assumed that the observed goods correspond to the \textit{universe} of the consumer's consumption. Formally, a data set $\mathcal{D}$ is said to be \emph{rationalized by a utility function} if there exists a locally nonsatiated\footnote{This means that at any bundle $x$ and open neighborhood of $x$, there is a bundle $x'$ in the neighborhood with strictly higher utility.} utility function $\widetilde{U}:\Re_+^L\to \Re$ such that
\begin{align} \label{eqn:oriU}
x^t \in \argmax\nolimits_{\left\{x\in\Re_+^L:\; p^t\cdot x\leq p^t\cdot x^t\right\} } \widetilde{U}(x) \qquad \text{for all } t\in T.
\end{align}
In words, this criterion asks whether there is a utility function defined over the $L$ observed goods such that the consumer is utility maximizing at every observation $t$ over the fixed budget $p^t\cdot x^t$ corresponding to the observed expenditure.

Of course, data sets (outside of laboratory data) almost never contain the universe of consumed goods and the consumer's \textit{true} budget set is not observed, especially when one takes into account the possibility of borrowing and saving. Given this, when checking if a data set $\mathcal{D}$ can be  rationalized in the sense of (\ref{eqn:oriU}), we are effectively testing whether the consumer is maximizing a sub-utility function $\widetilde U:\Re^L_+\to\Re$ defined specifically on those $L$ goods (or equivalently, has weakly separable preferences).

It should be clear that rationalization in the sense of (\ref{eqn:oriU}) is distinct from rationalization by an augmented utility function.  The augmented utility model specifically takes into account the impact of the prices of these $L$ goods on the consumption of other goods; it is {\em necessarily} a partial equilibrium model, and designed for partial equilibrium welfare analysis of the type carried out in empirical industrial organization or public economics. An example is the study of the welfare impact of a sales tax levied on a subset of goods.

It is possible that a data set $\mathcal{D}$ can be rationalized in both senses, but that does not hold in general.  The precise conditions needed for rationalization  by a utility function are given by \refAfriat{thm:Afriat}, which we now describe.

\medskip

{\em Revealed preference} in Afriat's setting is captured by two binary relations, $\succeq_x$ and $\succ_x$ which are defined on the set of chosen bundles observed in $\mathcal D$, that is, the set ${\mathcal X}:=\{x^t\}_{t\in T}$, as follows:
$$x^{t'} \succeq_x (\succ_x)\, x^t \text{ if } p^{t'}\cdot x^{t'} \geq (>)\, p^{t'}\cdot x^{t}.$$
We say that the bundle  $x^{t'}$ is {\em directly revealed (strictly) preferred} to $x^t$ if $x^{t'} \succeq_x (\succ_x)\, x^t$, that is, whenever the bundle $x^t$ is (strictly) cheaper at prices $p^{t'}$ than the bundle $x^{t'}$. This terminology is intuitive: if the agent is maximizing some locally nonsatiated utility function $\widetilde U:\Re_+^L\to\Re$, then $x^{t'} \succeq_x x^t$ ($x^{t'}\succ_x x^t$) must imply that $\widetilde U(x^{t'})\geq (>)\, \widetilde U(x^t)$.

We denote the transitive closure of $\succeq_x$ by $\succeq_x^*$, that is, for $x^{t'}$ and $x^t$ in $\mathcal X$, we have $x^{t'}\succeq_x^* x^t$ if there are $t_1$, $t_2$,...,$t_N$ in $T$ such that $x^{t'}\succeq_x x^{t_1}$, $x^{t_1}\succeq_x x^{t_2},\dots,x^{t_{N-1}}\succeq_x x^{t_N} \succeq_x x^{t}$, and $x^{t_N}\succeq_x x^{t}$; in this case, we say that $x^{t'}$ is {\em revealed preferred} to $x^t$.  If anywhere along this sequence, it is possible to replace $\succeq_x$ with $\succ_x$ then we say that $x^{t'}$ is {\em revealed strictly preferred} to $x^t$ and denote that relation by $x^{t'}\succ^*_x x^{t}$. Clearly, if $\mathcal D$ is rationalizable by some locally nonsatiated utility function $\widetilde U$, then  $x^{t'}\succeq^*_x (\succ^*_x)\, x^t$ implies that $\widetilde U(x^{t'})\geq (>)\, \widetilde U(x^t)$. This observation in turn implies that a necessary condition for rationalization by a utility function is that the revealed preference relation has no cycles.

\begin{definition}\label{def:GARP}
	A data set $\mathcal{D}=\{(p^t,x^t)\}_{t=1}^T$ satisfies the \textit{Generalized Axiom of Revealed Preference} or \emph{GARP} if there are no observations $t,t'\in T$ such that $x^{t'}\succeq^*_x x^{t}$ and $x^{t}\succ^*_x x^{t'}$.
\end{definition}

The main insight of \refAfriat{thm:Afriat} is to show that this condition is also \textit{sufficient} (the formal statement can be found in the online \refappendix{sec:afriat}).

\medskip

Having described \refAfriat{thm:Afriat}, we are now in a position to prove \refthm{thm:GAPP}.

\medskip

\noindent {\sc \textbf{Proof of \refthm{thm:GAPP}.}}
We will show that $(2)\implies (3)$. We have already argued that $(1)\implies (2)$ and $(3)\implies (1)$ by definition.

Choose a number $M>\max_{t} p^t\cdot x^{t}$ and define the augmented data set $\widetilde{\mathcal{D}}=\{(p^t,1),(x^t,M-p^t\cdot x^t)\}_{t=1}^T$.  This data set augments $\mathcal D$ since we have introduced an $L+1^{\text{th}}$ good, which we have priced at 1 across all observations, with the demand for this good equal to $M-p^t\cdot x^t$.

The crucial observation to make here is that
$$(p^t,1) (x^t,M-p^t\cdot x^t) \geq (p^t,1) (x^{t'},M-p^{t'}\cdot x^{t'}) \text{ if and only if } p^{t'}\cdot x^{t'} \geq p^t\cdot x^{t'},$$
which means that
$$(x^t,M-p^t\cdot x^t) \succeq_x  (x^{t'},M-p^{t'}\cdot x^{t'})  \text{ if and only if } p^t\succeq_p  p^{t'}.$$
Similarly,
$$(p^t,1) (x^t,M-p^t\cdot x^t) > (p^t,1) (x^{t'},M-p^{t'}\cdot x^{t'}) \text{ if and only if } p^{t'}\cdot x^{t'} > p^t\cdot  x^{t'},$$
and so
$$(x^t,M-p^t\cdot x^t) \succ_x (x^{t'},M-p^{t'}\cdot x^{t'})  \text{ if and only if } p^t \succ_p p^{t'}.$$
Consequently, $\mathcal{D}$ satisfies GAPP if and only if $\widetilde{\mathcal{D}}$ satisfies GARP. Applying \refAfriat{thm:Afriat} when $\widetilde{\mathcal{D}}$ satisfies GARP, there is $\widetilde{U}:\Re^{L+1}\rightarrow \Re$ (notice that $\widetilde U$ is defined on $\Re^{L+1}$ and not just $\Re^{L+1}_{+}$; see Remark 3 in \refappendix{sec:afriat}) such that
\begin{equation}\label{justify1}
(x^t,M-p^t\cdot x^t) \in \argmax\nolimits_{\{(x,m)\in \Re^L_+\times \Re\, :\, p^t\cdot x + m\leq M\}} \widetilde{U}(x,m) \qquad \text{for all } t\in T.
\end{equation}
The function $\widetilde U$ can be chosen to be strictly increasing, continuous, and concave, and the lower envelope of a finite set of affine functions.  Clearly, the augmented utility function $\widebar U:\Re^L_+\times \Re_{-}\rightarrow \Re$ defined by $\widebar U(x,-e):=\widetilde{U}(x,M-e)$ is strictly increasing in $(x,-e)$, continuous, concave and rationalizes $\mathcal{D}$.

Define $\widehat U:\Re_+^L\times \Re_{-}\rightarrow \Re$ by
\begin{equation}\label{eq:UoX}
	\widehat U(x,-e):= \widebar U(x,-e)-h(\max\{0,e-M\}),
\end{equation}where $h:\Re_+\rightarrow \Re$ is a differentiable function satisfying $h(0)=0$, $h'(k)>0$, $h''(k)\geq 0$ for $k\in \Re_+$, and $\lim_{k\rightarrow \infty} h'(k)=\infty$. (For example, $h(k)=k^3$.)  Like $\widebar U$, the function $\widehat U$ is strictly increasing in $(x,-e)$, continuous and concave and $x^t$ solves $\max_{x\in \Re^L_+} \widehat U(x,-p^t  x)$ (because $\widehat U(x,-e)\leq \widebar U(x,-e)$ for all $(x,-e)$, and $\widehat U(x^t,-p^t\cdot x^t)=\widebar U(x^t,-p^t\cdot x^t)$). Furthermore, for every $p\in\Re^L_{++}$, $\argmax_{x\in X} \widehat U(x,-p\cdot x)$ is nonempty.\footnote{Choose a sequence $x^n\in \Re^L_+$ such that $\widehat U(x^n,-p\cdot x^n)$ tends to $\sup_{x\in \Re^L_+} \widehat U(x,-p\cdot x)$ (which we allow to be infinity).  It is impossible for  $p\cdot x^n\rightarrow \infty$ because the piecewise linearity of $U(x,-e)$ in $x$ and the assumption that $\lim_{k\to \infty}h'(k)\to\infty$ implies that $\widehat U(x^{n},-p\cdot x^n)\rightarrow -\infty$. So the sequence $p\cdot x^n$ is bounded, which in turn means that there is a subsequence of $x^n$ that converges to $x^\star\in \Re^L_+$.  By the continuity of $\widehat U$, we obtain $\widehat U(x^\star,-p\cdot x^\star)= \sup_{x\in \Re^L_+} \widehat U(x,-p\cdot x)$.} \hfill $\blacksquare$

\medskip

We end this section by noting that GARP imposes testable restriction distinct from GAPP. This is immediate from \refeg{eg:GARPnotGAPP} and can be seen from \reffig{fig:GARPnotGAPP_A} which plots not just the observed consumption bundles but also the corresponding budget sets (derived from the observed prices and expenditures).
\begin{figure}[h]
	\includegraphics[trim=3cm 11cm 6.3cm 6.9cm,clip=true,scale=.75,page=3]{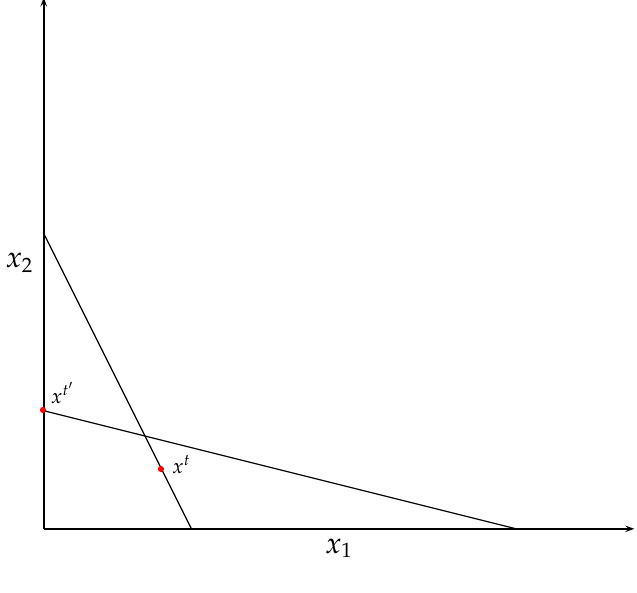}
	\caption{Choices that do not allow for consistent welfare predictions but satisfy GARP.}
	\label{fig:GARPnotGAPP_A}
\end{figure}
As we argued, GAPP does not hold in this example but, since the budget sets do not even cross, it is immediate to conclude that GARP does. We defer the description of the exact relation between the two criteria to \refsec{sec:GAPPvsGARP}.

From this point onwards, when we refer to `rationalization' without additional qualifiers, we shall mean rationalization by an augmented utility function, that is, in the sense given by (\ref{eqn:newU}) rather than in the sense given by (\ref{eqn:oriU}).

Up to now, we have motivated our model by showing that it is the utility representation of a basic axiom requiring consistent price comparisons. In the next two subsections, we provide direct motivation for the augmented utility function itself by arguing that it contains, as special cases, several distinct (standard and behavioral) preference-modeling approaches.

\subsection{`Standard' consumer theory and the augmented utility function}\label{sec:AUQL}

Perhaps the clearest motivation for our model is to think of it as a generalization of the quasilinear utility model, in which the consumer derives utility $\widetilde{U}(x)$ from the bundle $x$ and maximizes utility net of expenditure, that is, she chooses $x$ to maximize
\begin{equation} \label{yuyee}
	U(x,-e):=\widetilde{U}(x)-e,
\end{equation}
where $e=p\cdot x$.  There is a familiar textbook way of justifying this objective function by fitting it within the constrained optimization model of standard consumer theory. This is to think of the consumer as having a utility function $\widebar U$ defined over $L+1$ goods, with the last `outside' good entering additively and linearly into the utility function, so that $\widebar U(x,z)=\widetilde U(x)+z$. Assuming that the consumer has a total wealth of $W$, the utility of purchasing a bundle $x\in\mathbb{R}^L_+$ is then
$$\widebar U(x,W-p\cdot x)=\widetilde U(x)-p\cdot x+W.$$
Ignoring boundary issues, the consumer is effectively maximizing (\ref{yuyee}).

Even though the quasilinear model is widely used in partial equilibrium analysis, it is well known that the complete absence of income effects makes it unsuitable for certain empirical applications. For this reason, it is also common to remove the linear structure on $\widebar U$ while retaining the assumption that all outside consumption opportunities can be represented by a {\em single} outside good; this is true, for example, in the literature on modeling the demand for differentiated goods.\footnote{For example, in \citet{berry1995} and in \citet{nevo2000}, $\widebar U$ is additively separable between the $L$ goods and the outside good; in the former, the utility of consuming $y$ units of the outside good is $\alpha \ln y$, for some $\alpha>0$, whereas in the latter it is $\alpha y$ (in other words, $\widebar U$ is quasilinear). In \citet{bhattacharya2015}, $\widebar U$ is allowed to be a general function defined on $L+1$ goods. In models of differentiated goods, the consumption space is typically assumed to be discrete rather than $\Re^L_+$, but the augmented utility model is still applicable in that context (see \refsec{sec:nonlinearGAPP}). \label{discretefoot}}  In this case, the utility of purchasing a bundle $x\in\mathbb{R}^L_+$ is $\widebar U(x,W-p\cdot x)$; notice that, provided $W$ is fixed, we can think of the consumer as maximizing an augmented utility function: simply let $U(x,-e)=\widebar U(x,W-e)$.

Obviously, a consumer's outside consumption opportunities would in reality involve more than one good, and the prices of those outside goods could change as well. Within the familiar constrained-optimal model of consumer theory, there are known conditions that justify the representation of  those consumption opportunities by a representative good (with its corresponding price index).   This is explained in detail in  \refsec{sec:deflprices}.

Finally, it is worth mentioning that the augmented utility function could also capture, as a special case, quasilinear utility maximization subject to certain constraints. One such example is consumption with a subsistence constraint, which we describe in more detail in the empirical application in \refsec{sec:progresa}. Loosely speaking, we can capture constraints on $(x,-e)$ with an augmented utility $U(x,-e)$ that assigns very low values at $(x,-e)$ that violate the constraint.

\subsection{Behavioral preferences captured by the augmented-utility model}\label{sec:behavioral}

The central feature of the augmented utility model is that consumers experience disutility from expenditure.  As we explained in the previous subsection, this disutility could be interpreted in a purely opportunity cost sense -- more expenditure on the consumed goods imply less money available for other goods.  In this understanding, the augmented utility function is a reduced form of a broader 'true' utility function defined on all goods.

However, it is also reasonable to think of the augmented utility function in another way: that the consumer has -- \textit{directly} -- a preference over bundles of the observed $L$ goods and their associated expenditure, which she has developed as a way of guiding her purchasing decisions.  Thus it is the basic object of analysis and not the reduced form of something more fundamental.  This understanding of choice behavior is exploited in the behavioral economics literature and the following quote from \citet{prelec1998} is effectively a description of the augmented utility function:
\begin{quote}
	each time a consumer engages in an episode of consumption, we assume she asks herself: ``How much is this pleasure costing me?'' The answer to this question is the imputed cost of consumption. This imputed cost is ``real'' in the sense that it actually detracts from consumption pleasure.
\end{quote}
In this understanding, the disutility of expenditure is still related to opportunity cost, but the relationship is more flexible than what is permitted in a classical framework.\footnote{Another paper that spells out remarkably clearly this approach to modelling consumer decisions is  \citet{friedman2015}, though the authors primarily have in mind the quasilinear utility model.}

In the Introduction, we described one example of behavioral preferences (reference-dependent preferences) that could be captured by an augmented utility function. In the remainder of this section, we describe how our model relates to two other prominent themes in the behavioral literature.\medskip

\begin{center}
	{\em Inattention to Prices and Expenditure}
\end{center}

The public economics literature following \citet{chettylooney2009} has observed that consumers often misperceive prices: in their context, shoppers at grocery stores do not internalize the price effect of taxes. This literature is summarized in a recent survey by \citet{gabaix2019} who argues that many behavioral biases often take the form of inattention. Our model naturally captures a version of the inattention to prices discussed in \citet{bordalo2013} and \citet{gabaix2014}. Here a consumer faced with a price $p$ perceives the expenditure associated with a bundle $x$ as $f(x,p\cdot x)$, where $f$ is increasing in the true expenditure $p\cdot x$ and could potentially depend on $x$.  With this misperception, and assuming that the consumer has a quasilinear preference, she then chooses $x\in\Re^L_+$ to maximize
\begin{equation}\label{eqn:inattention_pref}
U(x,-p\cdot x)=\widetilde{U}(x)-f(x,p\cdot x)
\end{equation}
A special case of this model is where a consumer has a default price $p^d$ and misperceives the actual price $p$ to be $a p + (1-a) p^d$ where $a\in[0,1]$ is the `attention parameter.'  The perceived expenditure is then $f(x,p\cdot x)=ap\cdot x+(1-a)p^d\cdot x$.  More generally, the model accommodates $f(x,p\cdot x)=a(x)p\cdot x+(1-a(x))p^d\cdot x$, where the attention parameter $a(x)\in [0,1]$ varies across bundles.\footnote{This formulation of perceived expenditure is more general than \citet{gabaix2014} in that it allows the attention parameter $a$ to depend on $x$ but is less general in that the parameter does not vary across goods.}  This is a natural extension since, among other things, it allows a consumer to be more attentive to her actual expenditure if she is purchasing large bundles compared to small ones (so that $a(x)$ tends to 1 when $x$ is large). Yet another possibility is that the consumer is not completely sensitive to every dollar increase in expenditure but pays more attention only when certain thresholds are crossed; this would correspond to the case where $f$ depends only on the expenditure $e=p\cdot x$ and has the shape of a step function of expenditure.

Clearly, inattention as modeled by (\ref{eqn:inattention_pref}) is an instance where the agent has an augmented utility function, even though it will typically not be quasilinear (in actual expenditure).

It is also worth mentioning that using an augmented utility function (such as \eqref{eqn:inattention_pref}) to capture price inattention is particularly apt because, as \citet{gabaix2014} notes, the numeraire serves as ``the shock absorber that adjusts to the budget constraint.'' The alternative is to model the consumer as having {\em both} price misperception over a given set of goods {\em and} a budget on those goods that must be satisfied, which inevitably leads to the added complication of modelling how the agent adjusts her intended demand when she realizes it violates (because prices are misperceived) the budget constraint at the true prices.\footnote{\citet{gabaix2014} proposes one way to deal with this issue.}

\newpage

\begin{center}
	{\em Budgeting and Mental Budgeting}
\end{center}

As we discussed in \refsec{sec:AUQL}, a common approach to partial equilibrium analysis is to add a numeraire as an additional good and assume that the agent has a (standard) utility function and budget set defined on the $L+1$ goods, with price and income information used to determine the level of the numeraire consumed. Of course such an approach could only work when income information is available and that is not always the case.\footnote{Several widely used data sets, such as supermarket scanner panel data, that contain rich information on purchases, do not have accurate measures of income. Here, income information is typically the category (income ranges) that households self report when applying for loyalty cards (and so the information becomes out of date).}  Even when this information is available, it is strictly speaking not the right value to use as the global budget if the consumer can save and borrow to a significant degree (as acknowledged, for example, in \citet{hausman2016}).  More generally, figuring out what really constitutes `the budget' is not always straightforward, even in a classical setting.

Regularities highlighted by behavioral economists add a further wrinkle to the concept of a budget. It has been widely observed that households do not always treat money as fungible and instead create separate accounts for various categories of goods \cite{thaler1999}. This is not only true for consumption decisions (see, for instance, \citet{hastings2013, hastings2018}) but also for savings decisions, which is why consumers often save more when they have access to commitment savings options (important theoretical and empirical contributions are \citet{amador2006} and \citet{feldman2010}, \citet{dupas2013} respectively).

Now consider a researcher who is trying to model the demand for a set of $L$ goods which form a subset of all the goods consumed by an agent. If mental accounting effects are important, the researcher will have to allow for the fact that he cannot observe how goods are categorized by the agent, nor does he know what really constitutes the mental budget from which the agent is drawing her expenditure (on the $L$ observed goods and their perceived alternatives). In this situation, the augmented utility framework provides a natural way to model the demand for those $L$ goods: it is consistent with constrained utility maximization incorporating an outside good (see \refsec{sec:AUQL}) but {\em does not require} the researcher to take a stand on the (unobserved, mental) budget from which the agent is drawing her expenditure.\footnote{Here we are assuming that the data $\mathcal{D}=\{(p^t,x^t)\}_{t\in T}$ are collected over a period where the mental budget for the $L$ observed goods and their alternatives is stable. Changing mental budgets would manifest itself as violations of GAPP (see \refeg{ex:mentalbudget}).}

\section{Properties of the augmented utility model}\label{sec:properties}

In this section, we explore various aspects of the augmented utility model, beginning with a discussion of the relationship between GAPP and GARP.  We then go on to discuss welfare analysis in the augmented utility framework. Since one would not expect data sets to be completely consistent with the augmented utility model, we discuss how departures from GAPP could be measured. Lastly, we discuss how prices could be deflated in this model to account for general changes in the price level.

\subsection{Comparing GAPP and GARP}\label{sec:GAPPvsGARP}

Recall that \refeg{eg:GARPnotGAPP} in \refsec{sec:det_model} is an example of a data set that obeys GARP but fails GAPP. We now present an example of a data set that satisfies GAPP but fails GARP.

\begin{example}\label{eg:GAPPnotGARP}
	Consider the data set consisting of the following two choices:
	\begin{equation*}
	p^{t}=(2,1)\, , \, x^{t}=(2,1) \; \text{ and } \; p^{t'}=(1,4)\, , \, x^{t'}=(0,2).
	\end{equation*}
	
	\begin{figure}[h]
		\includegraphics[trim=4.5cm 11cm 6.3cm 6.9cm,clip=true,scale=.75,page=1]{Figures.pdf}
		\caption{Choices that satisfy GAPP but not GARP}
		\label{fig:GAPPnotGARP_A}
	\end{figure}
These choices, as shown in \reffig{fig:GAPPnotGARP_A}, violate GARP as $p^{t}\cdot  x^{t}=5 > 2= p^{t}\cdot  x^{t'}$ ($x^t\succ_x x^{t'}$) and $p^{t'}\cdot  x^{t'}=8 > 6= p^{t'}\cdot x^{t}$ ($x^{t'}\succ_x x^{t}$). However, these choices satisfy GAPP as $p^{t'}\cdot x^{t'}=8 > 2=p^{t}\cdot x^{t'}$ ($p^t\succ_p p^{t'}$) but $p^{t}\cdot x^{t}=5 \ngeq 6 =p^{t'}\cdot x^{t}$ ($p^{t'}\nsucceq_p p^{t}$).
\end{example}

\medskip

While GAPP and GARP are not in general the same conditions, they coincide in any data set where $p^t\cdot x^t=1$ for all $t\in T$.   This is because  $x^t\succeq_x(\succ_x) x^{t'}$ if and only if $p^t\succeq_p (\succ_p) p^{t'}$ since both conditions are equivalent to $1\geq(>)\, p^t\cdot x^{t'}$.  Given a data set ${\mathcal D}=\{(p^t,x^t)\}_{t=1}^T$, we define the {\em iso-expenditure version of $\mathcal D$} as another data set $\breve {\mathcal D}:=\left\{(p^t,\breve{x}^t)\right\}_{t=1}^T$, such that $\breve{x}^t=x^t/(p^t\cdot x^t)$. This new data set has the feature that $p^t\cdot \breve{x}^t=1$ for all $t\in T$. Notice that the revealed price preference relations $\succeq_p$, $\succ_p$ remain unchanged when consumption bundles are scaled. Thus a data set obeys GAPP if and only if its iso-expenditure version obeys GAPP, which in this case is equivalent to GARP.\footnote{There is an analogous `GARP-version' of \refprop{prop:GAPPscaling} and that this observation (or some close variation of it) has been exploited before in the literature (see, for example, \citet{sakai1977}).  Suppose ${\mathcal D}=\{(p^t,x^t)\}_{t=1}^T$ obeys GARP.  Then GARP holds even if each observed price vector $p^t$ is arbitrarily scaled.  In particular, $\mathcal D$ obeys GARP if and only if $\hat{\mathcal D}=\{(\hat p^t,x^t)\}_{t\in T}$, where $\hat {p}^t=p^t/(p^t\cdot x^t)$, obeys GARP (equivalently, GAPP) since $\hat p^t\cdot x^t=1$ for all $t\in T$.  The latter perspective is useful because it highlights the possibility of applying \refAfriat{thm:Afriat} on $\hat{\mathcal D}$, {\em in the space of prices} (in other words, with the roles of prices and bundles reversed).  This immediately gives us a different, `dual' rationalization of $\mathcal D$ in terms of indirect utility, that is, there is a continuous, strictly decreasing, and convex function $\widetilde{V}:\mathbb{R}_{++}^L\to\mathbb{R}$ such that $\hat p^t\in {\arg\min}_{\{p\in\mathbb{R}_{++}^L\,:\,p\cdot x^t\geq 1\}}\widetilde{V}(p)$.  For an application of this observation, see \citet{brown2000}.}  The next proposition gives a more detailed statement of these observations.

\begin{proposition}\label{prop:GAPPscaling}
	Let ${\mathcal D}=\{(p^t,x^t)\}_{t=1}^T$ be a data set and let $\breve{\mathcal D}=\{(p^t,\breve{x}^t)\}_{t\in T}$, where $\breve{x}^t=x^t/(p^t\cdot x^t)$. Then the revealed preference relations $\succeq_p^*$ and $\succ_p^*$ on ${\mathcal P}=\{p^t\}_{t=1}^T$ and the revealed preference relations $\succeq_x^*$ and $\succ_x^*$ on ${\breve{\mathcal{X}}}=\{\breve{x}^t\}_{t=1}^T$ are related in the following manner:
	\begin{enumerate}
		\item $p^t\succeq^*_p p^{t'}$ if and only if $\breve{x}^t\succeq^*_x \breve{x}^{t'}$.
		\item $p^t\succ^*_p p^{t'}$ if and only if $\breve{x}^t\succ^*_x \breve{x}^{t'}$.
	\end{enumerate}
	As a consequence, $\mathcal D$ obeys GAPP if and only if its iso-expenditure version, $\breve{\mathcal D}$, obeys GARP.
\end{proposition}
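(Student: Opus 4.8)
The plan is to reduce the entire proposition to one elementary identity relating the \emph{direct} revealed preference relations, and then let the transitive closures propagate it mechanically. The key point is that the normalization forces
\[
p^t\breve{x}^t=\frac{p^tx^t}{p^tx^t}=1\qquad\text{for every }t.
\]
Consequently, for any pair $t,t'$ we have $\breve{x}^t\succeq_x\breve{x}^{t'}$ iff $p^t\breve{x}^t\geq p^t\breve{x}^{t'}$, i.e.\ iff $1\geq p^tx^{t'}/(p^{t'}x^{t'})$, i.e.\ (using $p^{t'}x^{t'}>0$) iff $p^{t'}x^{t'}\geq p^tx^{t'}$, i.e.\ iff $p^t\succeq_p p^{t'}$; running the same chain with strict inequalities throughout gives $\breve{x}^t\succ_x\breve{x}^{t'}$ iff $p^t\succ_p p^{t'}$. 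In other words, under the identification of the $t$-th observation of $\breve{\mathcal D}$ with the price $p^t$, the pair of direct relations $(\succeq_x,\succ_x)$ on $\breve{\mathcal X}$ coincides exactly with $(\succeq_p,\succ_p)$ on $\mathcal P$. This is essentially the observation already made in the text for data with $p^tx^t=1$, now applied to $\breve{\mathcal D}$.

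From here I would derive (1) and (2) with no further computation. A witnessing chain $p^t\succeq_p p^{t_1}\succeq_p\cdots\succeq_p p^{t_N}\succeq_p p^{t'}$ for $p^t\succeq^*_p p^{t'}$ is, link by link via the identity above, a witnessing chain $\breve{x}^t\succeq_x\breve{x}^{t_1}\succeq_x\cdots\succeq_x\breve{x}^{t_N}\succeq_x\breve{x}^{t'}$ for $\breve{x}^t\succeq^*_x\breve{x}^{t'}$, and conversely; this gives (1). For (2), recall that $p^t\succ^*_p p^{t'}$ means precisely that such a chain exists in which at least one $\succeq_p$ link may be replaced by $\succ_p$; by the identity that link upgrades exactly when the corresponding $\succeq_x$ link upgrades to $\succ_x$, so the same chain witnesses $\breve{x}^t\succ^*_x\breve{x}^{t'}$, and conversely.

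Finally, for the consequence: by Definition \ref{def:GAPP}, $\mathcal D$ fails GAPP iff there exist $t,t'$ with $p^{t'}\succeq^*_p p^t$ and $p^t\succ^*_p p^{t'}$; applying (1) to the first relation and (2) to the second, this holds iff there exist $t,t'$ with $\breve{x}^{t'}\succeq^*_x\breve{x}^t$ and $\breve{x}^t\succ^*_x\breve{x}^{t'}$, which by Definition \ref{def:GARP} is exactly the failure of GARP for $\breve{\mathcal D}$. Negating both sides yields the stated equivalence.

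I do not anticipate a real obstacle --- once the direct-relation identity is in place the rest is bookkeeping. The two points needing a little care are (i) keeping the indices straight in that identity: the definitions of $\succeq_x$ and $\succeq_p$ place the ``active'' observation on opposite sides, which is why $\breve{x}^t\succeq_x\breve{x}^{t'}$ matches $p^t\succeq_p p^{t'}$ (and not $p^{t'}\succeq_p p^t$); and (ii) noting that GARP and GAPP are phrased in terms of indexed observations rather than as relations on abstract sets, so the translation survives even if $t\mapsto\breve{x}^t$ is not injective, since $\succeq_x$ only ever compares a bundle $\breve{x}^t$ against the well-defined number $p^{t'}\breve{x}^t$.
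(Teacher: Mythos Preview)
Your proof is correct and follows essentially the same approach as the paper's own proof: both establish the identity $\breve{x}^t\succeq_x\breve{x}^{t'}\iff p^t\succeq_p p^{t'}$ (and its strict version) by the same one-line computation, then pass to transitive closures. Your write-up is somewhat more explicit about the chain argument and the final GAPP/GARP equivalence, and your remarks on index orientation and non-injectivity are careful touches, but there is no substantive difference in strategy.
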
\medskip

\noindent {\bf Proof.}\, Notice that
$$p^t\cdot \frac{x^t}{p^t\cdot x^t} \geq  p^t\cdot \frac{x^{t'}}{p^{t'}\cdot x^{t'}} \, \iff \, p^{t'}\cdot x^{t'} \geq  p^t\cdot x^{t'}.$$
The left side of the equivalence says that $\breve{x}^t\succeq_x \breve{x}^{t'}$ while the right side says that $p^t\succeq_p p^{t'}$.  This implies (1) since $\succeq_p^*$ and $\succeq_x^*$ are the transitive closures of $\succeq_p$ and $\succeq_x$ respectively.  Similarly, it follows from
	$$p^t \cdot \frac{x^t}{p^t \cdot x^t} >  p^t \cdot \frac{x^{t'}}{p^{t'}\cdot x^{t'}} \, \iff \, p^{t'}\cdot x^{t'} >  p^t\cdot x^{t'}$$
	that $\breve{x}^t\succ_x \breve{x}^{t'}$ if and only if $p^t\succ_p p^{t'}$, which leads to (2). The claims (1) and (2) together guarantee that there is a sequence of observations in $\mathcal D$ that lead to a GAPP violation if and only if the analogous sequence in $\breve{\mathcal D}$ lead to a GARP violation. \hfill $\blacksquare$
\medskip

As an illustration, compare the data sets in \reffig{fig:GARPnotGAPP_A} and \reffig{fig:GAPPnotGARP_A} to the iso-expenditure data sets in \reffig{fig:GARPnotGAPP_B} and \reffig{fig:GAPPnotGARP_B}.  It can be clearly observed that the iso-expenditure data in \reffig{fig:GARPnotGAPP_B} contains a GARP violation (which implies it does not satisfy GAPP) whereas the data in \reffig{fig:GAPPnotGARP_B} does not violate GARP (and, hence, satisfies GAPP).

\begin{figure}[h]
	\centering
	\begin{subfigure}[b]{.49\linewidth}
		\includegraphics[trim=4.6cm 11cm 6.3cm 6.9cm,clip=true,scale=.7,page=4]{Figures.pdf}
		\caption{\refeg{eg:GARPnotGAPP}}
		\label{fig:GARPnotGAPP_B}
	\end{subfigure}
	\begin{subfigure}[b]{.49\linewidth}
		\includegraphics[trim=4.6cm 11cm 6.3cm 6.9cm,clip=true,scale=.7,page=2]{Figures.pdf}
		\caption{\refeg{eg:GAPPnotGARP}}
		\label{fig:GAPPnotGARP_B}
	\end{subfigure}
	\caption{Expenditure-Normalized Choices}
\end{figure}

A consequence of \refprop{prop:GAPPscaling} is that the augmented utility model can be tested in two ways: we can either test GAPP directly or we can test GARP on its iso-expenditure version.  If we are simply interested in testing GAPP on a single-agent data set $\mathcal D$, normalization brings no advantage: the test is computationally straightforward in either case and involves the construction of their (respective) revealed preference relations and checking for acyclicity. However, as we shall see in \refsec{sec:RAUMX}, iso-expenditure scaling plays an important role in the test we develop (on repeated cross-sectional demand data) for the random utility version of the augmented utility model.

While GARP and GAPP are distinct properties, they are not mutually exclusive and it is possible for a data set to satisfy both. For example, if $\mathcal{D}=\{(p^t,x^t)\}_{t=1}^T$ is collected from a consumer who is maximizing a quasilinear augmented utility function, then it will satisfy both GAPP and GARP.\footnote{When $U$ has the form (\ref{yuyee}), $x^t$ maximizes $U(x,-p^t\cdot x)$ only if $x^t$ maximizes $\widetilde U(x)$ in $\{x\in\Re^L_+:p^t\cdot x\leq p^t\cdot x^t\}$.  Thus $\mathcal{D}$ must also obey GARP. A broader class of augmented utility functions that satisfy both GAPP and GARP is given in \refsec{sec:modGPGP}.}  When both properties are satisfied, then an analyst could make use of either property when making predictions of demand at an out-of-sample price; the two properties will then typically lead to different set predictions.   We discuss this in greater detail in \refsec{sec:GAPPGARP} of the online appendix, which also contains more discussion of the relationship between revealed preferences under GAPP and under GARP.

In light of \refprop{prop:GAPPscaling} and the fact that the revealed price preference relation is not affected by scaling consumption bundles, it is natural to wonder about the relationship between the testable implications of the augmented-utility model and the constrained-optimization model (as in (\ref{eqn:oriU})) restricted to homothetic preferences. A data set that can be rationalized in the latter sense\footnote{For the precise characterization, see \citet{varian1983}.} will have the feature that it must satisfy GARP for \textit{any} arbitrary scaling of consumption bundles and thus will satisfy GAPP.  By contrast, a data set that satisfies GAPP must only satisfy GARP for the particular scaling that equalizes expenditure across observations.  In other words, GAPP is a less stringent property; that it is {\em strictly} less stringent is clear from \refeg{eg:GAPPnotGARP}, which satisfies GAPP but violates GARP and therefore cannot be rationalized in Afriat's sense (as given by (\ref{eqn:oriU})) for any locally nonsatiated preference, let alone a homothetic preference.\footnote{\refeg{ex:yat} in the online appendix contrasts demand predictions using the augmented utility model and the constrained-optimization model (both with and without imposing homotheticity on the preference).}
 \vspace*{0.05in}

\subsection{Preference over Prices}

We know from \refthm{thm:GAPP} that if $\mathcal D$ obeys GAPP then it can be rationalized by an augmented utility function with an indirect utility that is defined at all price vectors in $\Re_{++}^L$.  It is straightforward to check that any indirect utility function $V$ as defined by (\ref{Vee}) has the following two properties:
\begin{itemize}
\item[(a)]  it is {\em nonincreasing in $p$}, in the sense that if $p'\geq p$ (element by element) then $V(p')\leq V(p)$, and
\item[(b)]  it  is {\em quasiconvex in $p$}, in the sense that if $V(p)=V(p')$, then $V(\beta p+(1-\beta)p')\leq V(p)$ for any $\beta\in [0,1]$.
\end{itemize}

\medskip

Any rationalizable data set $\mathcal D$ could potentially be rationalized by many augmented utility functions, with each one leading to a different indirect utility function.  We denote this set of indirect utility functions by $\mathbf{V}({\mathcal D})$.  We have already observed that if $p^{t}\succeq_p^* (\succ_p^*)\, p^{t'}$ then $V(p^{t})\geq (>)\, V(p^{t'})$ for any $V\in\mathbf{V}({\mathcal D})$; in other words, the conclusion that the consumer prefers the prices $p^t$ to $p^{t'}$ is {\em nonparametric} in the sense that it is independent of the precise augmented utility function used to rationalize $\mathcal D$.  The next result (proved in  \refappendix{sec:pricewelfare}) says that, without further information on the augmented utility function, this is {\em all} the information on the consumer's preference over prices in $\mathcal P$ that we can glean from the data. Thus, in our nonparametric setting, the revealed price preference relation contains the most detailed information for welfare comparisons.

\begin{proposition}\label{prop:p_welfare}
	Suppose ${\mathcal D}=\{(p^t,x^t)\}_{t=1}^T$ is rationalizable by an augmented utility function. Then for any $p^{t}$, $p^{t'}$ in $\mathcal P$:
	\begin{enumerate}
		\item $p^t\succeq^*_p p^{t'}$ if and only if $V(p^t)\geq V(p^{t'})$ for all $V\in \mathbf{V}({\mathcal D})$.
		\item $p^t\succ_p^* p^{t'}$ if and only if $V(p^t)> V(p^{t'})$ for all $V\in \mathbf{V}({\mathcal D})$.
	\end{enumerate}
\end{proposition}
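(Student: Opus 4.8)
The plan is to prove \refprop{prop:p_welfare} by contraposition, after noting that the two forward implications---if $p^t\succeq^*_p p^{t'}$ (resp.\ $p^t\succ^*_p p^{t'}$) then $V(p^t)\geq V(p^{t'})$ (resp.\ $>$) for every $V\in\mathbf{V}(\mathcal D)$---are exactly the chain-of-indirect-utilities computation already carried out in the discussion preceding the proposition. So it remains to show: if $p^t\not\succeq^*_p p^{t'}$ then some $V\in\mathbf{V}(\mathcal D)$ has $V(p^t)<V(p^{t'})$, and if $p^t\not\succ^*_p p^{t'}$ then some $V\in\mathbf{V}(\mathcal D)$ has $V(p^t)\leq V(p^{t'})$. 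The strategy is to transport both questions into the GARP setting through the augmented data set used to prove \refthm{thm:GAPP}.

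First I would set up the dictionary. Write $\widetilde p^t:=(p^t,1)$, $\widetilde x^t:=(x^t,M-p^tx^t)$ with $M>\max_t p^tx^t$, and $\widetilde{\mathcal D}:=\{(\widetilde p^t,\widetilde x^t)\}_{t=1}^T$, as in the proof of \refthm{thm:GAPP}. From that proof we already have: $\mathcal D$ obeys GAPP iff $\widetilde{\mathcal D}$ obeys GARP (and $\widetilde{\mathcal D}$ does, since $\mathcal D$ is rationalizable and hence, by \refthm{thm:GAPP}, obeys GAPP); $p^t\succeq_p p^{t'}\iff\widetilde x^t\succeq_x\widetilde x^{t'}$ and $p^t\succ_p p^{t'}\iff\widetilde x^t\succ_x\widetilde x^{t'}$; and passing to transitive closures, $p^t\succeq^*_p p^{t'}\iff\widetilde x^t\succeq^*_x\widetilde x^{t'}$ and $p^t\succ^*_p p^{t'}\iff\widetilde x^t\succ^*_x\widetilde x^{t'}$. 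I would also record the utility dictionary, essentially contained in that same proof: whenever $\widetilde U:\Re^{L+1}\to\Re$ rationalizes $\widetilde{\mathcal D}$ in Afriat's sense, the function $U(x,-e):=\widetilde U(x,M-e)$ is a strictly increasing, continuous, concave augmented utility function rationalizing $\mathcal D$, and its indirect utility satisfies $V_U(p^t)=\sup_{x\geq 0}\widetilde U(x,M-p^tx)=\widetilde U(\widetilde x^t)$---the last step because every $(x,M-p^tx)$ with $x\geq 0$ lies on the budget plane $\widetilde p^t\widetilde y=M$ of the $t$-th observation of $\widetilde{\mathcal D}$, so $\widetilde U(x,M-p^tx)\leq\widetilde U(\widetilde x^t)$, with equality at $x=x^t$. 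Hence $V_U\in\mathbf{V}(\mathcal D)$ (if one wishes $\mathbf{V}(\mathcal D)$ to collect only indirect utilities of augmented utility functions whose maximization problem has a solution at every price, apply the perturbation by $h$ from the proof of \refthm{thm:GAPP}; it leaves $V_U$ unchanged at the observed prices).

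The engine is then the GARP analogue of \refprop{prop:p_welfare}, applied to $\widetilde{\mathcal D}$: for a data set obeying GARP, $\widetilde x^a\succeq^*_x\widetilde x^b$ (resp.\ $\widetilde x^a\succ^*_x\widetilde x^b$) holds iff $\widetilde U(\widetilde x^a)\geq\widetilde U(\widetilde x^b)$ (resp.\ $>$) for every rationalizing $\widetilde U$; equivalently, $\widetilde x^a\not\succeq^*_x\widetilde x^b$ produces a rationalization with $\widetilde U(\widetilde x^b)>\widetilde U(\widetilde x^a)$, and $\widetilde x^a\not\succ^*_x\widetilde x^b$ produces one with $\widetilde U(\widetilde x^b)\geq\widetilde U(\widetilde x^a)$. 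This is the classical recoverability statement of nonparametric demand analysis, and it follows from \refAfriat{thm:Afriat} once one notes that the Afriat numbers $\phi^s=\widetilde U(\widetilde x^s)$ of a rationalization $\widetilde U(y)=\min_s[\phi^s+\lambda^s\widetilde p^s(y-\widetilde x^s)]$ can be chosen to be any vector satisfying $\phi^s\leq\phi^r$ whenever $\widetilde x^r\succeq^*_x\widetilde x^s$ and $\phi^s<\phi^r$ whenever $\widetilde x^r\succ^*_x\widetilde x^s$ (GARP being exactly what keeps this system feasible), so that one is still free to put $\phi^b>\phi^a$ whenever $\widetilde x^a\not\succeq^*_x\widetilde x^b$ and $\phi^b\geq\phi^a$ whenever $\widetilde x^a\not\succ^*_x\widetilde x^b$; I would either cite this or reprove it in an appendix. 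Granting it, the converse of part (1) drops out: $p^t\not\succeq^*_p p^{t'}$ gives $\widetilde x^t\not\succeq^*_x\widetilde x^{t'}$, hence a rationalization $\widetilde U$ of $\widetilde{\mathcal D}$ with $\widetilde U(\widetilde x^{t'})>\widetilde U(\widetilde x^t)$, hence $V_U\in\mathbf{V}(\mathcal D)$ with $V_U(p^{t'})>V_U(p^t)$; and the converse of part (2) is identical: $p^t\not\succ^*_p p^{t'}$ gives $\widetilde x^t\not\succ^*_x\widetilde x^{t'}$, hence a rationalization with $\widetilde U(\widetilde x^{t'})\geq\widetilde U(\widetilde x^t)$, hence $V_U(p^{t'})\geq V_U(p^t)$, so ``$V(p^t)>V(p^{t'})$ for all $V$'' fails.

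The step I expect to be the real content is the GARP analogue, and inside it the weak-versus-strict bookkeeping---precisely that one can separate $\widetilde x^b$ strictly above $\widetilde x^a$ in utility exactly when $\widetilde x^a$ is not revealed preferred to $\widetilde x^b$, and only weakly above when $\widetilde x^a$ is not revealed strictly preferred to $\widetilde x^b$. This is where acyclicity (GARP) is actually used, since it is what makes the required system of inequalities on the $\phi^s$ consistent. Everything else---the equivalences between $\mathcal D$ and $\widetilde{\mathcal D}$ and their revealed-preference relations, the identity $V_U(p^t)=\widetilde U(\widetilde x^t)$, and the packaging of the contrapositives---is routine.
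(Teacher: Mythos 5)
Your proposal is correct and follows essentially the same route as the paper's proof: pass to the augmented data set $\{((p^t,1),(x^t,M-p^tx^t))\}_{t=1}^T$ from the proof of \refthm{thm:GAPP}, invoke the Afriat-based recoverability fact (the paper's Remark 2 in \refappendix{sec:afriat}) to obtain a rationalization of the augmented data reversing or equalizing the utility ranking when the revealed (strict) preference fails, and translate back via $U(x,-e)=\widetilde U(x,M-e)$ so that $V(p^t)=\widetilde U(x^t,M-p^tx^t)$. The only cosmetic difference is that for part (2) the paper exhibits exact equality in the case $p^t\succeq^*_p p^{t'}$ but $p^t\not\succ^*_p p^{t'}$, while you directly produce a weak reverse inequality; these amount to the same use of the recoverability result.
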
\vspace*{0.05in}

\subsection{Compensation for a price change}\label{sec:compensation}

In standard consumer theory, the compensating and equivalent variations are two ways of quantifying the welfare impact of a price change (see \citet{mascolell1995}, Chapter 3.I). We now argue that analogues exist for the augmented utility model and that bounds for them can be recovered from the data.

Let $U$ be the consumer's augmented utility function. Suppose that the initial price is $p^{t_1}$ and it changes to $p^{t_2}$, leading to a change in consumption from $x^{t_1}$ to $x^{t_2}$.   Then we can find $\mu_c$ such that
\begin{equation}\label{CV}
\max\nolimits_{x\in \Re_+^L} U(x,-p^{t_2}\cdot x-\mu_c)=V(p^{t_1}).
\end{equation}
Note that $\mu_c$ is unique since $U$ is strictly increasing in the last argument. We could think of $\mu_c$ as the lump sum transferred {\em from} the consumer (if it is positive) or {\em to} the consumer (if it is negative) after the price change that will make her just indifferent between the situation before and after the change.

Suppose we interpret $U$ as arising from an overall utility function $\widetilde{U}(x,z)$ (that depends on the observed goods $x$ and the level $z$ of an outside good), given the consumer's wealth of $M$, so that $U(x,-e)=\widetilde{U}(x,M-e)$.  Since $\mu_c$ solves (\ref{CV}), it will also satisfy
$$\max\nolimits_{\{x\in \Re_+^L:\; p^{t_2}\cdot x\leq M-\mu_c\}} \widetilde{U}(x,(M-\mu_c)-p^{t_2}\cdot x)\; = \; \widetilde{U}(x^{t_1},M-p^{t_1}\cdot x^{t_1}).$$
In other words, $\mu_c$ is the reduction in total wealth that will leave the consumer's overall utility at $p^{t_2}$ the same as it was at $p^{t_1}$.  Thus, with this particular interpretation of the augmented utility function, $\mu_c$ coincides with what is called the {\em compensating variation} in standard consumer theory. For this reason, we shall also refer to $\mu_c$, defined by (\ref{CV}), as the compensating variation.

Pushing the analogy further, it is possible to use the compensating variation in our model in the same way it is typically used. For example, the price change from $p^{t_1}$ to $p^{t_2}$ may benefit some consumers while hurting others. The Kaldor criterion would deem this change an overall improvement if the sum of the compensating variations across all consumers is positive since it guarantees that those who benefit from the price change could, in principle, compensate the losers and still be better off.

In a similar way, we can define the {\em equivalent variation} as the value $\mu_e$ that solves
\begin{equation}\label{EV}
\max\nolimits_{x\in \Re_+^L} U(x,-p^{t_1}\cdot x+\mu_e)=V(p^{t_2})
\end{equation}
If $U(x,-e)=\widetilde{U}(x,M-e)$ then $\mu_e$ also solves
$$\max\nolimits_{\{x\in \Re_+^L: \; p^{t_2}\cdot x\leq M+\mu_e\}} \widetilde{U}(x,(M+\mu_e)-p^{t_1}\cdot x)=\widetilde{U}(x^{t_2},M-p^{t_2}\cdot x^{t_2}).$$
In other words, $\mu_e$ coincides with the equivalent variation as it is usually defined.\vspace*{.1in}

Now suppose a data set $\mathcal{D}$ obeys GAPP and contains the observation $(p^{t_1},x^{t_1})$. What can we say about the compensating variation of a price change from $p^{t_1}$ to $p^{t_2}$ (where the latter may or may not be a price observed in $\mathcal{D}$)?  There will typically be a range of these values since there is more than one augmented utility function that rationalizes $\mathcal{D}$. Nonetheless, it is possible to obtain a \textit{tight} lower bound for the set of possible compensating variation values. Formally, this is given by
$$\inf\{\mu_c:\mbox{$\mu_c$ solves (\ref{CV}) for some augmented utility function $U$ that rationalizes $\mathcal{D}$}\}.$$
Abusing terminology somewhat, we shall denote this term simply  by $\inf({\mu}_{c})$.

We now describe how to compute this bound.\footnote{We leave the reader to carry out the analogous exercise for the equivalent variation.} Let $S\subset T$ be the set of observations such that $s\in S$ if $p^{s}\succeq^*_p p^{t_1}$.  This set is nonempty since it contains $p^{t_1}$ itself.  For each $s\in S$, there is $m_c^s$ such that
\begin{equation}\label{justnice}
p^{t_2}\cdot x^s+m_c^s=p^sx^s.
\end{equation}
We claim that for any $U$ that rationalizes $\mathcal{D}$, the compensating variation $\mu_c\geq m_c^s$. This is because if $m<m_c^s$, then $m\neq \mu_c$ for any utility function rationalizing $\mathcal{D}$.  Indeed,
\begin{multline*}
\max\nolimits_{x\in \Re_+^L} U(x,-p^{t_2}\cdot x-m)\geq U(x^s,-p^{t_2}\cdot x^s-m)> U(x^s,-p^{t_2}\cdot x^s-m_c^s)\\
=U(x^s,-p^s\cdot x^s)\geq U(x^{t_1},-p^{t_1}\cdot x^{t_1})=V(p^{t_1}).
\end{multline*}
Thus $\inf(\mu_c)\geq m_c^s$ for all $s\in S$.  In fact, it is possible to obtain a stronger conclusion:
\begin{equation}  \label{justnice2}
\inf(\mu_c)= \max\{m_c^s:\mbox{$m_c^s$ satisfies (\ref{justnice}) for some $s\in S$}\}.
\end{equation}
Since the right side of this equation can be easily computed from the data, we have found a practical way of calculating $\inf(\mu_c)$.

\medskip

Notice that if $p^{t_2}$ is revealed preferred to $p^{t_1}$ (equivalently, that there is $s'\in S$ such that $m_c^{s'}\geq 0$),\footnote{Recall that $p^{t_2}\succeq_p^* p^{t_1}$ makes sense even if $p^{t_2}$ is not observed in the data set; see footnote \ref{non-sample-observe}.}  then $\inf(\mu_c)\geq 0$; in other words, at $p=p^{t_2}$, a lump sum {\em tax} of $\inf(\mu_c)$ will leave the agent no worse off than at $t_1$ and potentially better off.  On the other hand, if $p^{t_2}$ is {\em not} revealed preferred to $p^{t_1}$, that is, for every $s\in S$, we have $m_c^s< 0$, then $\inf(\mu_c)< 0$; in other words, at $p=p^{t_2}$, a lump sum {\em transfer} of $\inf(\mu_c)$ to the agent will leave the agent no worse off than at $t_1$ and potentially better off.

We provide a fuller discussion on the compensating variation, including a proof of (\ref{justnice2}), in \refappendix{sec:morecomvar}.

\subsection{Measuring departures from rationality}\label{sec:index}

Empirical studies that apply \refAfriat{thm:Afriat} frequently find GARP violations. A common way of measuring the {\em extent} of such violations is to compute the {\em critical cost efficiency index} \cite{afriat1973}. This refers to the largest $e\in (0,1]$ such that $\mathcal{D}$ can be rationalized in the following sense: there is a locally-nonsatiated utility function $\widetilde{U}$ such that $\widetilde{U} (x^t)\geq \widetilde{U}(x)$ for all $x$ in the `shrunken' budget set $B^t_{e}=\left\{x\in\Re_+^L:\; p^t \cdot x\leq ep^t \cdot x^t\right\}$.  Rationality is imperfect if $e<1$ since the consumer behaves as though she ignores bundles $x'$ that satisfy $e p^{t}\cdot x^{ t}<p^{t}\cdot x'\leq p^{t}\cdot x^{t}$ and, there could be some observation $\hat t$ and bundle $x'$ in this range for which $\widetilde{U}(x')> \widetilde{U}(x^{\hat t})$. Importantly, the calculation of the critical cost efficiency index is straightforward and is facilitated by a modified version of GARP.

There is a similar way of measuring the extent to which a data set $\mathcal{D}$ fails to be rationalized by an augmented utility function. For a given $\vartheta\in (0,1]$, there is a weaker version of the GAPP test that allows us to determine whether there is an expenditure-augmented utility $U:\Re_+^{L}\times \Re_{-}\to \Re$ such that, at each observation $t$,
$$U(x^t,-p^t\cdot x^t)\geq U(x,-\vartheta^{-1}p^t\cdot x)\:\mbox{ for all $x\in\Re_+^L$.}$$
If there is, we say that $\mathcal{D}$ is $\vartheta$-rationalized by an augmented utility function. Notice that if $\mathcal{D}$ can be $\vartheta$-rationalized then it can be $\vartheta'$-rationalized for any $\vartheta'<\vartheta$, since $U$ is strictly decreasing in expenditure.   The consumer who is $\vartheta$-rational (for $\vartheta<1$) may have only limited or bounded rationality in the sense that there could be a bundle $x'$ and an observation $\hat t$ such that
$$U(x^{\hat t},-p^{\hat t}\cdot x')>U(x^{\hat t},-p^{\hat t}\cdot x^{\hat t})\geq U(x',-\vartheta^{-1}p^{\hat t}\cdot x').$$
In other words, the consumer fails to recognize that bundle $x'$ is superior to $x^{\hat t}$ at $t=\hat t$ because she has inflated (by $\vartheta^{-1}$) the expenditure of purchasing $x'$. Any data set can be $\vartheta$-rationalized for some $\vartheta\in (0,1]$ and the supremum $\vartheta^*$ over these values provides a natural measure of rationality which we shall refer to as the {\em rationality index}.

The following proposition establishes a connection of our rationality index with the critical cost efficiency index.

\begin{proposition}\label{prop:GAPP-GARP}
Let ${\mathcal D}=\{(p^t,x^t)\}_{t=1}^T$ be a data set and let $\breve{\mathcal D}=\{(p^t,\breve{x}^t)\}_{t\in T}$, where $\breve{x}^t=x^t/(p^t\cdot x^t)$, be its iso-expenditure version. Then $\vartheta^*$ is the rationality index for $\mathcal{D}$ if and only if it is the critical cost efficiency index for $\breve{\mathcal D}$.
\end{proposition}

A consequence of this result is that the rationality index inherits the ease of computation of the critical cost efficiency index. In \refappendix{sec:rat-indices}, we  provide instructions on this computation including in more general environments with nonlinear prices. The proof of  \refprop{prop:GAPP-GARP} can be found in \refappendix{sec:thetaGPGP}.

\subsection{Deflating prices}  \label{sec:deflprices}

When the data set $\mathcal{D}=\{(p^t,x^t)\}_{t=1}^T$ is collected over an extended period, it is possible that there are  changes in the prices of all goods, including goods outside the ones observed.  Thus the nominal value of expenditure may no longer be an accurate measure of the opportunity cost of expenditure.  A simple way of taking this into account is to deflate the prices of the $L$ goods with a general price index.   In other words, one could check if $\widetilde{\mathcal{D}}=\{(p^t/k^t,x^t)\}_{t=1}^T$ obeys GAPP, where $k^t\in \mathbb{R}_{++}$ is an index of the general price level. If it does, it would mean that there is an augmented utility function $U$ that rationalizes the data after deflation; in other words,
$$x^t\in\argmax_{x\in\Re^L_+}U\left(x,-\frac{p^t\cdot x}{k^t}\right)\:\mbox{ for all $t\in T$.}$$

This simple way of accounting for general price changes could be precisely justified when the augmented utility function is the reduced form of a larger constrained optimization problem.  Indeed, suppose that the consumer is maximizing an overall utility $\widetilde{U}(x,y)$ that depends both on the observed bundle $x$ and on a bundle $y$ of other goods, subject to a global budget of $M$. Formally, the consumer maximizes $\widetilde{U}(x,y)$ subject to $p\cdot x+q\cdot y\leq M$, where $q$ are the prices of goods $y$. Keeping $q$ and $M$ fixed, $U(x,-e)$ is defined as the greatest overall utility the consumer can achieve by choosing $y$ optimally, subject to expenditure $M-e$ and conditional on consuming $x$, that is,
\begin{equation}\label{interpret}
U(x,-e)=\max\{\widetilde{U}(x,y):\mbox{$y\geq 0$ and $q\cdot y\leq M-e$}\}.
\end{equation}
At the prices $p^t$ for the observed goods and $q$ for the outside goods, the consumer chooses a bundle $(x^t,y^t)$ to maximize $\widetilde{U}$ subject to $p^t\cdot x+q\cdot y\leq M$.  Then $\mathcal{D}=\{(p^t,x^t)\}_{t=1}^T$ will obey GAPP, since $x^t$ maximizes $U(x,-p^t\cdot x)$, with $U$ as defined by (\ref{interpret}).

Now suppose that the prices of the other goods are changing.  Consider the simplest case where these prices move up or down proportionately, so they are $k^t q$ at observation $t$, for some scalar $k^t>0$. Furthermore, assume that the agent's global budget at $t$ also increases by a factor $k^t$, which means that the consumer's nominal wealth is keeping pace with price inflation.  Then at observation $t$, the consumer maximizes $\widetilde{U}(x,y)$ subject to $(x,y)$ obeying
$$p^t\cdot x+k^t q\cdot y\leq k^t M.$$
Dividing this inequality by $k^t$, we see that the consumer's choice is identical to the case where the price of the observed goods is $p^t/k^t$, with external prices and total wealth constant at $q$ and $M$ respectively.  Therefore, the data set with deflated prices,  $\widetilde{\mathcal{D}}=\{(p^t/k^t,x^t)\}_{t=1}^T$ obeys GAPP.

In the case where the relative prices of the outside goods change, it is still possible to derive a price index which ensures that GAPP holds after deflating $p^t$, but this requires stronger assumptions on $\widetilde{U}$. We discuss this in detail in \refappendix{sec:indices}.

\section{General consumption spaces and nonlinear pricing}\label{sec:nonlinearGAPP}

So far we have assumed that the consumption space is $\Re^L_+$ and that prices are linear, but in fact neither feature is crucial to our main result. In this section, we assume that the space from which the consumer chooses her consumption is $X\subseteq \Re_+^L$.  We define a {\em price system} as a map $\psi: X\to\mathbb{R}_+$, where $\psi (x)$ is the cost of purchasing $x\in X$. Of course, a special case of a price system is $\psi (x)=p\cdot x$ but the more general formulation with $\psi$ allows for quantity discounts, bundle pricing and other pricing features that can be important in certain contexts (such as our empirical application in \refsec{sec:progresa}).

We assume that both the price system and the bundle chosen by the consumer are observed. Formally, a data set is a collection $\mathcal{D}=\{(\psi^t,x^t)\}_{t=1}^T$.  This data set is rationalized by an augmented utility function $U:X\times\mathbb{R}_{-}\to \mathbb{R}$ if
\begin{equation}\label{nonlinear-ratio}
x^t\in \argmax_{x\in X}U(x,-\psi^t(x))\:\mbox{ for all $t\in T$.}
\end{equation}

The notion of revealed preference over prices can be extended to a revealed preference over price systems.  We say that $\psi^{t'}$ is {\em directly revealed preferred} ({\em directly revealed strictly preferred}) to $\psi^t$ if $\psi^{t'} (x^{t})\leq (<)\psi^{t}(x^{t})$; we denote this by $\psi^{t'} \succeq_p (\succ_p) \psi^t$.  We denote the transitive closure of $\succeq_p$ by $\succeq_p^*$, that is, $\psi^{t'}\succeq_p^* \psi^t$ if there are $t_1$,  $t_2,\ldots,t_N$ in $T$ such that $\psi^{t'}\succeq_p \psi^{t_1}$, $\psi^{t_1}\succeq_p \psi^{t_2},\ldots,\psi^{t_{N-1}}\succeq_p \psi^{t_N}$, and $\psi^{t_N}\succeq_p \psi^{t}$; in this case we say that $\psi^{t'}$ is {\em revealed preferred} to $\psi^t$.  If anywhere along this sequence, it is possible to replace $\succeq_p$ with $\succ_p$ then we denote that relation by $\psi^{t'}\succ^*_p \psi^{t}$ and say that $\psi^{t'}$ is {\em strictly revealed preferred} to $\psi^t$. It is straightforward to check that if, $\mathcal{D}$ can be rationalized by an augmented utility function, then it obeys the following generalization of GAPP to price systems:
\begin{quote}
{\em there do not exist observations $t,t'\in T$ such that $\psi^{t'}\succeq^*_p \psi^{t}$ and $\psi^{t}\succ^*_p \psi^{t'}$.}
\end{quote}

\medskip

The following theorem asserts that the converse is also true and that, under further conditions, we can guarantee that the data can be rationalized by an augmented utility function with additional properties. The proof of this result (in fact of a more general result allowing for errors) is in \refappendix{sec:ultimate}.

\begin{theorem}\label{thm:GAPP_nonlinear}
A data set $\mathcal{D}=\{(\psi^t,x^t)\}^T_{t=1}$ can be rationalized by an augmented utility function if and only if satisfies GAPP.

Furthermore, suppose that $\mathcal D$ satisfies GAPP, $X$ is closed and that, for all $t\in T$, the price systems have the following properties:\, (i) $\psi^t$ is a continuous function; (ii) for any number $M$, $\{x\in X:\psi^t(x)\leq M\}$ is a compact set; and (iii) $\psi^t$ is strictly increasing in $x_K$ for some $K\subseteq L$.\footnote{This means that if $x'$, $x\in X$, $x'\neq x$, $x_{\ell}=x_{\ell}'$ for all ${\ell}\notin K$, and $x'_{\ell}\geq x_{\ell}$ for all ${\ell}\in K$, then $\psi^t(x')>\psi^t(x)$.}  Then, for any closed set $Y\subseteq\Re^L_+$ containing $X$, there is a continuous augmented utility function $U:Y\times\Re_{-}\to\Re$ that rationalizes $\mathcal D$, with $U(x,-e)$ strictly increasing in $x_K$.
\end{theorem}
\noindent Remarks:\, (1)\,  Note that condition (ii) is a weak assumption requiring that there be no arbitrarily large bundles with a bounded price.  (2) By definition, an augmented utility function is strictly decreasing in expenditure, but in certain cases it may be natural to require $U$ to be strictly increasing in $x_K$ for some set $K$ (which can be empty). The theorem says that this is possible, so long as the price systems are also strictly increasing in $x_K$. (3) Lastly, the theorem guarantees that the domain of the augmented utility function can be larger than $X$.  For reasons which will be clear later in this section, this is natural in certain applications. However when we say that $U$ rationalizes the data, we mean that (\ref{nonlinear-ratio}) holds and, in particular, $x^t$ need not be optimal in $Y$.\medskip

The literature on mental accounting has emphasized the possibility of actors in the economy manipulating the mental budgets of agents.  The following example shows how a nonlinear GAPP test can be used to detect such phenomena.

\begin{example}\label{ex:mentalbudget}
A store initially prices two goods at $p^t=(1,2)$ and a shopper purchases $x^t=(10,20)$ from the store.  The store introduces a scheme where regular customers (such as this shopper) receives a voucher of 12 dollars to be used for the purchase of the store's products; prices are changed to $p^{t'}=(2,3/2)$ and the shopper buys $x^{t'}=(20,20)$.\footnote{If good 1 is cheap to procure, this scheme is advantageous to the store, since in the first instance, the shopper spends 50 dollars while in the second, she spends 58 (net of the voucher).}  What is the impact of the gift voucher?

Since the value of the voucher is small in terms of total income, the shopper could spread this reward widely across all purchases (including purchases from other stores) and this should result in no (or at least a very small) impact on demand for the store's products. On the other hand, she may have a mental budget for purchases at that store, and the voucher represents an appreciable increase in {\em that} mental budget by 12 dollars.

A revealed preference analysis supports the latter hypothesis.  Indeed, if we ignore the voucher, the data are not compatible with the maximization of an augmented utility function since $p^t\cdot x^t=50=p^{t'}\cdot x^t$ and $p^{t'}\cdot x^{t'}=70>60=p^t\cdot x^{t'}$, which violates GAPP. On the other hand, at observation $t'$, we could model the shopper as mentally discounting 12 dollars from her expenditure at the shop.  In formal terms, the price system at $t'$ is a function $\psi^{t'}(x)=\max\{p^{t'}\cdot x-12,0\}$, so $\psi^{t'}(x^{t'})=58$. In this case, we have $\psi^{t'}\succ^*_p\psi^t$ (where $\psi^t(x)=p^tx$), but it is no longer the case that $\psi^t\succeq^*_p\psi^{t'}$ since $\psi^t(x^{t'})=60>\psi^{t'}(x^{t'})=58$.  So the data is now compatible with GAPP, but with a nonlinear pricing system based on the shopper's mental accounting.\footnote{Notice (in connection with our discussion of mental accounting in \refsec{sec:behavioral}) that the total mental budget of the shopper remains unknown, though the researcher observes an event that has altered that budget.}
\end{example}

\subsection{Discrete consumption spaces}\label{sec:discretespace}

Below are four instances where \refthm{thm:GAPP_nonlinear} could be applied.

\medskip

\noindent (1)\, Suppose that the consumption space consists of $L$ goods of which the first $K$ can only be consumed in discrete quantities (as in the model of \citet{polisson2013}, for example). The consumption space is then $X=\mathbb{N}^{K}\times \Re^{L-K}_+$, where $\mathbb{N}$ is the set of natural numbers. \refthm{thm:GAPP_nonlinear} is applicable whether or not prices are linear. Suppose that each good has a price $p_i>0$. Since $X$ is closed and the price system $\psi^t(x)=p^t\cdot x$ is strictly increasing in $x$, \refthm{thm:GAPP_nonlinear} guarantees that, if GAPP holds, then there is a continuous augmented utility function that is strictly increasing in $x\in X$ and rationalizes $\mathcal D$.

\medskip

\noindent (2)\, Another natural choice environment is one where the consumer is deciding on buying a subset of objects from a set with $L$ items. Then each subset could be represented as an element of $X=\{0,1\}^L$. For $x\in X$, the $\ell^{\text{th}}$ entry $x_{\ell}$ equals 1 if and only if the $\ell^{\text{th}}$ object is chosen. If only certain subsets are permissible, as in the case of discrete choice, then $X$ would be a strict subset of $\{0,1\}^L$. The price system $\psi$ gives the price of different bundles of goods. Let $e_{\ell}$ denote the vector with 1 in the $\ell^{\text{th}}$ entry and zero everywhere else. Then $\psi(e_{\ell})$ is the price of purchasing good ${\ell}$ alone. The price system is nonlinear if $\psi (x)\neq \sum_{\ell=1}^L x_{\ell}\psi(e_{\ell})$ for some $x\in X$.

\medskip

\noindent (3)\, In empirical models of demand for differentiated goods, it is common to model each good as embodying a set of characteristics (see \citet{nevo2000}). For example, if each good is a type of breakfast cereal, then the characteristics could be the calories, fiber content etc. Suppose that there are $L$ characteristics and let $Y_{\ell}\subseteq\Re_+$ be the set of values that characteristic $\ell$ can take. Then, the characteristics space is $Y=\times_{\ell=1}^L Y_{\ell}$.\footnote{If characteristic $1$ naturally takes on continuous values (such as calories) then we let $Y_{\ell}=\Re_+$. Characteristic 2 could be the brand. Suppose there are two brands, then $Y_2=\{1,2\}$, and so on.} There are $I$ goods, with good $i$ having characteristics $x^i\in Y$.  Assuming (as is common in these models) that a consumer purchases only one good, the consumption space is $X=\{x^i\}_{i=1}^I$ and a price system $\psi:X\to\Re_{++}$ is just a list of prices for the different goods.

In this context, it is natural to model the consumer with an augmented utility function defined on characteristics and expenditures $Y\times \Re_{-}$, even though the products available to her are only those in $X$. Furthermore, among the characteristics, there could be those where higher values are unambiguously better, in which case the researcher could be interested in guaranteeing that utility is strictly increasing in those characteristics.
\refthm{thm:GAPP_nonlinear} allows for these considerations.  If $\mathcal D$ obeys GAPP then it can be rationalized by a continuous augmented utility function $U:Y\times\Re_{-}\to\Re$. Additionally, for a set of characteristics $K\subseteq L$, one could guarantee that $U(y,-e)$ is strictly increasing in $y_K$ so long as $\psi^t(x)$ is strictly increasing in $x_K$, for all $t$.

In models of differentiated goods, it is also common to allow for the introduction of new goods and for changes to a product's characteristics.\footnote{These changes could be substantive (for example, a change to a breakfast cereal formula) or it could be a change in advertising expenditure that serves as a proxy for a change in a product's public profile.}  Obviously, changes to a product's characteristics could potentially lead to a change in the product's utility which, unless taken into account by the test, could lead to a spurious rejection of augmented utility-maximization.  In formal terms, these changes can be captured by allowing the set of alternatives to depend on $t$; in \refsec{sec:differentiatedgoods}, we explain how it is possible to modify the GAPP test in \refthm{thm:GAPP_nonlinear} to account for changes of this type.

\subsection{Characteristics models with continuous consumption spaces}

We assume that the space of characteristics is $Y=\Re^L_+$, with each product $i$ represented by a vector of characteristics $x^i\in Y$.  We allow these goods to be bought in bundles, so the consumption space is the convex cone $X$ generated by $\{x^i\}_{i=1}^I$.\footnote{For a GARP-based test of a model of this type, see \citet{blow2008}.}  We assume that the vectors $\{x^i\}_{i=1}^I$ are linearly independent; this guarantees that for each $\hat x\in X$, there is a {\em unique} bundle of goods, $\hat{\alpha}=(\hat{\alpha}_i)_{i=1}^I\in\Re^I_+$ such that $\sum_{i=1}^I \hat{\alpha}_i x^i=\hat x$.  We denote $\hat{\alpha}$ by $\alpha(\hat x)$. Let $p^t\in\Re^I_{++}$ be the prices of the $I$ goods at observation $t$. To obtain the bundle $x\in X$, the consumer needs to spend $\psi^t(x)=p^t \cdot \alpha(x)$.

At observation $t$, the researcher observes $p^t$ and the consumer's purchases $\alpha^t\in\Re^I_{+}$. We assume the researcher knows $\{x^i\}_{i=1}^I$ and so he can work out the consumption in characteristics space, $x^t=\sum_{i=1}^I \alpha^{t}_{i}x^i$, as well as the price system $\psi^t$.  \refthm{thm:GAPP_nonlinear} guarantees that if $\mathcal{D}=\{(\psi^t,x^t)\}_{t\in T}$ satisfies GAPP then it can be rationalized by a continuous augmented utility function $U:\Re^L_+\times\Re_{-}\to\Re$.  So long as $\psi^t(x)$ is strictly increasing in $x\in X$ for each $t$, we can also ensure that $U(y,-e)$ is increasing in the characteristics $y$.

\section{The Random Augmented Utility Model}\label{sec:RAUMX}

In this section, we develop the random version of the expenditure-augmented utility model. We first describe our test procedure for this model via a simple example.

\subsection{An Illustrative Example}\label{sec:Example_Random}

\begin{example}\label{eg:StTesting}

Suppose we have repeated cross-sectional data consisting of the demand of a population of ten consumers at two price vectors.  This is illustrated in \reffig{fig:Data} where the collection of points in the left and right panels indicate the demand bundles at $p^{t}=(2,1)$ and $p^{t'}=(1,2)$ respectively. The lines in \reffig{fig:Data} indicate relative prices.

\begin{figure}[h]
	\centering
	\begin{subfigure}[b]{.5\linewidth}
		\includegraphics[trim=4.6cm 9.8cm 8.2cm 10.5cm,clip=true,scale=.8,page=13]{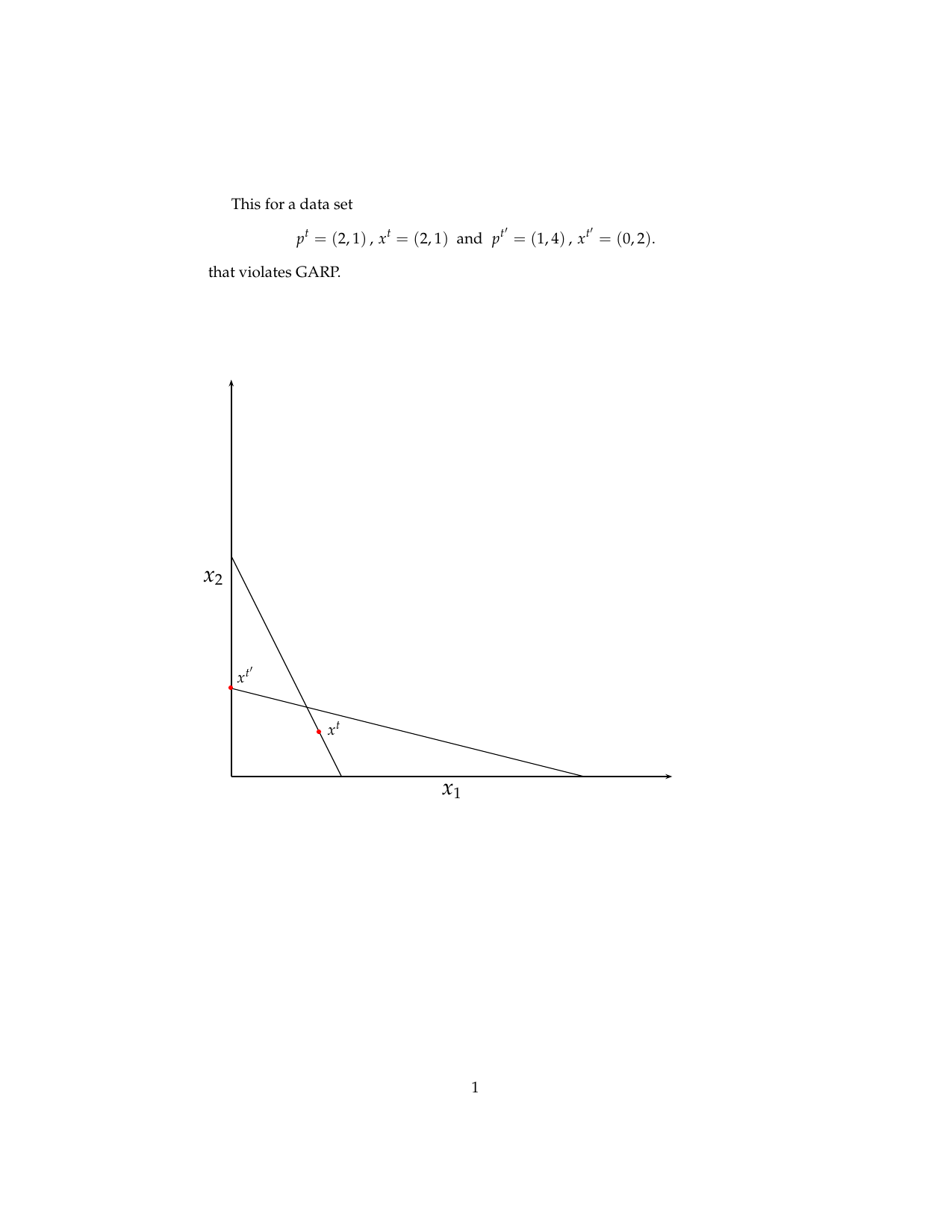}
		\caption{Choices under prices $p^{t}$}
		\label{fig:obs1}
	\end{subfigure}%
	\begin{subfigure}[b]{.5\linewidth}
		\includegraphics[trim=4.6cm 9.8cm 8.2cm 10.5cm,clip=true,scale=.8,page=14]{Figures1.pdf}
		\caption{Choices under prices $p^{t'}$}
		\label{fig:obs2}
	\end{subfigure}
	\caption{The Data Set}
	\label{fig:Data}
\end{figure}

Since we assume this is a cross-sectional data set, the econometrician cannot match consumption bundles across the two panels by consumer identity. The question we wish to address is whether this data set can be {\em rationalized}, by which we mean the following.

\begin{quote} Can we match the choices at $t$ with those at $t'$, forming ten distinct pairs, such that each pair can be rationalized by an augmented utility function (or, equivalently, satisfies GAPP)? \end{quote}
This exercise illustrates the problem we address in this section:  given the empirical demand distributions in different periods, is there a time-invariant distribution over consumer types that could explain these empirical distributions, subject to the restriction that each type is an augmented utility function maximizer.

An obvious way of answering the above question would be to consider all possible partitions into pairs and test GAPP on each pair.\footnote{Note that when we formally define our test in the next subsection, the choice distribution will be assumed to be atomless. The simple finite matching analogy in this section, while inexact, is meant to provide the intuition our methodology.} This approach, however, is not practical when the population at each observation is large and when there are more than two periods. Fortunately, there is a different procedure that works in general, which we now explain.

\refprop{prop:GAPP-GARP} tells us that a pair $\mathcal{D}=\{(p^t,x),(p^{t'},x')\}$ created by choosing bundle $x$ from observation $t$ and $x'$ from observation $t'$ obeys GAPP if and only if its {\em iso-expenditure} analog, $\breve{\mathcal{D}}=\{(p^t,\breve{x}), (p^{t'},\breve{x}')\}$ obeys GARP, where $\breve{x}=x/p\cdot x$ and $\breve{x}'=x'/p\cdot x'$ are scaled versions of $x$ and $x'$ that satisfy $p^t\cdot \breve{x}= p^{t'}\cdot \breve{x}=1$. This scaling is demonstrated in Figures \ref{fig:Btchoices} and \ref{fig:Bt'choices}. \reffig{fig:Patches} shows the scaled bundles from both observations superimposed onto a single picture. This figure also labels different partitions of the budget lines and we use this notation in what follows.

\begin{figure}[h]
	\centering
	\begin{subfigure}[b]{.5\linewidth}
		\includegraphics[trim=4.6cm 11.5cm 8.2cm 8.8cm,clip=true,scale=.9,page=5]{Figures.pdf}
		\caption{Period $t$}
		\label{fig:Btchoices}
	\end{subfigure}%
	\begin{subfigure}[b]{.5\linewidth}
		\includegraphics[trim=4.6cm 11.5cm 8.2cm 8.8cm,clip=true,scale=.9,page=6]{Figures.pdf}
		\caption{Period $t'$}
		\label{fig:Bt'choices}
	\end{subfigure}
	\begin{subfigure}[b]{.5\linewidth}
		\includegraphics[trim=4.6cm 9.8cm 6cm 10.6cm,clip=true,scale=.9,page=15]{Figures1.pdf}
		\caption{Patches generated by budget intersections}
		\label{fig:Patches}
	\end{subfigure}
	\caption{Observed and Rescaled Choices}
	\label{fig:ChoiceSegments}
\end{figure}

Now recall that, if $\breve{\mathcal{D}}$ satisfies GARP, then it is {\em not} possible for $\breve{x}$ to lie on $B^{2,t}$ and for $\breve{x}'$ to lie on $B^{1,t'}$ (see \reffig{fig:ChoiceSegments}(c)). Instead $\breve{\mathcal{D}}$ must belong to one of the following three types: $\breve{x}$ lies on $B^{1,t}$ and $\breve{x}'$ lies on $B^{2,t'}$;  $\breve{x}$ lies on $B^{1,t}$ and $\breve{x}'$ lies on $B^{1,t'}$; or $\breve{x}$ lies on $B^{2,t}$ and $\breve{x}'$ lies on $B^{2,t'}$.  These cases are graphically depicted in \reffig{fig:nu1}, \reffig{fig:nu2} and \reffig{fig:nu3} respectively.

\begin{figure}[h]
	\centering
	\begin{subfigure}[b]{.5\linewidth}
		\includegraphics[trim=4.6cm 11.5cm 8.2cm 8.8cm,clip=true,scale=.8,page=7]{Figures.pdf}
		\caption{Proportion of this Rational Type: $\nu_1$}
		\label{fig:nu1}
	\end{subfigure}%
	\begin{subfigure}[b]{.5\linewidth}
		\includegraphics[trim=4.6cm 11.5cm 8.2cm 8.8cm,clip=true,scale=.8,page=8]{Figures.pdf}
		\caption{Proportion of this Rational Type: $\nu_2$}
		\label{fig:nu2}
	\end{subfigure}\newline
	\begin{subfigure}[b]{.5\linewidth}
		\includegraphics[trim=4.6cm 11.5cm 8.2cm 8.8cm,clip=true,scale=.8,page=9]{Figures.pdf}
		\caption{Proportion of this Rational Type: $\nu_3$}
		\label{fig:nu3}
	\end{subfigure}
	\caption{Set of Rational Types}
	\label{fig:rational_types}
\end{figure}

Denoting the fraction of each of these GAPP-consistent consumer types in the population by $\nu_1$, $\nu_2$ and $\nu_3$ respectively, together they must generate the observed proportion of choices on the segments $B^{1,t},\; B^{2,t},\;B^{1,t'}$ and $B^{2,t'}$.  \reffig{fig:choice_dis} demonstrates the proportion of choices in terms of the $\nu$s, while \reffig{fig:DeterminingPi} displays the empirical proportion of choices on each segment (after scaling), which we denote by
\begin{equation}\label{eq:labor}
\hat{\pi}=\left(\hat{\pi}^{1,t},\hat{\pi}^{1,t'},\hat{\pi}^{2,t},\hat{\pi}^{2,t'}\right)'=\left(\frac{3}{5},\frac{2}{5},\frac{1}{2},\frac{1}{2}\right)'.
\end{equation}
(For instance, $\hat{\pi}^{1,t}=\frac35$ because six of the ten rescaled demand bundles lie on $B^{1,t}$.)  Therefore, a necessary condition for rationalizing the data is that there are $\nu$s that solve
\begin{equation}\label{eq:nus}
\nu_1+\nu_2=\hat{\pi}^{1,t},\; \nu_1+\nu_3=\hat{\pi}^{2,t'},\; \nu_2=\hat{\pi}^{1,t'}, \; \nu_3=\hat{\pi}^{2,t}.
\end{equation}

Two observations should be immediate from this process. The first is that there could be data where the system (\ref{eq:nus}) has no solution; this occurs when $\hat{\pi}^{1,t}-\hat{\pi}^{1,t'}\neq \hat{\pi}^{2,t'}-\hat{\pi}^{2,t}$.  The second is that when the values of $\hat{\pi}$ are given by (\ref{eq:labor}), the solution to (\ref{eq:nus}) is
\begin{equation}\label{eq:nus2}
\nu=(\nu_1,\nu_2,\nu_3)'=\left(\frac{1}{10},\frac{1}{2},\frac{2}{5}\right)'.
\end{equation}

To confirm that the data in this example can indeed be rationalized, it remains for us to pair up the demand bundles at the two observations. To do this, arbitrarily choose $1 (=\nu_1\times 10)$ pair of choices that lie on $B^{1,t}$ and $B^{2,t'}$; any 5 ($=\nu_2\times 10$) pairs that lie on $B^{1,t}$ and $B^{1,t'}$; and the remaining 4 ($=\nu_3\times 10$) pairs on $B^{2,t}$ and $B^{2,t'}$. Clearly each pair satisfies GARP and thus the original (un-scaled) pair satisfies GAPP.

\begin{figure}[h]
	\centering
	\begin{subfigure}[b]{.49\linewidth}
		\includegraphics[trim=4.6cm 11.5cm 8.2cm 8.8cm,clip=true,scale=.8,page=10]{Figures.pdf}
		\caption{Resulting Distribution of Choices}
		\label{fig:choice_dis}
	\end{subfigure}
	\begin{subfigure}[b]{.49\linewidth}
		\includegraphics[trim=4.6cm 11.5cm 6cm 8.8cm,clip=true,scale=.9,page=12]{Figures.pdf}
		\caption{The Scaled Empirical Choice Probabilities $\hat{\pi}$}	
		\label{fig:DeterminingPi}
	\end{subfigure}
	\caption{Choice Distribution and Empirical Frequency}
	\label{fig:Rationalization}
\end{figure}

\end{example}

\subsection{Rationalization by Random Augmented Utility}\label{sec:RatRAUM}

The starting point of our analysis is a {\em repeated cross-sectional data set}, $\mathscr{D}:=\{(p^t,\mathring{\pi}^t)\}_{t=1}^T$, where each observation consists of the prevailing price $p^t$ and the distribution of demand in the population at that price, represented by a probability measure $\mathring{\pi}^t$ on $\Re^L_+$.  An example of $\mathscr{D}$ is the data set depicted in \reffig{fig:Data} where the probability measure corresponds to the empirical distribution of demand bundles.  The following definition generalizes the notion of rationalization considered in that example.

\begin{definition}
The repeated cross-sectional data set $\mathscr{D}=\{(p^t,\mathring{\pi}^t)\}_{t=1}^T$ is {\em rationalized} by the {\em random augmented utility model} (RAUM) if there exists a probability space $(\Omega, \mathcal{F},\mu)$ and a random variable $\chi:\Omega\to (\mathbb{R}^{L}_{+})^T$ such that, almost surely, $\{(p^t,\chi^t(\omega))\}_{t\in T}$ can be  rationalized by an augmented utility function (equivalently, obeys GAPP) and
\begin{equation}\label{eqn:condi}
\mathring{\pi}^t(Y)=\mu(\{\omega\in\Omega: \chi^t(\omega)\in Y\})\:\mbox{ for any measurable $Y\subseteq \Re^L_+$.}
\end{equation}
\end{definition}
In this definition, one could interpret $\Omega$ as the population of consumers and $\chi^t(\omega)$ as the demand of consumer type $\omega$ at observation $t$ (when the prevailing price is $p^t$); all consumer types in the support of $\mu$ are required to be consistent with the augmented utility model and the observed distribution of demand at each observation $t$, given by $\mathring{\pi}^t$,  must coincide with that induced by the distribution $\mu$ over consumer types. Alternatively, the model can also be interpreted as one where each individual's augmented utility changes over time but in such a way that the population distribution is stationary.

In \refeg{eg:StTesting}, the repeated cross-sectional data set has two observations, where the probability distributions are simply uniform distributions on finite support. A RAUM-rationalization involves matching observations in $t$ with those in $t'$, so that each pair obeys GAPP. In the general case with $T$ observations, the function $\chi$ solves a $T$-fold matching problem, where each group $\{\chi^t(\omega)\}_{t\in T}$ (along with the associated prices) satisfies GAPP and agrees with the observations (that is, \eqref{eqn:condi} is satisfied).\footnote{It is straightforward to check that, with two observations, finding a rationalization is equivalent to finding a zero-cost solution to the transportation problem (see \citet{galichon2011}) where the cost of a pair of bundles is 0 if it obeys GAPP and 1 otherwise.}

\medskip

We shall now explain the general procedure for deciding if a given repeated cross-sectional data set $\mathscr{D}$ can be RAUM-rationalized.  This procedure mimics our solution to \refeg{eg:StTesting}.  For ease of exposition, we impose the following assumption on the data.

\begin{assumption}  \label{ass:indif}
Let $B^t :=\{x\in \Re^L_+ \,:\,p^t\cdot x=1\}$ be the budget plane at prices $p^t$ and expenditure 1.  For all $t, t'\in T$ with $B^t\neq B^{t'}$,
$$\mathring{\pi}^t\left(\left\{x\in\Re^L_+:\frac{x}{p^t\cdot x}\in B^t\:\mbox{ and }\frac{x}{p^{t'}\cdot x}\in B^{t'}\right\}\right)=0.$$
\end{assumption}

\medskip

This assumption states that the probability of a bundle lying, after re-scaling, at the intersection of two budget planes is zero.  The assumption is not required for any of our results but it is convenient because it simplifies the exposition.\footnote{If we allow for mass at budget intersections, then we would have to include them in our definition of patches. This is notationally cumbersome but once included our arguments (and  \refthm{thm:StGAPPTest}) remain correct.} It is always satisfied if $\mathring{\pi}^t$ is absolutely continuous with respect to the Lebesgue measure and is unlikely to be violated in any application with a continuous consumption space and linear prices.

\medskip

Let $\{B^{1,t},\dots, B^{I_t,t}\}$ denote the collection of subsets, or \textit{patches}, of $B^t$ where each subset has as its boundaries the intersection of $B^t$ with other budget sets and/or the boundary planes of the positive orthant. These are the higher-dimensional and multi-period analogs to the line segments in \reffig{fig:Patches}.  Formally, for all $t\in T$ and $i_t\neq i'_t$, each set in $\{B^{1,t},\dots, B^{I_t,t}\}$ is closed and convex and satisfies the following conditions:
\begin{enumerate}
	\item[(i)] $\cup_{1\leq i_t \leq I_t} B^{i_t,t}=B^t$,
	\item[(ii)] $\text{int}(B^{i_t,t}) \cap B^{t'} =\phi$ for all $t'\neq t$ that satisfy $B^t\neq B^{t'}$ (where $\text{int}(B^{i_t,t})$ denotes the relative interior of $B^{i_t,t}$),
	\item[(iii)] $B^{i_t,t}\cap B^{i'_t,t}\neq\phi$ implies that $B^{i_t,t}\cap B^{i'_t,t}\subset B^{t'}$ for some $t'\neq t$ that satisfies $B^t\neq B^{t'}$.
\end{enumerate}\vspace{-0.05in}

\medskip

For the patch $B^{i_{t},t}$, we let
\begin{equation}\label{eqn:pie}
\pi^{i_t,t}:=\mathring{\pi}^t\left(\left\{x\in\Re^L_+: \frac{x}{p^t\cdot x}\in B^{i_t,t}\right\}\right).
\end{equation}
In words, $\pi^{i_t,t}$ is the probability that a bundle lies in $B^{i_t,t}$ after re-scaling.  Note that, even though there may be $i_t$, $i'_t$ for which $B^{i_t,t}\cap B^{i'_t,t}$ is nonempty, \refass{ass:indif} guarantees that $\sum_{i_t=1}^{I_t}\pi^{i_t,t}=1$.  We denote by $\pi^t$ the vector $(\pi^{i_t,t})_{i_t=1}^{I_t}$ and by $\pi$ the column vector $(\pi^1,\pi^2,\ldots,\pi^T)'$.  We refer to $\pi$ as the {\em vector of observed patch probabilities}.

\medskip

Consider a single-agent data set of the form $\mathcal{D}=\{(p^t,x^t)\}_{t=1}^T$.  Given $\mathcal{D}$, we can define its {\em iso-expenditure version}, which is $\breve{\mathcal{D}}=\{(p^t,\breve{x}^t)\}_{t=1}^T$, where $\breve{x}^t=x^t/p^t\cdot x^t$ (so $p^t\cdot \breve{x}^t=1$ for all $t$).  Suppose that $\breve{x}^t$ does not lie on the intersection of budget planes, that is, there is $i^t$ such that $\breve{x}^t\in \text{int}(B^{i^t,t})$.  We make two important observations.  First, \refprop{prop:GAPPscaling} tells us that $\mathcal{D}$ satisfies GAPP if and only if $\breve{\mathcal{D}}$ satisfies GARP.  Second, if $\mathcal{D}$ satisfies GAPP then so does $\mathcal{D}'=\{(p^t,y^t)\}_{t\in T}$ if $y^t$ has the property that its re-scaled version $\breve{y}^t$ satisfies  $\tilde y^t\in\text{int}(B^{i^t,t})$; this is because the revealed preference relations (over the bundles $\tilde y^t$) are determined only by where $\breve{y}^t$ lies on the budget set relative to its intersection with another budget.

It follows from these observations that we may classify all single-agent data sets that obey GAPP according to the patch occupied by the scaled bundle $\breve{x}^t$ at each $B^t$.  In formal terms, each $\mathcal{D}$ that obeys GAPP is associated with an iso-expenditure $\breve{\mathcal{D}}$ that obeys GARP, which is in turn associated with a vector $a = \left(a^{1,1},\dots, a^{I_T,T}\right)$ where
\begin{equation}\label{eqn:A_def}
a^{i_t,t}=\left\{\begin{array}{cl}1 & \text{ if } \breve{x}^t \in B^{i_t,t}, \\ 0 & \text{ otherwise.}\end{array}\right.
\end{equation}
Thus, for the observed prices, we have partitioned the collection of all deterministic data sets obeying GAPP (of which there could be infinitely many) into a {\em finite} number of distinct classes or types, based on its associated vector $a$. We denote this set of vectors by $\mathcal{A}$.
We use $A$ to denote the matrix whose columns consist of every $a\in\mathcal{A}$, arranged in an arbitrary order; we refer to $A$ as the {\em matrix of  GARP-consistent types}.

\medskip

In Example \ref{eg:StTesting}, all the deterministic data sets that obey GAPP must correspond to one of three types (as depicted in \reffig{fig:rational_types}) and
\begin{equation}\label{eq:matrixA}
A=\left(\begin{array}{c c c} 1 & 1 & 0 \\ 0 & 0 & 1 \\ 0 & 1 & 0 \\ 1 & 0 & 1\end{array}\right).
\end{equation}
(Each column in $A$ describes the types in \reffig{fig:rational_types}: from left to right the columns capture the types in Figures \ref{fig:nu1}, \ref{fig:nu2} and \ref{fig:nu3} respectively.)

Given a repeated cross-sectional data set $\mathscr{D}$, we can construct $\mathcal{A}$ and the matrix of GARP-consistent types $A$. Suppose that this data set can be rationalized by some distribution $\mu$.  Let $\nu_a$ denote the mass of consumers of type $a$ in the population, that is
$$\nu_a=\mu\left(\left\{\omega\in\Omega: \frac{\chi^t(\omega)}{p^t\cdot \chi^t(\omega)}\in B^{i_t,t}\:\mbox{ if $a^{i_t,t}=1$, for all $t\in T$}\right\}\right).$$
At a given observation $t$, let $\mathcal{A}^{i_{t},t}=\{a: a^{i_{t},t}=1\}$; this is the subset of GARP-consistent types that have their re-scaled demands in the patch $B^{i_{t},t}$ at observation $t$. The proportion of the population whose types belong to $\mathcal{A}^{i_{t},t}$ is
$$\mu\left(\left\{\omega\in\Omega: \frac{\chi^t(\omega)}{p^t\cdot \chi^t(\omega)}\in B^{i_{t},t}\right\}\right)=\sum_{a\in \mathcal{A}^{i_t,t}}\nu_a=\sum_{a\in \mathcal{A}}\nu_a \, a^{i_t,t}.$$
Since $\mathscr{D}$ is rationalized by $\mu$, setting $Y=\{x\in\Re^L_+: x/(p^t\cdot x)\in B^{i_{t},t}\}$ in (\ref{eqn:condi}), we obtain
\begin{equation}\label{eqn:primacondi}
\pi^{i_{t},t}=\sum_{a\in \mathcal{A}}\nu_a \, a^{i_t,t}
\end{equation}
where $\pi^{i_{t},t}$ is defined by \eqref{eqn:pie}.  In other words, the observed probability of choices that land on $B^{i_t,t}$ after scaling must equal to the mass of GARP-consistent types implied by $\mu$.  This condition must hold for all patches $B^{i_t,t}$, so \eqref{eqn:primacondi} can be more succinctly written as $A\nu=\pi$, where $\nu$ is the column vector $(\nu_a)_{a\in\mathcal{A}}$. (Recall that $\pi$ is the vector of observed patch probabilities.)
In \refeg{eg:StTesting}, $A$ is given by \eqref{eq:matrixA}, $\pi$ is given by \eqref{eq:labor} and the solution $\nu$ by \eqref{eq:nus2}.

\medskip

To recap, given a data set $\mathscr{D}$, we calculate the matrix of GARP-consistent types $A$ and the vector of patch probabilities $\pi$. A necessary condition for $\mathscr{D}$ to be rationalized by RAUM is that there is $\nu\in\Delta^{|\mathcal{A}| -1}$ that solves $A\nu=\pi$.  It turns out that this condition is also sufficient: if $\nu$ exists, then we can find a RAUM-rationalization of $\mathscr{D}$ where the proportion of the population with type $a$ is precisely $\nu_a$.  The details of this final step are in the \refappendix{sec:appendix_RAUM}. The next result summarizes this discussion.

\begin{theorem}\label{thm:StGAPPTest}
Let $\mathscr{D}=\{(p^t,\mathring{\pi}^t)\}_{t=1}^T$ be a repeated cross-sectional data set obeying \refass{ass:indif}. Then $\mathscr D$ can be RAUM-rationalized if and only if there exists a $\nu\in \Delta^{|\mathcal{A}| -1}$ such that $A\nu=\pi$.
\end{theorem}

We end this section by contrasting the RAUM test with that of the {\em classic random utility model} (or RUM for short). The typical data environment for the latter is one where each observation consists of a distribution of choices on a given \textit{constraint set} (which varies across observations). In that environment, \citet{McFadden1991} and \citet{McFadden2005} observe that the problem of testing RUM can be discretized. \KS operationalize this insight in the case where constraint sets are linear budget sets. In that context, requiring choices from the same constraint set simply means that $\mathscr{D}=\{(p^t,\mathring{\pi}^t)\}_{t=1}^T$ is {\em iso-expenditure}, in the sense that if $x$ is in the support of $\mathring{\pi}^t$ then $p^t\cdot x=1$. \KS demonstrates that an iso-expenditure data set $\mathscr{D}$ can be RUM-rationalized if and only if $A\nu=\pi$ for some $\nu\in\Re^{|\mathcal{A}|}_+$ (where $A$ is the matrix of GARP-consistent types and $\pi$ is the vector of patch probabilities).

Notice that \refthm{thm:StGAPPTest} recovers the result of \KS as a corollary.  RAUM-rationalization guarantees the existence of a distribution over types that is consistent with the observations (that is, \eqref{eqn:condi} holds), with $\{(p^t,\chi^t(\omega))\}_{t\in T}$ satisfying GAPP almost surely. With the iso-expenditure condition, GAPP and GARP are equivalent properties, which means (by \refAfriat{thm:Afriat}) that there is is a strictly increasing utility function $\widetilde{U}_{\omega}:\Re^L_+\to\Re$ with $\widetilde{U}_{\omega}(\chi^t(\omega))\geq \widetilde{U}_{\omega}(x)$ for all $x\in B^t$; this is precisely what is needed for a  RUM-rationalization. Of course, it is also clear from our proof of \refthm{thm:StGAPPTest} that we are building on \citetalias{kitamura2018}, since our proof strategy involves (in effect) the following three steps: (i) converting $\mathscr{D}$ into an iso-expenditure data set $\breve{\mathscr{D}}$ (obtained from $\mathscr{D}$ simply by scaling demands); (ii) noticing that $\widetilde{\mathscr{D}}$ can be RAUM-rationalized if and only if $\breve{\mathscr{D}}$ can be RUM-rationalized; and (iii) then relying on the characterization of RUM-rationalization in \citetalias{kitamura2018}.

\medskip

Since a population of heterogenous consumers typically do not have identical expenditures, an actual data set would not typically be iso-expenditure. In order to test RUM, \KS found it necessary to estimate an iso-expenditure data set $\breve{\mathscr{D}}$ from the true data set $\mathscr{D}$, which in turn requires an additional econometric procedure with all its attendant assumptions.  In contrast, as we have established in \refthm{thm:StGAPPTest}, the RAUM has the important empirical feature that it can be {\em directly tested} on data sets that are not iso-expenditure.

\subsection{Welfare Comparisons}\label{sec:st_welfare}

Since the test for rationalizability involves finding a distribution $\nu$ over different types, it is possible to use this distribution for welfare analysis.  To be specific, suppose that a government is contemplating a change in sales tax that could lead to prices changing from its current value $p^{t'}$ to $\hat p$. Relevant to the government's re-election prospects is the proportion of consumers who will be better off as a result of this price change.\footnote{We would like to thank an anonymous referee for suggesting this motivation.}  Our methods allow us to obtain information on this proportion.

To be specific, suppose the analyst has access to a data set $\mathscr{D}$ that contains among its observations $(p^{t'},\mathring\pi^{t'})$, i.e.,  the prevailing prices and the demand distribution. To determine the welfare effect of a price change from $p^{t'}$ to $\hat p$, let $\mathbb{1}_{\hat p\succeq^*_p p^{t'}}$ denote the row vector with its length equal to the number of rational types ($|\mathcal{A}|$), such that the $j^{\text{th}}$ element is 1 if $\hat p\succeq^*_p p^{t'}$ for the rational type corresponding to column $j$ of $A$ and 0 otherwise.\footnote{Even though $\hat p$ is not among the observed prices, one could still define $\hat p\succeq^*_p p^{t'}$; see footnote \ref{non-sample-observe}.}  In words, $\mathbb{1}_{\hat p\succeq^*_p p^{t'}}$ enumerates the set of rational types for which $\hat p$ is revealed preferred to $p^{t'}$. For a rationalizable data set $\mathscr{D}$, \refthm{thm:StGAPPTest} guarantees that
\begin{equation}\label{eqn:wel_lb}
	\underline{\mathcal{N}}_{\hat p\succeq^*_p p^{t'}}:=\,\begin{array}{c} \min_{\nu} \; \mathbb{1}_{\hat p\succeq^*_p p^{t'}}\, \nu, \\ \text{ subject to } \; A\nu=\pi, \end{array}
\end{equation}
is the lower bound on the proportion of consumers who are revealed better off at prices $\hat p$ compared to $p^{t'}$, while the upper bound is
\begin{equation}\label{eqn:wel_ub}
	\overline{\mathcal{N}}_{\hat p\succeq^*_p p^{t'}}:=\,\begin{array}{c} \max_{\nu} \; \mathbb{1}_{\hat p\succeq^*_p p^{t'}}\, \nu, \\ \text{ subject to } \; A\nu=\pi. \end{array}
\end{equation}

Since \eqref{eqn:wel_lb} and \eqref{eqn:wel_ub} are both linear programming problems (which have solutions if, and only if, $\mathscr{D}$ is rationalizable), they are easy to implement and computationally efficient.  Suppose that the solutions are $\underline\nu$ and $\overline{\nu}$ respectively; then for any $\beta\in [0,1]$, $\beta\underline{\nu}+(1-\beta)\overline \nu$ is also a solution to $A\nu=\pi$ and, in this case, the proportion of consumers who are revealed better off at $\hat p$ compared to $p^{t'}$ is exactly $\beta\,\underline{\mathcal{N}}_{\hat p\succeq^*_p p^{t'}}+(1-\beta)\,\overline{\mathcal{N}}_{\hat p\succeq^*_p p^{t'}}$.  In other words, the proportion of consumers who are revealed better off can take any value in the interval $[\underline{\mathcal{N}}_{\hat p\succeq^*_p p^{t'}}\, ,\, \overline{\mathcal{N}}_{\hat p\succeq^*_p p^{t'}}]$.

\refprop{prop:p_welfare} tells us that the revealed preference relations are tight, in the sense that if, for some consumer, $\hat p$ is not revealed preferred to $p^{t'}$ then there exists an augmented utility function which rationalizes her consumption choices and for which she strictly prefers $p^{t'}$ to $\hat p$.  Given this, we know that, amongst all rationalizations of $\mathscr D$, $\underline{\mathcal{N}}_{\hat p\succeq^*_p p^{t'}}$ is also the infimum on the proportion of consumers who are better off at $\hat p$ compared to $p^{t'}$.

The following proposition summarizes these observations.

\begin{proposition} \label{prop:gocompare}
	Let $\mathscr D=\{(p^t,\mathring{\pi}^t)\}_{t=1}^T$ be a repeated cross-sectional data set that satisfies \refass{ass:indif} and is rationalized by the RAUM.  Then, for every $\eta \in [\underline{\mathcal{N}}_{\hat p\succeq^*_p p^{t'}}\, ,\, \overline{\mathcal{N}}_{\hat p\succeq^*_p p^{t'}}]$, there is a rationalization of $\mathscr D$ for which $\eta$ is the proportion of consumers who are revealed better off at $\hat p$ compared to $p^{t'}$.
	
	Furthermore,  $\underline{\mathcal{N}}_{\hat p\succeq^*_p p^{t'}}$ is the infimum of the proportion of consumers who are better off at $\hat p$ compared to $p^{t'}$, among all the rationalizations of $\mathscr{D}$.
\end{proposition}

It may be helpful to consider how \refprop{prop:gocompare} applies in \refeg{eg:StTesting}.  In that case, there are three GAPP-consistent types with a {\em unique} $\nu$ that solves $A\nu=\pi$ (see (\ref{eq:nus2})).  Of the three types, $p^t\succeq^*_p p^{t'}$ holds only for type 2 (see \reffig{fig:rational_types}) and thus the proportion of consumers who are revealed better off at $p^t$ compared to $p^{t'}$ is $\nu_2=1/2$.  Formally, we have $\mathbb{1}_{p^t\succeq^*_p p^{t'}}=(0,1,0)$, $\mathbb{1}_{p^t\succeq^*_p p^{t'}}\nu =1/2$, and $\underline{\mathcal{N}}_{ p^t\succeq^*_p p^{t'}}=\overline{\mathcal{N}}_{p^t\succeq^*_p p^{t'}}=1/2$.\footnote{Of the other two types in the population, type 3 (with $\nu_3=2/5$) are revealed better off at $p^{t'}$ compared to $p^t$, while type 1 consumers could be either better or worse at $p^t$ compared to $p^{t'}$. Therefore, across all rationalizations of that data set, the proportion of consumers who are better off (but not necessarily revealed better off) at $p^t$ compared to $p^{t'}$ can be as low as $1/2$ and as high as $1-2/5=3/5$.}

\section{Statistical Test of RAUM, and Inference for Counterfactuals} \label{sec:metrics}

This section outlines our econometric methodologies. First, \refsec{sec:econ_test} provides a statistical test of the RAUM (presented in \refsec{sec:RAUMX}). Second, and more importantly, \refsec{sec:econ_wel} develops a new methodology for obtaining asymptotically uniformly valid confidence intervals for counterfactual objects. This result applies to a general class of random utility models, including the RAUM.  It can be used for statistical analyses of welfare comparisons and we use it for that purpose in our empirical study in \refsec{sec:application_RAUM}.

\subsection{Testing the Random Augmented Utility Model}\label{sec:econ_test}

Recall from \refthm{thm:StGAPPTest} that, given a set of prices and corresponding demand distributions $\mathscr{D}=\{(p^t,\mathring{\pi}^t)\}_{t=1}^T$ and an implied vector $\pi$ of choice probabilities on rescaled and discretized budgets, a test of the random augmented utility model is a test of
\begin{eqnarray}
	H_0: && \exists \nu\in \Delta^{|\mathcal{A}| -1} \text{ such that } A\nu=\pi \notag \\
	\Longleftrightarrow && \min_{\nu\in \Re^{|\mathcal{A}|}_+}[\pi-A \nu]'\Omega[\pi-A \nu]=0, \label{eqn:hyp_main}
\end{eqnarray}
where $\Omega$ is a positive definite matrix and where the equivalence was noted and exploited in \citetalias{kitamura2018}.\footnote{The strategy to configure $H_0$ as a quadratic program also appears in \citet{dePaula2018}, albeit for a different program and in a different context.}

In practice, we estimate $\pi$ by its sample analog $\hat{\pi}=(\hat{\pi}^1,\dots,\hat{\pi}^T)$ obtained by rescaling the empirical distribution of choices $\{x^t_{n_t}\}_{n_t=1}^{N_t}$ where $N_t$ is the number of observed choices in the data in period $t$.
This gives rise to test statistic
\begin{equation}\label{eqn:test_stat}
	J_N:=N\min_{\nu\in \Re^{|\mathcal{A}|}_+}[\hat{\pi}-A \nu]'\Omega[\hat{\pi}-A \nu],
\end{equation}
where $N=\sum_{t=1}^T N_t$ denotes the total number of observations.
Computing appropriate critical values for this test is delicate because the limiting distribution of $J_N$ depends discontinuously on nuisance parameters. We use the modified bootstrap procedure proposed by \citetalias{kitamura2018}.

\subsection{Inference for Counterfactuals in a General Class of Random Utility Models}\label{sec:econ_wel}

A counterfactual quantity in a random utility model can be generally regarded as a function of the underlying distribution $\nu$ of individual preferences.  This section focuses on the case where this mapping is linear, so that we are concerned with statistical inference for $\theta = \rho \cdot \nu$, where $\rho \in \Re^{|\mathcal{A}|}$ is a known vector which varies with the counterfactual of interest. Our analysis of welfare comparisons in \refsec{sec:st_welfare} falls into this framework, by letting $\theta$ be the proportion of consumers who are revealed better off at prices $\hat p$ compared to $p^{t'}$, with $\rho = \mathbb{1}_{{\hat p}\succeq^*_p p^{t'}}$.  It is worth emphasizing that the methodology developed in this section has broad applicability: it can be used to study other random utility models (such as the model in \citet{KS19}) and to investigate other objects of interest in random utility models; for example, \citet{lazzati2018} applies our technique to estimate the proportion of non-strategic players in a game.

Note that  $\theta$ is partially identified as follows:
\begin{equation*}\label{eqn:general}
	\theta  \in \Theta_I \quad \text{where} \quad \Theta_I := \{\rho \cdot \nu| \nu \geq 0, A \nu=\pi\}.
\end{equation*}
Our confidence interval inverts a test of
\begin{eqnarray}
	\pi \in \mathcal S(\theta) \quad \text{where} \quad  \mathcal S(\theta) := \left\{A \nu \, |\,   \rho \cdot \nu = \theta, \; \nu \in \Delta^{|\mathcal{A}| -1} \right\} \label{eq:v-test}
\end{eqnarray}
or equivalently,
$$\min_{\nu\in \Delta^{|\mathcal{A}| -1},  \, \theta = \rho \cdot \nu}[\pi-A \nu]'\Omega[\pi-A \nu]=0.$$
The test statistic is a scaled sample analog
\begin{eqnarray}
	J_N(\theta)
	&=& N \min_{\nu\in \Delta^{|\mathcal{A}| -1},  \, \theta = \rho \cdot \nu}[\hat \pi-A  \nu]'\Omega[\hat \pi-A \nu] \notag
	\\
	&=&N \min_{\eta \in \mathcal S(\theta)}	[\hat
	\pi
	- \eta]^{\prime }\Omega \lbrack \hat
	\pi- \eta]. \label{eqn:J_N_hyp}
\end{eqnarray}
Once again, the naive bootstrap fails to deliver valid critical values for \eqref{eqn:J_N_hyp}, since its asymptotic distribution changes discontinuously, depending on the location of $\pi$ relative to the polytope $\mathcal S(\theta)$.  This is akin to the nonstandard nature of the inference in  \eqref{eqn:test_stat}, though a simple application of the modified bootstrap algorithm in \KS does not work, as their method relies on, among other things, the polytope $\{A  \nu:\nu \geq 0\}$ being a cone.  This is not necessarily the case for counterfactual analysis, and we need to deal with $\mathcal S(\theta)$ without relying on conical properties.  In this section we develop a new algorithm that guarantees asymptotic validity  for inference concerning general counterfactuals.

That said, as in \citetalias{kitamura2018}, we do gain an insight from Weyl-Minkowski duality. In \refappendix{sec:appendix_metrics}, we show that there exist nonstochastic matrices $B$, $\tilde B$ and a nonstochastic vector-valued function $d(\theta)$ such that $\pi \in \mathcal S(\theta)$ if, and only if,
\begin{equation}\label{eq:h-test}
	B\pi \leq 0,\;  \tilde B\pi = d(\theta)\; \text{ and }\; {\bf 1} \cdot \pi = 1,
\end{equation}
where $\bf 1$ is the $I$-vector of ones where $I=\sum_{t=1}^T I_t$ is the total number of patches. Thus, in principle this is a linear (in)equality testing problem. There is a rich literature on such problems. However, we cannot directly invoke that literature because we cannot compute $(B,\tilde B)$ in practice for a problem with a relevant scale.

While we therefore need to work with representation \eqref{eq:v-test}, representation \eqref{eq:h-test} is useful. It illustrates that the inference problem is non-standard; in particular, the limiting distribution of the test statistic depends on how close to binding each of the constraints encoded in $(B,\tilde B,d(\theta))$ is. From analogy to the moment inequalities literature, it also pretty much implies that the constraints' slackness cannot be pre-estimated with sufficient accuracy; the reason being that it enters the test's asymptotic representation scaled by $\sqrt{N}$. However, we also know that certain existing procedures which shrink the estimated slack of all inequalities to zero before computing the distribution of $J_N$ will work. Our proposal is inspired by these but must implement the idea with the computationally feasible representation \eqref{eq:v-test} instead of \eqref{eq:h-test}, which is only theoretically available.  This means that we cannot calculate the empirical slack, which is explicit in (the empirical version of) representation \eqref{eq:h-test} but not in  \eqref{eq:v-test}, the very reason why a new method is called for.

Intuitively, we contract (or ``tighten") the polytope $\mathcal S(\theta)$ toward a point in its relative interior, thereby effectively (but non-obviously) reducing the empirical slack in any inequality constraint.  This forces all the constraints with small slacks to be binding after ``tightening". Note that, unlike in \citetalias{kitamura2018}, we face substantial added complications because  (i) we need to deal with a non-conical $\mathcal S(\theta)$, and (ii) the appropriate way to tighten the polytope $\mathcal S(\theta)$ varies with the value of $\theta$ through the dependence  of $\mathcal S(\theta)$ on $\theta$. This leads to a \textit{restriction-dependent tightening} approach which we now describe in broad strokes.

\medskip

Choose a sequence $\tau_N$ such that $\tau _{N}\downarrow 0$ and $\sqrt{N}\tau _{N}\uparrow \infty $ (we make a specific proposal in the appendix) and define
$$
\mathcal S_{\tau_N}(\theta) := \{A\nu\; | \; \rho \cdot \nu = \theta, \nu \in \mathcal V_{\tau_{N}}(\theta)\},
$$
where $\mathcal V_{\tau_{N}}(\theta)$ is obtained by appropriately constricting $\Delta^{|\mathcal{A}|-1}$; in particular, some components of $\nu$ are forced to be boundedly above $0$. Note that $\mathcal S_{\tau_N}(\theta)$ depends on $\theta$ through the equation $\rho \cdot \nu = \theta$ but also because, as the notation suggests, the construction of $\mathcal V_{\tau_{N}}(\theta)$ will change with $\theta$, a key feature of our algorithm. The definition of $\mathcal V_{\tau_{N}}(\theta)$ for general $\rho$ is rather involved and thus deferred to \refappendix{sec:appendix_metrics}, but it considerably simplifies for binary $\rho$ as in our application.

The set $\mathcal S_{\tau_N}(\theta)$ replaces $\mathcal S(\theta)$ in the bootstrap population. The precise algorithm proceeds as follows.  For each $\theta \in \Theta$:
\begin{itemize}
	\item[(i)] Compute the \textit{$\tau_N$-tightened restricted} estimator of the empirical choice distribution
	\begin{equation*}
		\hat{\eta}_{\tau _{N}}:=\argmin_{\eta \in \mathcal S_{\tau_{N}}(\theta)}	N[\hat{\pi} - \eta]^{\prime }\Omega \lbrack \hat{\pi}-\eta].
	\end{equation*}		
	\item[(ii)] Define the \textit{$\tau_N$-tightened recentered} bootstrap estimators
	\begin{equation*}
		\hat \pi^{*(r)}_{\tau_N} := \hat \pi^{*(r)} - \hat \pi + \hat \eta_{\tau_N},
		\quad r = 1,...,R,
	\end{equation*}
	where $\hat{\pi}^{*(r)}$ is a bootstrap analog of $\hat{\pi}$ and $R$ is the number of bootstrap samples. For instance, in our application, $\hat{\pi}^{*(r)}$ is generated by the simple nonparametric bootstrap of choice frequencies.
	\item[(iii)] For each $r = 1,...,R$, compute
	\begin{eqnarray*}
		J_{N,\tau_{N}}^{*(r)}(\theta)
		=\min_{\eta \in \mathcal S_{\tau_{N}}(\theta)}	N[\hat
		\pi^{*(r)}_{\tau_N}
		- \eta]^{\prime }\Omega \lbrack \hat
		\pi^{*(r)}_{\tau_N} - \eta].
	\end{eqnarray*}	
	\item[(iv)] Use the empirical distribution of $	J_{N,\tau_{N}}^{*(r)}(\theta)$ to obtain the critical value for $J_{N}(\theta)$.
\end{itemize}
A confidence interval for $\theta$ collects values of $\theta$ that are not rejected.

\medskip

\refthm{thm:validity} below establishes asymptotic validity of the above procedure.  Let
$$
\mathcal F := \left\{(\theta,\pi) \; \left| \; \theta \in \overline{\vartheta}, \pi \in \mathcal S(\theta) \cup \mathcal P \right. \right\}
$$
where ${\mathcal{P}}$ denote the set of all $\pi$ that satisfy \refcond{condition 1} in \refappendix{sec:appendix_metrics}.

\begin{theorem}
	\label{thm:validity} Choose $\tau _{N}$ so that $\tau _{N}\downarrow 0$ and $\sqrt{N}%
	\tau _{N}\uparrow \infty $. Also, let $\Omega $ be diagonal. Then under Assumptions \ref{ass:proportions} and \ref{ass:sampling} stated in \refappendix{sec:appendix_metrics},
	\begin{equation*}
		\liminf_{N\rightarrow \infty }\inf_{(\theta,\pi) \in \mathcal F}\Pr
		\{J_N(\theta)\leq \hat{c}_{1-\alpha }\}=1-\alpha,
	\end{equation*}%
	where $0\leq \alpha \leq \frac{1}{2}$ and $\hat{c}_{1-\alpha }$ is the $1-\alpha $ quantile of $J_{N,\tau_{N}}^*$.
\end{theorem}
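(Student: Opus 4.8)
\medskip
\noindent\emph{Sketch of the argument.} The plan is to adapt the validity proof of \citetalias{kitamura2013} to the present setting, combining the polyhedral description \eqref{eq:h-test} of the null \eqref{eq:v-test} with a drifting‑parameter (subsequence) analysis and adding the ingredients needed because the hypothesis now carries a parameter $\theta$ that may lie near the boundary of the compact set $\Theta$. Throughout, diagonality of $\Omega$ is used so that the $\Omega$‑projections that appear decompose across coordinates and the $\tau_N$‑tightening acts inequality‑by‑inequality. The first step is to pin down the limit of the test statistic along a drifting sequence. Fix $(\theta_N,\pi_N)\in\mathcal F$ and put $Z_N:=\sqrt N(\hat\pi-\pi_N)$. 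Under \refass{ass:sampling} and \refass{ass:proportions} a triangular‑array central limit theorem gives $Z_N\Rightarrow Z\sim\mathcal N(0,\Sigma)$ for the relevant block‑diagonal $\Sigma$, uniformly in the nuisance parameters in the manner familiar from \citeasnoun{andrews-soares}. Using \eqref{eq:h-test}, $J_N(\theta_N)$ is $N$ times the squared $\Omega$‑distance from $\hat\pi$ to the polytope $\{\eta:\ B\eta\le 0,\ \tilde B\eta=d(\theta_N),\ \mathbf 1'\eta=1\}$; recentring at $\pi_N$ and rescaling, it equals $\min\{\|Z_N-\zeta\|_\Omega^2:\ B\zeta\le\sqrt N\lambda_N,\ \tilde B\zeta=0,\ \mathbf 1'\zeta=0\}$, where $\lambda_N:=-B\pi_N\ge 0$ collects the inequality slacks. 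Passing to a subsequence so that each coordinate of $\sqrt N\lambda_N$ converges in $[0,\infty]$ and $d(\theta_N)$ converges, the feasible set converges to a (possibly translated) polyhedral cone $\mathcal C$ — with $B_j\zeta\le 0$ imposed exactly when $\sqrt N\lambda_{N,j}\to 0$ — and $J_N(\theta_N)\Rightarrow\|Z-\Pi_{\mathcal C}Z\|_\Omega^2$, the squared residual of the $\Omega$‑projection of $Z$ onto $\mathcal C$. The localization exposes the difficulty: this limit is discontinuous in $\lambda_N$, since a coordinate that is ``just binding'' versus ``slightly slack'' changes $\mathcal C$, so a naive recentred bootstrap is invalid.

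The second step is the tightened bootstrap. The geometric facts proved in \refappendix{sec:appendix_metrics} guarantee that, for every $\theta\in\Theta_I$ and all $N$ large, $\mathcal S_{\tau_N}(\theta)$ is a nonempty closed convex polytope, so $\hat\eta_{\tau_N}$ (the $\Omega$‑projection of $\hat\pi$ onto $\mathcal S_{\tau_N}(\theta)$, step (i) of the algorithm) and the recentred draws $\hat\pi^{*(r)}_{\tau_N}=\hat\pi^{*(r)}-\hat\pi+\hat\eta_{\tau_N}$ are well defined; since tightening the simplex translates into shifting the inequalities $B\eta\le 0$ inward by an amount of order $\tau_N$, the bootstrap statistic, recentred at $\hat\eta_{\tau_N}$, equals $\min\{\|Z^*_N-\zeta\|_\Omega^2:\ B\zeta\le\sqrt N s_N,\ \tilde B\zeta=0,\ \mathbf 1'\zeta=0\}$ with $Z^*_N:=\sqrt N(\hat\pi^{*}-\hat\pi)$ and $s_{N,j}\ge 0$ the slack of $\hat\eta_{\tau_N}$ in the tightened constraint $j$. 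Two observations finish the step. (a) Conditionally on the data, $Z^*_N\Rightarrow Z$ in probability (bootstrap CLT under \refass{ass:sampling}). (b) Because $\sqrt N\tau_N\uparrow\infty$, every inequality whose true slack satisfies $\lambda_{N,j}=o(\tau_N)$ — in particular every one with $\sqrt N\lambda_{N,j}\to 0$ and every one with $\sqrt N\lambda_{N,j}$ bounded — is violated by $\hat\pi$ at the tightened level, so $\hat\eta_{\tau_N}$ lies on that tightened face, $s_{N,j}=0$, and it becomes a binding cone constraint $B_j\zeta\le 0$; while, because $\tau_N\downarrow 0$, every inequality with slack bounded away from $0$ has $\sqrt N s_{N,j}\to\infty$ and drops out. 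Hence the limiting bootstrap cone $\mathcal C^{*}$ satisfies $\mathcal C^{*}\subseteq\mathcal C$ and, in addition, any offset of the true limit cone is absorbed; since shrinking the cone (and removing the offset) only enlarges the $\Omega$‑projection residual, the conditional law of the bootstrap statistic is, in the limit, stochastically at least as large as $\|Z-\Pi_{\mathcal C}Z\|_\Omega^2$. Therefore $\hat c_{1-\alpha}$ is asymptotically no smaller than the $(1-\alpha)$‑quantile of the latter, and $\liminf_N\Pr\{J_N(\theta_N)\le\hat c_{1-\alpha}\}\ge 1-\alpha$. The restriction $\alpha\le\tfrac12$ is inherited from \citetalias{kitamura2013} and is what makes this quantile comparison go in the conservative direction.

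The third step is exactness and uniformity. Exactness is witnessed by any $(\theta,\pi)\in\mathcal F$ with $\pi$ in the relative interior of $\mathcal S(\theta)$: there all inequalities in \eqref{eq:h-test} are strictly slack (this is where the genericity built into Condition~\ref{condition 1} and the class $\mathcal P$ is used), the tightening is inert on the binding set for $N$ large, $\mathcal C^{*}=\mathcal C$ equals the subspace cut out by the equality constraints, and both $J_N$ and the bootstrap statistic converge to the same $\chi^2$‑type law, giving coverage $\to 1-\alpha$; the hypotheses $\tau_N\downarrow 0$ and $\sqrt N\tau_N\uparrow\infty$ are precisely what make the tightening simultaneously asymptotically vanishing here and strong enough in the near‑boundary cases of the previous step. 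Uniformity over $\mathcal F$ then follows by the standard contradiction‑and‑subsequence device: if $\liminf_N\inf_{(\theta,\pi)\in\mathcal F}\Pr\{J_N(\theta)\le\hat c_{1-\alpha}\}$ were not $1-\alpha$, extract $(\theta_{N_k},\pi_{N_k})\in\mathcal F$ along which coverage converges to a different value, pass to convergent subsequences of $(\sqrt{N_k}\lambda_{N_k},\,d(\theta_{N_k}),\,\Sigma_{N_k})$ — possible by compactness of $\Theta$ and of the admissible ranges of these quantities under \refass{ass:proportions}, \refass{ass:sampling} and Condition~\ref{condition 1} — and apply the two preceding steps to reach a contradiction.

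The step I expect to be the main obstacle is the uniform geometry of $\mathcal S_{\tau_N}(\theta)$ underlying the bootstrap analysis: one must show that the restriction‑dependent constriction $\mathcal V_{\tau_N}(\theta)$ of the simplex simultaneously (i) never makes $\mathcal S_{\tau_N}(\theta)$ empty for $\theta\in\Theta_I$, even as $\theta\to\partial\Theta$, (ii) keeps $\mathcal S_{\tau_N}(\theta)\subseteq\mathcal S(\theta)$ so the bootstrap DGP is genuinely conservative, and (iii) converts small‑slack inequalities into binding ones while leaving large‑slack inequalities alone, all with constants that do not depend on $\theta$. The ordinary \citetalias{kitamura2013} tightening is uniform for trivial reasons; here it is the equality constraint $\rho\nu=\theta$ and its interaction with the faces of $\Delta^{|A|-1}$ that forces the more delicate $\theta$‑dependent construction, and verifying its properties uniformly in $\theta$ is the technical heart of the argument — everything else is an adaptation of the corresponding parts of \citetalias{kitamura2013}.
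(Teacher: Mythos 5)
Your overall strategy coincides with the paper's: pass to the $\mathcal{H}$-representation of $\mathcal S(\theta)$ via Weyl--Minkowski, analyze the statistic as a moment-inequality test along drifting sequences, and argue that the $\tau_N$-tightening turns near-binding inequalities into binding ones in the bootstrap population while leaving slack ones alone, concluding with the machinery of \citeasnoun{andrews-soares} and \citetalias{kitamura2013}. The three-step outline (localization, conservative bootstrap comparison, exactness plus subsequence uniformity) matches the architecture of the actual argument.

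However, the step you yourself flag as the ``main obstacle'' is genuinely left open, and it is exactly the part that is new relative to \citetalias{kitamura2013}; the paper closes it with two concrete ingredients your sketch does not supply. First, the tightening is defined in $\nu$-space on the simplex, so one must show that it translates into a \emph{uniform} inward shift of the $\mathcal{H}$-representation inequalities. The paper does this by computing $\mathcal C_{\tau_N}=\{t:Bt\le-\tau_N\phi(\theta)\}$ with $\phi(\theta)=-BA\psi(\theta)$ and verifying, from the explicit restriction-dependent weights $\psi(\theta)$ and the sign structure of $-BA$ established in Lemma 4.1 of \citetalias{kitamura2013}, that $\phi_j(\theta)\ge C>0$ on every nontrivial inequality row, uniformly over $\theta\in\Theta$ (with $\phi_j(\theta)=0$ on the rows that hold trivially). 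Without this, your claim that every inequality with slack $o(\tau_N)$ becomes binding after tightening has no uniform content, and the interaction of the constraint $\rho\nu=\theta$ with the faces of $\Delta^{|A|-1}$ near $\partial\Theta$ is precisely where it could fail. Second, your assertion that $\hat\eta_{\tau_N}$ ``lies on that tightened face, $s_{N,j}=0$'' whenever $\hat\pi$ violates tightened constraint $j$ does not follow: the $\Omega$-projection of an exterior point onto a polytope need not touch every violated face. What is needed, and what the final paragraph of the paper's proof establishes, is the no-obtuse-angle property of the projection onto $\mathcal S_{\tau}(\theta)$ -- the impossibility of a triplet $(a_0,a_1,a_2)$ with $(a_1-a_0)'(a_2-a_0)<0$, carried over from the proof of Theorem 4.2 in \citetalias{kitamura2013} after intersecting with the affine space $\mathcal T(\theta)$. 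This is what guarantees that the selection function $\varphi_N$ has the properties required to invoke Theorem 1 of \citeasnoun{andrews-soares}. These two verifications are the technical heart of the theorem; your proposal correctly locates them but does not carry them out.
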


The proof of \refthm{thm:validity} is in \refappendix{sec:appendix_metrics}.

\section{Empirical Applications}\label{sec:EmpApp}

In this section, we present two separate applications meant to demonstrate how both the deterministic and random versions of our model can be tested and  employed for welfare analysis.

\subsection{Augmented utility model: testing and welfare analysis on Progresa data}\label{sec:progresa}

We apply the deterministic model to the Progresa-Oportunidades data set, a workhorse of the treatment evaluation literature. Progresa was a conditional cash transfer program aimed at poor communities in Mexico. The program was remarkable in that it was rolled out in random order so the causal effect of the cash transfers could be studied. For brevity, we do not describe the program in detail; information on the program is widely available including in the paper we discuss next.

Our application builds on recent work of \citet{attanasio2020} (henceforth \citetalias{attanasio2020}) who analyze whether the program led to changes in the market prices for basic staples: rice, kidney beans, and sugar. This is an important question because the welfare effect of these transfers would clearly depend in part on their impact on prices. While the previous literature had documented that \textit{average} prices were not affected by the program \cite{hoddinott2000}, \AP argue that sellers charge nonlinear prices and that these nonlinear price schedules had changed.

Because treatment was randomized across villages but means-tested at the household level, some households faced a changing price schedule but no shock to their own income. In our study, we focus our attention on these households because we can be more confident that their augmented utility functions are unchanged across the observation periods.  Our objectives are, firstly, to test the augmented utility model and, secondly, to evaluate the welfare impact of price changes using that model.  This data set is well suited for analysis using our deterministic model because its panel structure means that we can study each household separately. Following \citetalias{attanasio2020}, we consider nonlinear prices, which allows us to implement the results in \refsec{sec:nonlinearGAPP}.

The theoretical part of \AP derives the optimal (nonlinear) pricing schedule under the assumption that there is a heterogenous population of households, with each household maximizing a quasilinear utility function, subject to a subsistence constraint.  This constraint stipulates that a household needs a certain minimum number of calories, which can be obtained from either the observed bundle $x$ or the numeraire; given $x$, the minimum amount of the numeraire good needed to meet the calorie threshold is denoted by $\underline{z}(x)$. Thus the household can only choose among those bundles $x$ such that $\psi (x)+\underline{z}(x)\leq M$, where $\psi$ is the price system and $M$ is household wealth. It is worth noting that the augmented utility framework is sufficiently flexible to accommodate this behavior.  Indeed, the  household could be thought of as maximizing an augmented utility function of the modified-quasilinear form
$$U(x,-e)=\widetilde U(x)- \mathbf{K}(e+\underline{z}(x)-M)\, e,$$
where $\mathbf{K}(w)=1$ if $w\leq 0$ and $\mathbf{K}(w)$ is a very large positive number if $w>0$. In this way, any $(x,-e)$ (a bundle and its associated expenditure) that leads to a violation of the subsistence constraint incurs a very large disutility and so will never be chosen.


We work with \citetalias{attanasio2020}'s data and refer to them for a detailed explanation. Compared to their analysis, we restrict ourselves to the narrower definition of village (``locality") because the larger units of analysis (``municipality") may not be contained in either the treatment or the control group.  Also, because we are interested in intertemporal within-village price variation, we estimate separate price schedules for the same village in different waves as opposed to one price schedule (estimated across waves) per village. This necessitates being slightly more permissive about data needs, and we estimate prices for all village-good-wave triples that have $20$ or more (as opposed to $75$ or more) observations. We follow \AP in rejecting data for villages where prices strictly increase with quantity sold and where there is insufficient variation in quantities purchased.

\medskip

We estimate the  price schedule for good $i$ in village $v$ at wave $t$ by applying Ordinary Least Squares to
\begin{equation}\label{eq:tariffs}
	\log (\psi_{vti}(q_{vtih})) = b_{vti0} + b_{vti1}\log(x_{vtih}) + \varepsilon_{vtih}.
\end{equation}
Here $h$ indexes households and $\psi_{vti}(q_{vtih}) =\mathbb{E}[p_{vti}(x_{vtih}) |x_{vtih}]\,x_{vtih}$, where $p_{vti}(x_{vtih})$ is the unit price corresponding to quantity $x_{vtih}$, $\varepsilon$ is measurement error, and the expected value is taken over the empirical distribution of reported unit prices corresponding to the same quantity purchased of good $i$ in village-wave $(v, t)$. This is exactly Equation (15) in \AP except for being estimated at a less aggregated level.

\medskip

We test GAPP on households that:
\begin{itemize}
	\item were not eligible for Progresa transfers,\footnote{Specifically, we include households that were not eligible for Progresa in any of the survey waves and households that were eligible in 2003 {\em only}; for the latter, we exclude their 2003 observations from the test. Eligibility increased dramatically in the 2003 wave, so totally excluding those households would lead to many fewer observations.}
	\item were observed for more than one wave in treated villages for which we estimate price schedules for each of rice, kidney beans, and sugar,
	\item purchased at least one of these goods.
\end{itemize}
In our final sample, this leaves us with $2488$ households in $177$ villages.\footnote{For 554 of these households we have two observations, for 840 households we have three, for 934 households we have four, and for 160 households we have five.  There are so few with five observations because many households were enrolled into the program in the final wave and thus removed from our sample.}

We emphasize that GAPP is not vacuously satisfied on these data. Recall that GAPP cannot be violated when two price systems $\psi$, $\psi'$ are ranked, in the sense that $\psi(x)\geq\psi'(x)$ for all $x\in \mathbb{R}^L_+$. Of the $20556$ possible combinations of pairs of waves encountered by  households in the data, about $4\%$ have this feature, and only $20$ out of $2488$ households exclusively face such price pairs and therefore satisfy  GAPP vacuously.  Nonetheless, $83\%$ of households pass the GAPP test. Most violations were small in the sense of the rationality index $\vartheta$ (defined in \refsec{sec:index}) being close to $1$: fewer than $1\%$ of households were below $.9$, and fewer than $4\%$ were below $.95$.

\begin{table}
	\begin{center}
		\begin{tabular}{ccccccc}
			& & \textbf{10/98} & \textbf{03/99} & \textbf{11/99} & \textbf{11/00} & \textbf{2003}\\[1.5ex]
			\textbf{10/98} &&  & .035 & .024 & .006 & 0\\
			\textbf{03/99} && .913 &  & .198 & .052 & .015\\
			\textbf{11/99} && .936 & .686 &  & .105 & .034\\
			\textbf{11/00} && .981 & .914 & .847 &  & .240\\
			\textbf{2003} && 1 & .980 & .927 & .520 &
		\end{tabular}
		\caption{Fraction of GAPP rationalizable consumers revealed preferring the row wave to the column wave.}
		\label{table:RP}
	\end{center}
\end{table}
\begin{table}
	\begin{center}
		\begin{tabular}{rcccccc}
			& & \textbf{03/99} & \textbf{11/99} & \textbf{11/00} \\[1.5ex]
			\textbf{75th percentile} && 5.36  & 7.26 & 11.65 \\
			\textbf{Median} && 3.27 & 4.66 & 6.98   \\
			\textbf{25th percentile} && 1.58 & 2.44 & 4.12 			
		\end{tabular}
		\caption{Lower bound of the compensating variation, with 10/98 as the base}
		\label{table:CV2}
	\end{center}
\end{table}

We carried out some illustrative welfare analysis, the results of which are displayed in Tables \ref{table:RP} and \ref{table:CV2}. \reftab{table:RP} displays the fractions of GAPP-compliant households that reveal prefer a given wave to another wave. Specifically, each cell in the table corresponds to the fraction of GAPP-rationalizable consumers who reveal prefer (directly or indirectly) the price system in the row wave to the price system in the corresponding column wave.\footnote{Note that the (indirect) revealed preference relation $\succeq^*_p$ uses demand information at {\em all} waves in each binary comparison; see the definition of $\succeq^*_p$ in \refsec{sec:nonlinearGAPP}.}  Notice that the data indicates a strong tendency to prefer price systems in later waves. For example, 91.3\% of households reveal prefer prices in 03/99 to those in 10/98; the same is true even more strongly when 10/98 is compared against later waves.

To have a sense of the scale of this welfare improvement over time, we calculate, for each household, the lower bound on the compensating variation, with the price system faced by the household at 10/98 as the base.\footnote{The formula for the lower bound when prices are nonlinear is in \refsec{sec:morecomvar}.}  These values are then ranked.  The results are displayed in Table \ref{table:CV2}.  Since more than 90\% of households reveal prefer (price systems at) subsequent waves to 10/98, the lower bound of the compensating variation must be positive for more than 90\% of households.  For example, between 03/99 and 10/98, the median compensating variation is 3.27; thus, based on its observed behavior, one could remove 3.27 from this household in 03/99 and still leave it as well off in 03/99 as in 10/98.  Note that the values in this table are not small, given that the household median expenditure in 10/98 on the items considered is 27.48.

These results are consistent with \citetalias{attanasio2020}'s finding that the change in the income distribution induced by Progresa caused a change in sellers' intensity of price discrimination. As a result, poorer households faced higher average prices and wealthier households faced lower ones; since Progresa was means-tested, untreated households fall into the latter category. Thus, the general equilibrium effects of the program could be the reason for the welfare improvements observed in untreated households.

\subsection{RAUM: Testing and welfare analysis on household expenditure data}\label{sec:application_RAUM}

We test the RAUM and conduct welfare analyses on two repeated cross-sectional data sets: the U.K. Family Expenditure Survey (FES) and the Canadian Surveys of Household Spending (SHS). Our aim is to show that the data supports the model and to demonstrate that the estimated welfare bounds are informatively tight.

We first present the analysis for the FES, which is widely used in the nonparametric demand estimation literature (for instance, by \citet{blundell2008}, \citetalias{kitamura2018}, \citet{hoderlein2014}, \citet{adams2020}, and \citet{Kawaguchi}). In the FES, about 7000 households are interviewed each year and they report their consumption expenditures in different commodity groups. Following \citet{blundell2008} we derive the real consumption level for each commodity group by deflating it with a price index for that group (which is taken from the annual Retail Prices index).  Again following \citet{blundell2008}, we restrict attention to households with cars and children, leaving us with roughly 25\% of the original data. We implement tests for $3$, $4$, and $5$ composite goods. The coarsest partition of $3$ goods---food, services, and nondurables---is precisely what is examined by \citet{blundell2008} (and we use their replication files).  As in \citetalias{kitamura2018}, we introduce more commodities by first separating out clothing and then alcoholic beverages from the nondurables.

The data set we have is the sample analog of $\mathscr{D}=\{(p^t,\mathring{\pi}^t)\}_{t=1}^T$ as defined in \refsec{sec:RatRAUM}. It is worth reiterating the point that we made at the end of \refsec{sec:RatRAUM}: even though this data set is {\em not} iso-expenditure, we can directly test the RAUM on this data; this contrasts with testing the RUM on this data, which cannot be done directly and must involve a further procedure to estimate an iso-expenditure data set.

We implement the test in blocks of 6 years, i.e., we set $T=6$.  We avoid covering a longer period partly due to the computational demands of calculating $A$ (the matrix of GARP-consistent types; see (\ref{eqn:A_def})),\footnote{That said, new techniques developed in \cite{smeulders2021} have significantly reduced the computational demands of the problem.} but also because a time-invariant distribution of augmented utility functions is only plausible over shorter time horizons, for example because of long term first-order changes to the U.K. income distribution \citep{jenkins2016}.

\reftab{table:FES_Test} displays our results, with columns correspond to different blocks of 6 years and rows contain the values of the test statistic and the corresponding p-values.  The test statistic $J_N$ is defined by (\ref{eqn:test_stat}), with the identity matrix serving as $\Omega$. Notice that for the year block 90-95, the test statistic is zero; this means that the sample distribution $\hat\pi$ satisfies the rationality condition in \refthm{thm:StGAPPTest} exactly. That is, there is a distribution $\nu$ on GARP-consistent types such that $\hat\pi=A\nu$.  Apart from this case, the sample distribution does not exactly satisfy the rationality condition and so the test statistic is strictly positive; nonetheless it is very clear from the p-values that, overall, our model is not rejected by the FES data.

\begin{table}	
	\hspace*{-2.5cm}
	\begin{subtable}{\textwidth}
		\begin{tabular}{ccc|c|c|c|c|c|c|c|c|c}
			\hline
			&  & \multicolumn{10}{c}{Year Blocks} \\
			\cmidrule(l){3-12}
			& 	&	75-80	&	76-81	&	77-82	&	78-83	&	79-84	&	80-85	&	81-86	&	82-87	&	83-88	&	84-89	\\
			\cmidrule(l){3-12}
			\multirow{2}{*}{3 Goods}  & Test Statistic ($J_N$)	&	0.337	&	0.917	&	0.899	&	0.522	&	0.018	&	0.082	&	0.088	&	0.095	&	0.481	&	0.556	\\
			& p-value	&	0.04	&	0.34	&	0.55	&	0.59	&	0.99	&	0.67	&	0.81	&	0.91	&	0.61	&	0.48	\\
			\cmidrule(l){3-12}
			\multirow{2}{*}{4 Goods} & Test Statistic ($J_N$)	&	0.4	&	0.698	&	0.651	&	0.236	&	0.056	&	0.036	&	0.037	&	0.043	&	0.043	&	0.232	\\
			& p-value	&	0.25	&	0.58	&	0.63	&	0.91	&	0.96	&	0.99	&	0.96	&	0.95	&	0.99	&	0.68	\\
			\cmidrule(l){3-12}
			\multirow{ 2}{*}{5 Goods} & Test Statistic ($J_N$)	&	0.4	&	0.687	&	0.705	&	0.329	&	0.003	&	0.082	&	0.088	&	0.104	&	0.103	&	0.144	\\
			& p-value	&	0.3	&	0.66	&	0.68	&	0.88	&	0.999	&	0.96	&	0.79	&	0.85	&	0.9	&	0.83 \\
			\cmidrule(l){3-12}
			\hline \\
		\end{tabular}
	\end{subtable}
	\hspace*{-2.5cm}
	\begin{subtable}{\textwidth}
		\begin{tabular}{ccc|c|c|c|c|c|c|c|c|c}
			\hline
			&  & \multicolumn{10}{c}{Year Blocks} \\
			\cmidrule(l){3-12}
			&	&	85-90	&	86-91	&	87-92	&	88-93	&	89-94	&	90-95	&	91-96	&	92-97	&	93-98	&	94-99	\\
			\cmidrule(l){3-12}
			\multirow{2}{*}{3 Goods}  & Test Statistic ($J_N$)	&	0.027	&	1.42	&	2.94	&	1.51	&	1.72	&	0	&	0.313	&	0.7	&	0.676	&	0.26	\\
			& p-value	&	0.69	&	0.3	&	0.18	&	0.24	&	0.21	&	1	&	0.59	&	0.48	&	0.6	&	0.83	\\
			\cmidrule(l){3-12}
			\multirow{2}{*}{4 Goods}  & Test Statistic ($J_N$)	&	0.227	&	0.025	&	0.157	&	0.154	&	0.004	&	1.01	&	0.802	&	0.872	&	0.904	&	0.604	\\
			& p-value	&	0.48	&	0.96	&	0.8	&	0.73	&	0.97	&	0.21	&	0.31	&	0.57	&	0.65	&	0.74	\\
			\cmidrule(l){3-12}
			\multirow{2}{*}{5 Goods}  & Test Statistic ($J_N$)	&	0.031	&	0.019	&	0.018	&	0.019	&	0.023	&	0.734	&	0.612	&	0.643	&	0.634	&	0.488	\\
			& p-value	&	0.85	&	0.98	&	0.97	&	0.91	&	0.83	&	0.22	&	0.4	&	0.72	&	0.78	&	0.79 \\
			\cmidrule(l){3-12}
			\hline
		\end{tabular}
	\end{subtable}
	\caption{Test Statistics and  p-values for sequences of $6$ budgets of the FES. Bootstrap size is $R=1000$.}
	\label{table:FES_Test}
\end{table}

\begin{table}
	\begin{tabular}{ccc}
		\hline
		Comparison & Estimated Bounds & Confidence Interval \\
		$p^{1976}\succ^*_p p^{1977}$ & $[.150,.155]$ &  $[.13,.183]$ \\
		$p^{1977}\succ^*_p p^{1976}$ & $\{.803\}$ & $[.784,.831]$ \\
		$p^{1979}\succ^*_p p^{1980}$ & $[.517,.530]$ & $[.487,.56]$ \\
		$p^{1980}\succ^*_p p^{1979}$ & $\{.463\}$ &  $[.436,.497]$ \\
		\hline
	\end{tabular}
	\caption{Estimated bounds and confidence intervals for the proportion of consumers who reveal prefer one price to another one in the FES data. Data used are for 1975-1980. Bootstrap size is $R=1000$.}
	\label{table:FES_Bounds}
\end{table}

We also estimated the bounds $[\underline{\mathcal{N}}_{p^t\succeq^*_p p^{t'}}\, ,\, \overline{\mathcal{N}}_{p^{t}\succeq^*_p p^{t'}}]$ (as defined by (\ref{eqn:wel_lb}) and (\ref{eqn:wel_ub})) on the proportion of households that are revealed better off at prices $p^t$ than at prices $p^{t'}$. For brevity, we present a few representative estimates using data for the years 1975-1980 in \reftab{table:FES_Bounds}.  The column `Estimated Bounds' are the bounds obtained by calculating $\mathbb{1}_{p^{t}\succeq^*_p p^{t'}}\, \nu$ from the (not necessarily unique) values of $\nu$ that minimize the test statistic (\ref{eqn:test_stat}).  In two cases this estimate is unique while it is not in the other two cases.  Applying the procedure set out for calculating confidence intervals in \refsec{sec:econ_wel}, we obtain the intervals displayed (which must necessarily contain the estimated bounds). It is worth noting that the width of these intervals is less than .1 throughout, so they are quite informative.\footnote{Note that, even if the true values of the proportion of the population satisfying $p^{t}\succ^*_p p^{t'}$ and $p^{t'}\succ^*_p p^{t}$ are known, they will typically add up to strictly less than 1 because, for part of the population, there will be no revealed preference relation between $p^{t}$ and $p^{t'}$. For example, type 1 consumers in \refeg{eg:StTesting} have no revealed preference relation between $p^{t_1}$ and $p^{t_2}$.}

For our second empirical application using Canadian data, we use the replication kit of \citet{norris2013,norris2015}. Like the FES, the SHS is a publicly available, annual data set of household expenditures in a variety of different categories.  We study annual expenditure in 5 categories that constitute a large share of the overall expenditure on nondurables: food purchased (at home and in restaurants), clothing and footwear, health and personal care, recreation, and alcohol and tobacco. The SHS data is rich enough to allow us to analyze the data separately for the nine most populous provinces: Alberta, British Columbia, Manitoba, New Brunswick, Newfoundland, Nova Scotia, Ontario, Quebec, and Saskatchewan. The number of households in each province-year range from $291$ (Manitoba, 1997) to $2515$ (Ontario, 1997). We use province-year prices indices (as constructed by \citet{norris2015}) and deflate them using province-year CPI data from Statistics Canada to get real price indices.

\begin{table}
	\hspace*{-1cm}
	\begin{tabular}{ccc|c|c|c|c|c|c|c}
		\hline
		Province &  & \multicolumn{8}{c}{Year Blocks} \\
		\cmidrule(l){3-10}
		& & 97-02 & 98-03 & 99-04 & 00-05 & 01-06 & 02-07 & 03-08 & 04-09 \\
		\cmidrule(l){3-10}
		\multirow{2}{*}{Alberta} & Test Statistic ($J_N$) & .07 & 0 & 0 & 0 & 0 & 0 & .003 & 4.65\\
		& p-value & .94 & 1 & 1 & 1 & 1 & 1 & .98 & .04 \\
		\cmidrule(l){3-10}
		\multirow{2}{*}{British Columbia} & Test Statistic ($J_N$) &.89 & .56 & .48 & .07 & .05 & 6.23 & 8.87 & 8.71 \\
		& p-value & .47 & .47 & .98 & .96 & .97 & .05 & .02 & .01 \\
		\cmidrule(l){3-10}
		\multirow{2}{*}{Manitoba} & Test Statistic ($J_N$) & 0 & 0 & 0 & 0 & 0 & 0 & .01 & .01 \\
		& p-value & 1 & 1 & 1 & 1 & 1 & 1 & 1 & 1 \\
		\cmidrule(l){3-10}
		\multirow{2}{*}{New Brunswick} & Test Statistic ($J_N$) & .08 & .05 & 0 & 0 & 0 & .60 & .58 & .57 \\
		& p-value & .94 & .94 & 1 & 1 & 1 & .58 & .79 & .68 \\
		\cmidrule(l){3-10}
		\multirow{2}{*}{Newfoundland} & Test Statistic ($J_N$) & .10 & .32 & .29 & .29 & .38 & 3.08 & 2.30 & 2.08 \\
		& p-value & .85 & .90 & .91 & .87 & .81 & .21 & .35 & .27 \\
		\cmidrule(l){3-10}
		\multirow{2}{*}{Nova Scotia} & Test Statistic ($J_N$) & .05 & .03 & 0 & 0 & 0 & 0 & .93 & 1.02 \\
		& p-value & .97 & .98 & 1 & 1 & 1 & 1 & .69 & .58 \\
		\cmidrule(l){3-10}
		\multirow{2}{*}{Ontario} & Test Statistic ($J_N$) & .064 & .040 & .035 & 0 & 0 & 0 & 0 & 0 \\
		& p-value & .98 & .95 & .91 & 1 & 1 & 1 & 1 & 1 \\
		\cmidrule(l){3-10}
		\multirow{2}{*}{Quebec} & Test Statistic ($J_N$) & .11 & 0 & 0 & 0 & 0 & .51 & .54 & .49 \\
		& p-value & .88 & 1 & 1 & 1 & 1 & .67 & .67 & .65 \\
		\cmidrule(l){3-10}
		\multirow{2}{*}{Saskatchewan} & Test Statistic ($J_N$) & 0 & 0 & 0 & 0 & 0 & .02 & .02 & 0 \\
		& p-value & 1 & 1 & 1 & 1 & 1 & 1 & 1 & 1 \\
		\cmidrule(l){3-10}
		\hline
	\end{tabular}
	\caption{Test Statistics and  p-values for sequences of $6$ budgets of the SHS. Bootstrap size is $R=1000$.}
	\label{table:SHS_Test}
\end{table}

\begin{table}
	\begin{tabular}{ccc}
		\hline
		Comparison & Estimated Bounds & Confidence Interval \\
		$p^{1998}\succ^*_p p^{2001}$ & $\{.099\}$ &  $[.073,.125]$ \\
		$p^{2001}\succ^*_p p^{1998}$ & $\{.901\}$ & $[.875,.927]$ \\
		$p^{1999}\succ^*_p p^{2002}$ & $[.299,.341]$ & $[.272,.385]$ \\
		$p^{2002}\succ^*_p p^{1999}$ & $[.624,.701]$ &  $[.594,.728]$ \\
		\hline
	\end{tabular}
	\caption{Estimated bounds and confidence intervals for the proportion of consumers who reveal prefer one price to another one in the SHS data. Data used are for 1997-2002 in British Columbia. Bootstrap size is $R=1000$.}
	\label{table:SHS_Test}
\end{table}

\reftab{table:SHS_Test} displays the test statistics and associated p-value for each province and every 6 year block. Compared to the FES data, there are two notable differences. The first is that many more test statistics are exactly zero; that is, the observed choice frequencies are rationalized by the random augmented utility model. The second is that, for a small proportion of year blocks, there are statistically significant positive test statistics (in particular, the last three columns for British Columbia). Nonetheless, the p-values taken together do not reject the model if multiple testing is taken into account; for example, step-down procedures would terminate at the first step (that is, Bonferroni adjustment). Finally, we can also estimate the proportion of the population with a revealed preference for one year's prices over another.  We provide an illustration in  \reftab{table:SHS_Test}; notice that the confidence intervals are informative, with a width no greater than 0.15.

\section{Conclusion} \label{sec:conclusion}

We propose a revealed price preference relation that generates a nonparametric ranking of price vectors; a consistency (no-cycles) condition in this relation characterizes an augmented utility model in which consumers get utility from consumption and disutility from expenditure. This model is a natural generalization of quasilinearity and, furthermore, captures some prominent behavioral models of consumption. The model is also flexible enough to accommodate nonlinear prices, discrete choice and other consumption environments.  We develop the theoretical basis for welfare analysis in our model.

We generalize our model to a random utility context which is suitable for welfare analysis using repeated cross-sectional (as opposed to single-agent) data and show how to statistically test this random augmented utility model. A strength of this model is that it can be directly taken to household expenditure data in contrast to the standard random utility model which requires an additional round of estimation to account for the endogeneity of expenditure. We develop novel econometric theory to determine the proportion of consumers who are made better or worse off by a price change. This theory---which derives bounds on linear transforms of partially identified vectors---is a standalone contribution which has broader applications beyond those in this paper.

Finally, we operationalize both the deterministic and random versions of our model in separate applications to single-agent and repeated cross-sectional data. We confirm that our model is supported by data and can be used for meaningful welfare analysis.




\newpage

\appendix

\renewcommand\theequation{A.\arabic{equation}}
\renewcommand\thefigure{A.\arabic{figure}}
\renewcommand\thetable{A.\arabic{table}}
\renewcommand\theproposition{A.\arabic{proposition}}
\renewcommand\thelemma{A.\arabic{lemma}}
\renewcommand\theexample{A.\arabic{example}}

\setcounter{equation}{0}
\setcounter{figure}{0}
\setcounter{proposition}{0}
\setcounter{lemma}{0}
\setcounter{table}{0}

\renewcommand{\thesection}{A.\arabic{section}}
\renewcommand\thetheorem{A.\arabic{theorem}}
\numberwithin{theorem}{section}

\begin{center}
* * * * * * \\
ONLINE APPENDIX\\
* * * * * *
\end{center}\medskip

\section{GAPP and GARP}\label{sec:GAPPGARP}

In this section, we first state and explain Afriat's Theorem.  After that we cover a number of topics on GAPP and GARP and their relationship: augmented utility functions that lead to both properties holding in a data set (\refsec{sec:modGPGP});  demand predictions at out-of-sample prices under GAPP and under GARP (\refsec{sec:DDGPGP}); and on reconciling differing revealed preference relations under GAPP and GARP (\refsec{sec:welGPGP}).

\subsection{Afriat's Theorem}\label{sec:afriat}

Recall that, given a data set $\mathcal{D}=\{(p^t,x^t)\}_{t=1}^T$, a utility function $\widetilde{U}:\Re_+^L\to \Re$ is said to rationalize $\mathcal{D}$ if, for all $t\in T$, we have $\widetilde U(x^t)\geq \widetilde U(x)$ for all
$x\in\left\{x\in\Re_+^L:\; p^t\cdot x\leq p^t\cdot x^t\right\}$; in other words, $x^t$ is the bundle that maximizes $\widetilde U$ among all bundles that cost $p^t\cdot x^t$ or less. \refAfriat{thm:Afriat} characterizes those data sets that can be rationalized in this sense.   Below is the formal statement of \refAfriat{thm:Afriat} along with some remarks that relate this theorem to results in the paper.

\begin{afriattheorem*}[\citet{afriat1967}]\label{thm:Afriat}
	Given a data set $\mathcal{D}=\{(p^t,x^t)\}_{t=1}^T$, the following are equivalent:
	\begin{enumerate}
		\item $\mathcal{D}$ can be rationalized by a locally nonsatiated utility function.
		\item $\mathcal{D}$ satisfies GARP.
		\item $\mathcal{D}$ can be rationalized by a strictly increasing, continuous, and concave utility function.
	\end{enumerate}
\end{afriattheorem*}

\noindent {\sc Remark 1.}\, That (1) implies (2) is clear, given the definition of GARP (see \refsec{sec:Afrsection} in the main paper). The substantive part of \refAfriat{thm:Afriat} is the claim that (2) implies (3).  Standard proofs (see, for instance, \citet{fostel2004} or \citet{quah2014}) work by showing that a consequence of GARP is that there exist numbers $\phi^t$ and $\lambda ^t>0$ (for all $t\in T$) that solve the so-called {\em Afriat inequalities}
\begin{equation}\label{Afri}
\phi^{t'} \leq \phi^{t}+ \lambda^{t}\, p^{t} \cdot (x^{t'}-x^{t})\:\mbox{ for all ${t'}\neq t$.}
\end{equation}
Once this is established, it is straightforward to show that
\begin{equation}\label{AfU}
\widetilde U(x)=\min_{t\in T}\left\{ \phi^t+ \lambda^t\, p^t \cdot (x-x^t)\right\}
\end{equation}
rationalizes $\mathcal D$, with the utility of the observed consumption bundles satisfying $\widetilde U(x^t)=\phi^t$.  The function $\widetilde U$ is the lower envelope of a finite number of strictly increasing affine functions, and so it is strictly increasing, continuous, and concave. A remarkable feature of this theorem is that while GARP follows simply from local nonsatiation of the utility function, it is nonetheless sufficient to guarantee that $\mathcal D$ is rationalized by a utility function with significantly stronger properties. Our results \refthm{thm:GAPP} and \refthm{thm:GAPP_nonlinear} share this feature.

\medskip

\noindent {\sc Remark 2.}\,  To be precise, GARP guarantees that there is preference $\succsim$ (i.e., a complete, reflexive, and transitive binary relation) on ${\mathcal X}$ that extends the (potentially incomplete) revealed preference relations $\succeq^*_x$ and $\succ^*_x$ in the following sense: if $x^{t'}\succeq_x^* x^t$, then $x^{t'}\succsim x^t$ and if $x^{t'}\succ_x^* x^t$ then $x^{t'}\succ x^t$. One could then proceed to show that, for \textit{any} such preference $\succsim$, there is in turn a utility function $\widetilde U$ that rationalizes $\mathcal D$ and extends $\succsim$ (from $\mathcal{X}$ to $\Re^L_{+}$) in the sense that $\widetilde U(x^{t'})\geq (>) \widetilde U(x^t)$ if $x^{t'}\succsim (\succ) x^t$ (see \citet{quah2014}). This has implications on the inferences one could draw from the data.  If $x^{t'}\not\succeq_x^* x^t$ (or if $x^{t'}\succeq_x^* x^t$ but $x^{t'}\not\succ^*_x x^t$) then it is {\em always possible} to find a preference extending the revealed preference relations such that $x^t\succ x^{t'}$ (or $x^{t'}\sim  x^{t}$ respectively).\footnote{We use $x^{t'}\sim x^{t}$ to mean that $x^{t'}\succsim x^t$ and $x^{t}\succsim x^{t'}$.}   Therefore, $x^{t'}\succeq_x^* (\succ^*_x) x^t$ if and only if every locally nonsatiated utility function rationalizing $\mathcal D$ has the property that $\widetilde U(x^{t'})\geq (>) \widetilde U(x^t)$.

Similarly, we show in \refprop{prop:p_welfare} that the revealed price preference relation contains the most detailed information for welfare comparisons in our model.
\medskip

\noindent {\sc Remark 3.}\, A feature of \refAfriat{thm:Afriat} that is less often remarked upon is that in fact $\widetilde U$, as given by (\ref{AfU}), is well-defined, strictly increasing, continuous, and concave on the domain $\Re^L$, rather than just the positive orthant $\Re_+^L$.  Furthermore,
\begin{equation} \label{eqn:iU2}
x^t \in \argmax_{\left\{x\in\Re^L:\; p^t \cdot x\leq p^t\cdot  x^t\right\}} \widetilde{U}(x)\qquad \text{for all } t\in T.
\end{equation}
In other words, $\widetilde{U}$ can be extended beyond the positive orthant and $x^t$ remains optimal under $\widetilde{U}$ in the set
$\left\{x\in\Re^L:\; p^t \cdot x\leq p^t\cdot  x^t\right\}$. (Compare \eqref{eqn:iU2} with \eqref{eqn:oriU}.) We utilize this feature when we apply \refAfriat{thm:Afriat} in our proof of \refthm{thm:GAPP}.

\subsection{Models that satisfy both GAPP and GARP}\label{sec:modGPGP}

Suppose that a data $\mathcal{D}=\{(p^t,x^t)\}_{t+1}^T$ is collected from a consumer who is maximizing an augmented utility function of the form
\begin{equation}\label{formform}
U(x,-e)=h(\widetilde U(x),-e),
\end{equation}
where $h$ is strictly increasing (in both its arguments) and $\widetilde U:\Re^L_+\to\Re$ is strictly increasing.  In this case, obviously the data set obeys GAPP, but it must also obey GARP, because if $x^t$ maximizes $U$ then $x^t$ also maximizes $\widetilde U$ in the set $\{x\in\Re^L_+:p^t\cdot x\leq p^t\cdot x^t\}$.  Thus GAPP and GARP are not mutually exclusive properties and to say that a data set satisfies one is not to say that it violates the other; depending on the issue being studied, the analyst could exploit GAPP, or GARP, or perhaps even both in conjunction.

An interesting question worth investigating is the characterization of those data sets $\mathcal{D}$ generated by consumers who maximize an augmented utility function of the form (\ref{formform}).  Such a characterization must involve a property stronger than both GAPP and GARP; indeed, related work that characterizes rationalization by weakly separable preferences in the context of the constrained-maximization model (see \citet{quah2014}) suggests that rationalization by an augmented utility function of the form (\ref{formform}) will involve a property {\em strictly} stronger than the combination of GAPP and GARP.  A special case of (\ref{formform}) is, of course, the quasilinear form, where $U(x,-e)=\widetilde U(x)-e$.  In this case, a full characterization is known and the rationalizing property is sometimes referred to as the {\em strong law of demand} (see \citet{brown2007}); obviously the strong law of demand implies both GAPP and GARP.

In our analysis of the Progresa data reported in \refsec{sec:progresa}, we find that 2061 out of 2488 houesholds pass GAPP (83\%),  2375 households pass GARP (95\%), and 35 households (a bit more than 1\%) fail both tests.  Interestingly, 1983 households (80\%) pass both GAPP and GARP, which is suggestive (but not conclusive) evidence that a very large proportion of households from the Progresa data could be rationalized by an augmented utility function of the form (\ref{formform}).

\subsection{Comparing demand predictions under GAPP and GARP} \label{sec:DDGPGP}

Suppose a data set $\mathcal{D}=\{(p^t,x^t)\}_{t=1}^T$ obeys GARP.  Then we know from Afriat's Theorem that there is a utility function $\widetilde U:\Re^L_+\to\Re$ for which $x^t$ is constrained optimal, for all $t$.  What could this model tell us about the demand at some price $\hat p$ that is not among the observed prices?  In this model, the predicted demand also depends on the level of total expenditure on the observed goods.  Suppose the expenditure is required to be some $w>0$; then the predicted demand will be those bundles $x$ with $\hat p\cdot x=w$ that are compatible with the model when combined with $\mathcal{D}$.  By Afriat's Theorem, this is means that $x$ is a predicted demand if and only if the following conditions are satisfied: $\hat p\cdot x=w$ and the data set $\mathcal{D}\cup \{(\hat p,x)\}$ obeys GARP.

Now suppose that $\mathcal{D}=\{(p^t,x^t)\}_{t=1}^T$ also obeys GAPP. Then we know it is also compatible with the augmented utility model and we could ask what the augmented utility model would say about demand at the price $\hat p$.  This is equivalent to identifying bundles $x$ such that  $\mathcal{D}\cup \{(\hat p,x)\}$ obeys GAPP.  Since $\mathcal{D}\cup \{(\hat p,x)\}$ obeys GAPP if and only if $\mathcal{D}\cup \{(\hat p,\lambda x)\}$ obeys GAPP for any $\lambda>0$ (see \refsec{sec:GAPPvsGARP}), we know that {\em the set of predicted demands at $\hat p$ forms a cone}.

Not surprisingly, these two models will typically have different predictions, even at the same expenditure level $w>0$.  To illustrate this, consider the following example.

\begin{example} \label{ex:yat}
Suppose $\mathcal{D}$ consists of the single observation $p^1=(1,1)$ and $x^1=(1,1)$.  What is the predicted demand at $\hat p=(1/4,3/2)$?   We study the predictions under the constrained-optimization model, with and without imposing homotheticity on the utility function, and the augmented utility model.  \medskip

Consider first the constrained-optimization model.\,  (a)\, Suppose that $w<\hat p\cdot x^1=7/4$; the line of points/bundles incurring this level of expenditure is depicted by $B'$ in \reffig{fig:CF_GARP}. In this case, any bundle with $\hat p\cdot x=w$ will {\em not} be revealed preferred to $x^1$ and so $x$ can be any bundle in gray shaded area without violating GARP.\, (b)\, Now suppose $w\geq \hat p\cdot (0,2)= 3$; the bundles with $\hat p\cdot x=w$ is depicted as $B'''$ in \reffig{fig:CF_GARP}. Then if $x\cdot\hat p=w$, we have $x\cdot p^1>2$. In other words, $x^1$ will never be revealed preferred to $x$. Once again, $x$ can be any bundle in the red shaded area (that extends indefinitely towards the north east) without GARP being violated.\, (c)\, Lastly, we turn to the case where $w\in [7/4,3)$; a line with bundles satisfying this property is $B''$. Then any bundle satisfying $\hat p\cdot x=w$ will be revealed preferred to $x^1$. So GARP requires that $x^1$ is {\em not} revealed preferred to $x$, that is, $p^1\cdot x>p^1\cdot x^1=2$ and therefore, all bundles in the blue shaded area will not violate GARP.

The shaded area in \reffig{fig:CF_GARP} gives the predicted demands at $\hat p$ using GARP.\medskip

\begin{figure}[h]
	\centering
	\begin{subfigure}[b]{\linewidth}
		\centering
		\includegraphics[trim=2.2cm 10.3cm 2.3cm 9.6cm,clip=true,scale=.7,page=1]{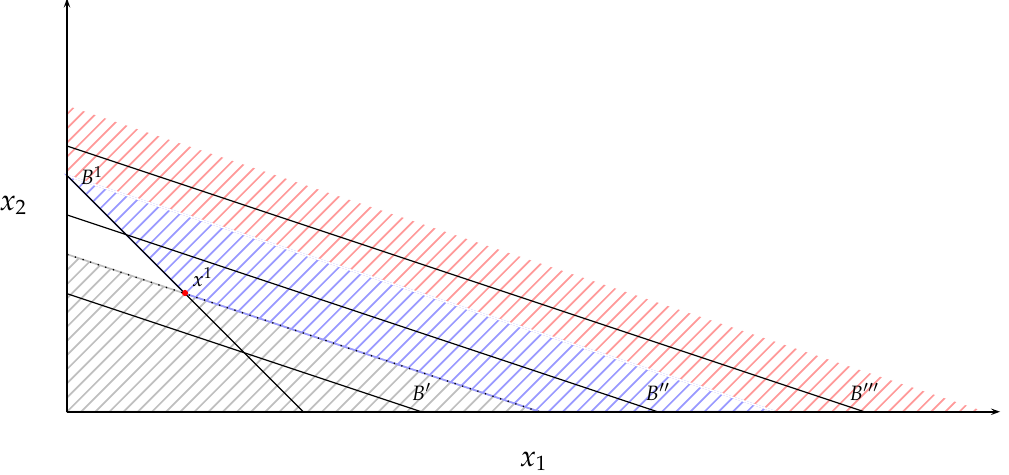}
		\caption{Counterfactuals using GARP}
		\label{fig:CF_GARP}
	\end{subfigure}\\
	\begin{subfigure}[b]{.49\linewidth}
		\includegraphics[trim=3.95cm 10.3cm 5.1cm 8.3cm,clip=true,scale=.7,page=2]{AppendixFigures.pdf}
		\caption{Counterfactuals using HARP}
		\label{fig:CF_HARP}
	\end{subfigure}
	\begin{subfigure}[b]{.49\linewidth}
		\includegraphics[trim=3.95cm 10.3cm 5.1cm 8.3cm,clip=true,scale=.7,page=3]{AppendixFigures.pdf}
		\caption{Counterfactuals using GAPP}	
		\label{fig:CF_GAPP}
	\end{subfigure}
	\caption{Counterfactuals with different consumption models}
	\label{fig:CF_HARP_GAPP}
\end{figure}

What happens to the predictions of the constrained-maximization model when the utility function is required to be homothetic?  It is well known that homothetic utility functions generate demand that is linear in cones. Therefore, for any $x\in\Re^2_+$, the data set $\{(p^1,x^1),(\hat p,x)\}$ can be rationalized (in the constrained-maximization sense) by a homothetic utility function if and only if $\{(p^1,x^1),(\hat p,\lambda x)\}$ can also be rationalized in this sense, for any $\lambda>0$. In other words, as in the augmented utility model, the set of predicted demands forms a cone.

The characterization of data sets that are constrained-optimal according to some homothetic preference is given in \citet{varian1983}, where the precise condition is known as the {\em homothetic axiom of revealed preference} or HARP, for short. In our simple case, to guarantee that  $\{(p^1,x^1),(\hat p,\lambda x)\}$ satisfies HARP, we set $w=\hat p\cdot x^1$ and consider the bundles with $\hat p\cdot x=w$; the bundles at this expenditure level are depicted by $\widetilde B$ in \reffig{fig:CF_HARP}.  At this expenditure level, GARP requires that $x$ satisfies $p^1\cdot x>p^1\cdot x^1$ and, for any such $x$, we have $\{(p^1,x^1),(\hat p,\lambda x)\}$ satisfying HARP; in other words, the set of predicted demands is the cone generated by these bundles of $x$.  This cone is depicted by the shaded region in \reffig{fig:CF_HARP}. \medskip

In the case of the augmented utility model, recall that if $x$ satisfies $\hat p\cdot x=p^1\cdot x^1=2$, then $\{(p^1,x^1),(\hat p,x)\}$ satisfies GAPP if and only if it satisfies GARP (see \refprop{prop:GAPPscaling}). The budget line with the property that $\hat p\cdot x=2$ is $\hat{B}$ in \reffig{fig:CF_GAPP} and, in this case, GARP (equivalently, GAPP) requires that $p^1\cdot x>p^1\cdot x^1=2$.  The shaded area gives the predicted demands at $\hat p$.  Notice that the cone in this case contains the cone in \reffig{fig:CF_HARP}, which is consistent with the fact that HARP is a stronger property than GAPP.  Furthermore, the predicted demands under GAPP is neither a subset nor a superset of that under GARP, which is again unsurprising given that these two properties are not comparable.
\end{example}

\subsection{Revealed preferences under GAPP and GARP} \label{sec:welGPGP}

Both GARP and GAPP forbids the existences of strict cycles over revealed preference relations: in the case of GARP, the revealed preference relation is defined over bundles and in  the case of GAPP it is defined over prices.  It is entirely possible for these revealed preference relations to disagree with each other; this occurrence should not be thought of as strange, nor is it an indication that one model is better of worse compared to the other. The two conclusions apply to different objects and either, or both, of them could be interesting to the analyst.


To be precise, suppose that a data $\mathcal{D}=\{(p^t,x^t)\}_{t=1}^T$ is collected from a consumer who is maximizing an augmented utility function of the form (\ref{formform}).   Such a data set will obey both GAPP and GARP.  It is possible for the price $p^t$ to be strictly revealed preferred to $p^s$ (whether directly or indirectly) and for the bundle $x^s$ to be revealed strictly preferred to $x^t$.  If this occurs, is the agent better off in observation $t$ or in observation $s$? The fact that $p^t$ is revealed strictly preferred to $p^s$ means that
$$U(x^t,-p^t\cdot x^t)> U(x^s,-p^s\cdot x^s)$$
while the fact that $x^s$ is revealed strictly preferred to $x^t$ means that
$$\widetilde U(x^t)<\widetilde U(x^s).$$
In other words, the consumer's augmented utility is higher in observation $t$ than in observation $s$, even though her sub-utility on the observed bundles is lower in observation $t$; these two phenomena are not mutually exclusive. \medskip

Another observation worth making is that it is sometimes possible to conclude that an out-of-sample price $\hat p$ is superior to some in-sample price $p^{t_1}$ observed in $\mathcal{D}$, even though one has no inkling what the demand will be at $\hat p$. Indeed, $\hat p$ is revealed preferred to $p^{t_1}$ whenever $\hat p\cdot x^{t_1}\leq p^{t_1}\cdot x^{t_1}$ (and, more generally, this relation between $\hat p$ and some other in-sample price $p^t$ can be extended via transitive closure). It follows that at the (unobserved) optimal bundle at $\hat p$, which we denote by $\hat x$, we must have
$$U(\hat x,-\hat p\cdot \hat x)\geq U(x^{t_1},-\hat p\cdot x^{t_1})>U(x^{t_1},-p^{t_1}\cdot x^{t_1}).$$
This is true even though, as we know from \refsec{sec:DDGPGP}, the predicted demand at $\hat p$ under the augmented utility model can be an arbitrarily small or large bundle.  On the other hand, without knowing the agent's expenditure level at $\hat p$, it is impossible to tell if the sub-utility $\widetilde U(\hat x)$ is greater or lower than $\widetilde U(x^{t_1})$. Put another way, while GAPP may allow the observer to rank $\hat p$ with $p^{t_1}$, it is impossible to rank the subutility of the demand bundle at these two observation using GARP, without some information or assumption on the expenditure level at $\hat p$.

\begin{example}
Suppose $\mathcal{D}$ consists of two observations,
\begin{eqnarray*}
(p^{t_1},x^{t_1})&=&((2,2),(2,2))\:\mbox{ and}\\
(p^{t_2},x^{t_2})&=&((1,1),(1,1)).
\end{eqnarray*}
It is straightforward to check that this data set can be generated by a consumer maximizing
$$U(x,-e)=\widetilde U(x)-f(e),$$
for strictly increasing functions $\widetilde U$ and $f$. Clearly, $p^{t_2}$ is revealed preferred to $p^{t_1}$ and $x^{t_1}$ is revealed preferred to $x^{t_2}$.  In this case, the consumer's augmented utility is higher at $t_2$ compared to $t_1$, even though her sub-utility on the observed goods is lower at $t_2$ compared to $t_1$.

Now suppose the data consists of just the observation $(p^{t_1},x^{t_1})$.  Obviously, we can still conclude that the consumer prefers $\hat p=(1,1)$ to $p^{t_1}$ and derives greater augmented utility from $\hat p$ than from $p^{t_1}$. However, nothing can be said about the consumer's subutility without further information on expenditure.   If the expenditure is lower than the expenditure at $t_1$, which is 8, then the subutility achieved at $\hat p$ must be lower than the subutility of $x^1$ and if the expenditure is higher than 8, then the sub-utility achieve must be lower than that of $x^{t_1}$.
\end{example}

\section{Proof of \refprop{prop:p_welfare}}   \label{sec:pricewelfare}
	
\noindent (1)  We have already shown the `only if' part of this claim, so we need to show the `if' part holds. From the proof of \refthm{thm:GAPP}, we know that for a large $M$, it is the case that $p^{t}\succeq_p p^{t'}$ if and only if $(x^t,M-p^t\cdot x^t) \succeq_x (x^{t'},M-p^{t'}\cdot x^{t'})$ and hence $p^{t}\succeq_p^* p^{t'}$ if and only if $(x^t,M-p^t\cdot x^t) \succeq_x^*  (x^{t'},M-p^{t'}\cdot x^{t'})$.  If $p^{t}\not\succeq_p^* p^{t'}$, then $(x^t,M-p^t\cdot x^t) \not\succeq_x^*  (x^{t'},M-p^{t'}\cdot x^{t'})$ and hence there is a utility function $\widetilde U:\Re_{+}^{L+1}\rightarrow \Re$ rationalizing the augmented data set $\widetilde D$ such that $\widetilde U(x^t,M-p^t\cdot x^t)<\widetilde U(x^{t'},M-p^{t'}\cdot x^{t'})$ (see Remark 2 in \refsec{sec:afriat}).  This in turn implies that the augmented utility function $U$ (as defined by (\ref{eq:UoX})), has the property that $U(x^t,-p^t\cdot x^t)<U(x^{t'},-p^{t'}\cdot x^{t'})$ or, equivalently, $V(p^t)<V(p^{t'})$.\medskip
	
\noindent (2)  Given part (1), we need only show that if $p^{t}\succeq_p^* p^{t'}$ but $p^{t}\not\succ_p^* p^{t'}$, then there is some augmented utility function $U$ such that $U(x^t,-p^t\cdot x^t)=U(x^{t'},-p^{t'}\cdot x^{t'})$.  To see that this holds, note that if $p^{t}\succeq_p^* p^{t'}$ but $p^{t}\not\succ_p^* p^{t'}$, then $(x^t,M-p^t\cdot x^t) \succeq^*_x  (x^{t'},M-p^{t'}\cdot x^{t'})$ but $(x^t,M-p^t\cdot x^t) \not\succ^*_x  (x^{t'},M-p^{t'}\cdot x^{t'})$. In this case there is a utility function $\widetilde U:\Re_{+}^{L+1}\rightarrow \Re$ rationalizing the augmented data set $\widetilde D$ such that $\widetilde U(x^t,M-p^t\cdot x^t)=\widetilde U(x^{t'},M-p^{t'}\cdot x^{t'})$.  This in turn implies that the augmented utility function $U$ (as defined by (\ref{eq:UoX})) satisfies $U(x^t,-p^t\cdot x^t)=U(x^{t'},-p^{t'}\cdot x^{t'})$ and so $V(p^t)=V(p^{t'})$. \hfill $\blacksquare$

\section{Price indices to deflate nominal expenditure}   \label{sec:indices}

In this section, we build on the discussion in \refsec{sec:deflprices}.  Suppose that, at observation $t$, the consumer chooses $(x^t,y^t)$ to maximize $\widetilde{U}(x,y)$, subject to $p^t \cdot x^t+q^t\cdot y^t\leq M^t$. We are interested in the conditions under which there is an index $k^t$, depending on the prices of the outside goods, such that the deflated data $\{(p^t/k^t,x^t)\}_{t=1}^T$ obeys GAPP (and hence can be rationalized as maximizing an augmented utility function).   In the main paper, we explained that this holds if prices of the outside goods move up and down proportionately (so there is no change to their prices relative to each other). When relative prices {\em are} allowed to change, it is still possible to obtain a deflator $k^t$ guaranteeing that $\{(p^t/k^t,x^t)\}_{t=1}^T$ obeys GAPP, but stronger assumptions will have to be imposed on the utility function $\widetilde{U}$.  We outline a set of sufficient conditions for this to hold.

Suppose that the outside goods are weakly separable from the observed goods, so the overall utility function has the form $\widetilde{U}(x,\tilde{u}(y))$, where $\tilde{u}(y)$ is the sub-utility of the bundle $y$ of outside goods.  Furthermore, we assume that $\tilde{u}$ has an indirect utility $\tilde{v}$ of the following form:
$$\tilde{v}(q,m)=h\left(\frac{m}{f(q)}+b(q),g_1(q),g_2(q),\ldots,g_N(q)\right)$$
where $f$, $b$, $g_1,\ldots,g_N$ are all real-valued functions of the prices $q$ of the outside goods, and $m$ is the expenditure devoted to those goods. This formulation covers a number of standard functional forms used in empirical analysis.  If $\tilde{v}(q,m)=m/f(q)$ where $f$ is one-homogeneous then the preference it generates is homothetic; if $\tilde{v}(q,m)=(m/f(q))+b(q)$, where $b$ is zero-homogeneous, then we obtain the Gorman polar form (see \citet{gorman1961}).  Another example is the form
\begin{equation} \label{weird-v}
\ln \tilde{v}(q,m)=\left\{\left[\frac{\ln m-\ln f(q)}{g_1(q)}\right]^{-1}+g_2(q)\right\}^{-1}
\end{equation}
where $g_1$ and $g_2$ are zero-homogeneous functions.  If $g_2\equiv 0$, the form (\ref{weird-v}) generates the Price Invariant Generalized Logarithmic (PIGLOG) demand system \cite{muellbauer1976}; if further functional form restrictions are imposed on $f$ and $g_1$, we obtain the Almost Ideal Demand System (AIDS) of \citet{deaton1980almost}.  The Quadratic Almost Ideal Demand System (QUAIDS) is a generalization of AIDS that has greater flexibility to model empirically relevant Engel curves (see \citet{banks1997}); it is a special case of (\ref{weird-v}) with functional form restrictions on $f$, $g_1$, and $g_2$.

We assume that the consumer's total wealth $M^t$ varies with $t$ in such a way that, should the consumer devote all of this wealth to the unobserved goods, then her utility is constant. This captures the idea that the consumer's {\em real wealth} (as measured by the indirect utility function $v$) is unchanged across  observations.  While we permit prices of the unobserved goods to change, we require that they change in such a way that  $g_1(q^t)$, $g_2(q^t),\ldots,g_N(q^t)$ remain constant at $\bar g_1,\bar g_2\ldots,\bar g_N$ (respectively) for all $t$.  Given the form of $\tilde{v}$, this implies that $(M^t/f(q^t))+b(q^t)$ is constant for all $t$; let this constant be $C$.  Thus we can think of the consumer as choosing $(x,c)$ to maximize $\widetilde{U}(x,\tilde{v}(c,\bar g_1,\bar g_2,\ldots,\bar g_N))$ subject to $p^t\cdot  x+(c-b(q^t))f(q^t)\leq (C-b(q^t))f(q^t)$.  This inequality can be written as
$$\frac{p^t\cdot x}{f(q^t)}+c\leq C.$$
It follows that the data set $\{(p^t/f(q^t), x^t)\}_{t=1}^T$ will obey GAPP.\vspace{0.1in}

\section{Nonlinear pricing, the rationality index, and related topics}\label{sec:omnibus}

In this section, we formulate and prove a rationalization result that allows for both imperfect rationalization and nonlinear pricing. This result generalizes \refthm{thm:GAPP_nonlinear} and \refthm{thm:GAPP} by allowing for imperfect rationality.  We explain how this result is crucial in helping us to calculate the rationality index (introduced in \refsec{sec:index}) and other variations on that index that provide a measure of departures from exact rationality.  We also use this result to show that the bounds on the compensating and equivalent variations obtained in \refsec{sec:compensation} are tight.

\subsection{$\overline{\vartheta}$-rationalization}\label{sec:ultimate}

We are in the setting of \refsec{sec:nonlinearGAPP}.  The consumer chooses her consumption from the space $X\subseteq \Re_+^L$.  A price system is a map $\psi: X\to\mathbb{R}_+$, where $\psi (x)$ is the cost of purchasing $x\in X$. Let $\overline{\vartheta}=(\vartheta^1,\vartheta^2,\dots,\vartheta^T)\in (0,1]^T$.  An augmented utility function $U:X\times\mathbb{R}_+\to\mathbb{R}$ provides a  $\overline{\vartheta}$-rationalization of a data set $\mathcal{D}=\{(\psi^t,x^t)\}_{t=1}^T$ if, at each observation $t$,
$$U(x^t,-\psi^t(x^t))\geq U(x,-(\vartheta^{t})^{-1}\psi^t(x))\:\mbox{ for all $x\in X$.}$$
Note that this definition of $\overline{\vartheta}$-rationalization generalizes the notion introduced in \refsec{sec:index}, which can be thought of as the special case where $\vartheta^t=\vartheta^{t'}$ for all $t$, $t'\in T$. The context here is also more general since we allow for nonlinear pricing (as introduced in \refsec{sec:nonlinearGAPP}).  Obviously, if a data set can be exactly rationalized then it is $\overline{\vartheta}$-rationalized with $\overline{\vartheta} =(1,1,\ldots,1)$; note also that if a data set can be $\overline{\vartheta}$-rationalized then it can also be $\overline{\vartheta} '$-rationalized for $\overline{\vartheta}'<\overline{\vartheta}$.   A consumer whose observations cannot be exactly rationalized but can be $\overline{\vartheta}$-rationalized for some $\overline{\vartheta}<(1,1,\ldots, 1)$ exhibits limited rationality in the sense discussed in \refsec{sec:index}.

The calculation of the rationality index hinges on our ability to ascertain whether a data set $\mathcal{D}$ has a $\overline{\vartheta}$-rationalization for a given $\overline{\vartheta}$. It is possible to characterize those data sets that can be $\overline{\vartheta}$-rationalized using a modified version of the GAPP test, as we now explain.

Let $\overline{\vartheta}\in (0,1]^T$. Define the relations $\succeq_{p,\overline{\vartheta}}$ and $\succ_{p,\overline{\vartheta}}$ in the following way:
\begin{quote}
$\psi^{t'}\succeq_{p,\overline{\vartheta}} \psi^{t}$ if $\psi^{t'}(x^t)\leq \vartheta^{t'} \psi^t(x^t)$ and $\psi^{t'}\succ_{p,\overline{\vartheta}} \psi^{t}$ if $\psi^{t'}(x^t)< \vartheta^{t'} \psi^t(x^t)$.
\end{quote}
Denote the transitive closure of $\succeq_{p,\overline{\vartheta}}$ by $\succeq^*_{p,\overline{\vartheta}}$.  Obviously these definitions generalize the ones given for revealed preference relations over prices provided in \refsec{sec:nonlinearGAPP}.

\medskip

The data set $\mathcal{D}$ obeys {\em $\overline{\vartheta}$-GAPP} if
\begin{quote}
{\em there do not exist observations $t,t'\in T$ such that $\psi^{t'}\succeq^*_{p,\overline{\vartheta}} \psi^{t}$ and $\psi^{t}\succ_{p,\overline{\vartheta}} \psi^{t'}$.}
\end{quote}
The next result states that $\overline{\vartheta}$-GAPP characterizes $\overline{\vartheta}$-rationalization.

\begin{theorem} \label{thm:ultimate}
A data set $\mathcal{D}=\{(\psi^t,x^t)\}_{t=1}^T$ can be $\overline{\vartheta}$-rationalized by an augmented utility function for some $\overline{\vartheta}\in (0,1]^T$ if and only if it satisfies $\overline{\vartheta}$-GAPP.
\end{theorem}\medskip

\noindent {\sc Remark 1.}\,  This theorem states, in particular, that $\mathcal{D}=\{(\psi^t,x^t)\}_{t=1}^T$ can be rationalized by an augmented utility function if and only if it satisfies GAPP, which corresponds to the special case where $\overline{\vartheta}=(1,1,\ldots,1)$. So it covers the first claim in \refthm{thm:GAPP_nonlinear} (the part before ``Furthermore,\ldots") and also the equivalence of statements (1) and (2) in \refthm{thm:GAPP}. For the proof of the second claim in \refthm{thm:GAPP_nonlinear} see the end of this subsection.  Unlike the proof we gave of \refthm{thm:GAPP} in the main body of the paper, our proof of \refthm{thm:ultimate} does not appeal to \refAfriat{thm:Afriat}, though it is clearly inspired by it. In particular, we show that $\overline{\vartheta}$-GAPP implies that there is a solution to a system of linear inequalities (see \reflem{lemma-Af} below), analogous to the so-called Afriat inequalities usually derived in the proof of \refAfriat{thm:Afriat} and then use those inequalities to explicitly construct a piecewise linear augmented utility function that rationalizes the data.
\medskip

\noindent {\sc Remark 2.}\,  Note that checking whether or not $\overline{\vartheta}$-GAPP holds for a given $\overline{\vartheta}$ is computationally undemanding: the relations  $\succeq_{p,\overline{\vartheta}}$ and $\succ_{p,\overline{\vartheta}}$ can be easily constructed; once this has been obtained, we can apply Warshall's algorithm to compute the transitive closure of the revealed preference relations and then check for violations of $\overline{\vartheta}$-GAPP.\medskip

\noindent {\sc Remark 3.}\,  Suppose we impose the mild restriction that every bundle that is an observed choice has a strictly positive value under any of the other price observation, that is, $\psi^{t'}(x^{t})>0$ whenever $\psi^{t'}\neq \psi^{t}$.  Then we can choose sufficiently small $\vartheta>0$ so that $\psi^{t'}(x^t)> \vartheta \psi^t(x^t)$ whenever $\psi^{t'}\neq \psi^{t}$.  If we let $\overline{\vartheta}=(\vartheta,\ldots,\vartheta)$, then $\mathcal{D}$ must obey $\overline{\vartheta}$-GAPP simply because the relation $\succ_{p,\overline{\vartheta}}$ is empty.  Thus every data set is $\overline{\vartheta}$-rationalizable for $\overline{\vartheta}$ sufficiently close to zero.   \bigskip

\noindent {\bf Proof of \refthm{thm:ultimate}.}\, Suppose $\mathcal{D}$ can be $\overline{\vartheta}$-rationalized by an augmented utility function for some $\overline{\vartheta}\in (0,1]^T$.  In that case, if $\psi^{t'}\succeq_{p,\overline{\vartheta}} \psi^{t}$, then $\psi^{t'}(x^t)\leq \vartheta^{t'} \psi^t(x^t)$ and so
\begin{equation}\label{phew}
U(x^{t'},-\psi^{t'}(x^{t'}))\geq U(x^{t},-(\vartheta^{t'})^{-1}\psi^{t'}(x^{t}))\geq U(x^{t},-\psi^{t}(x^t)),
\end{equation}
where the first inequality follows from the (imperfect) optimality of $x^{t'}$ and the second from the property that $U$ is strictly decreasing in expenditure. It follows that if $\psi^{t'}\succeq^*_{p,\overline{\vartheta}} \psi^{t}$, then $U(x^{t'},-\psi^{t'}(x^{t'}))\geq U(x^{t},-\psi^{t}(x^t))$.  Similarly, if $\psi^{t'}\succ_{p,\overline{\vartheta}} \psi^{t}$, then $\psi^{t'}(x^t)<\vartheta^{t'} \psi^t(x^t)$ and we obtain $U(x^{t'},-\psi^{t'}(x^{t'}))>U(x^{t},-\psi^{t}(x^t))$ since the second inequality in (\ref{phew}) will now be strict.  It is then clear that we cannot simultaneously have $\psi^{t'}\succeq^*_{p,\overline{\vartheta}} \psi^{t}$, and $\psi^{t}\succ_{p,\overline{\vartheta}} \psi^{t'}$, which establishes $\overline{\vartheta}$-GAPP.

Conversely, suppose that $\mathcal{D}$ obeys $\overline{\vartheta}$-GAPP.  Then there is a complete preorder $\succsim$ defined on the set $\{p^t\}_{t\in T}$ that extends $\succeq_{p,\overline{\vartheta}}$ and $\succ_{p,\overline{\vartheta}}$ in the sense that such $\psi^{t'}\succsim \psi^t$ if $\psi^{t'}\succeq^*_{p,\overline{\vartheta}} \psi^{t}$ and $\psi^{t'}\succ \psi^t$ if $\psi^{t'}\succ_{p,\overline{\vartheta}} \psi^{t}$, where $\succ$ is the asymmetric part of $\succsim$.   We first prove the following lemma.

\begin{lemma} \label{lemma-Af}
Suppose $\mathcal{D}$ obeys $\overline{\vartheta}$-GAPP and let $\succsim$ be a complete preorder that extends $\succeq_{p,\overline{\vartheta}}$ and $\succ_{p,\overline{\vartheta}}$.  Then there are numbers $\phi^t$ and $\lambda^t>0$ (for $t=1,2,\ldots, T$) with the following properties:
\begin{itemize}
\item[(a)] $\phi^{t'}>\phi^t$ if $\psi^{t'}\succ \psi^t$;
\item[(b)] $\phi^{t'}=\phi^t$ if $\psi^{t'}\sim \psi^t$; and
\item[(c)] $\phi^{t'}\leq \phi^{t}+\lambda^{t}(\psi^{t}(x^{t'})-\vartheta^t\psi^{t'}(x^{t'}))$ for all $t\neq t'$.
\end{itemize}
\end{lemma}\medskip

\noindent {\bf Proof.}\,  Let $z^{ij}=\psi^{i}(x^j)-\vartheta^i\psi^j(x^j)$ for $i,\, j\in T$.  Note that, for $i\neq j$, $z^{ij}<0$ implies that $\psi^i\succ \psi^j$ and $z^{ij}\leq 0$ implies that $\psi^i\succsim \psi^j$. We shall explicitly construct $\phi^t$ and $\lambda^t>0$ that satisfy the required conditions. With no loss of generality, suppose that $\psi^{t+1}\succsim \psi^t$ for $t=1,2,\ldots,T-1$.

First, choose $\phi^1$ to be any number and $\lambda^1$ to be any strictly positive number. Suppose $\psi^2\succ \psi^1$.  Then $\min_{j>1}z^{1j}>0$,   because if $z^{1j'}=\psi^1(x^{j'})-\vartheta^1\psi^{j'}(x^{j'})\leq 0$ for some $j'>1$, then $\psi^1\succsim \psi^{j'}$, which is a contradiction.   So there is $\phi^2$ such that
\begin{equation}\label{new-1}
\phi^1<\phi^2<\min_{j>1}\{\phi^1+\lambda^1z^{1j}\}.
\end{equation}
If $\psi^2\sim \psi^1$ then  $\min_{j>1}z^{1j}\geq 0$ because if $z^{1j'}=\psi^1(x^{j'})-\vartheta\psi^{j'}(x^{j'})<0$ for some $j'>1$, then $\psi^1\succ \psi^{j'}$, which is a contradiction.   Setting $\phi^2=\phi^1$, we obtain
\begin{equation}\label{new-2}
\phi^1=\phi^2 \leq \min_{j>1}\{\phi^1+\lambda^1z^{1j}\}.
\end{equation}
We claim that there is $\lambda^2>0$ such that
$$\phi^1\leq \phi^2+\lambda^2z^{21}.$$
Clearly this inequality holds if $z^{21}\geq 0$.  If $z^{21}= \psi^2(x^1)-\vartheta^2\psi^1(x^1)<0$, then $\psi^2\succ \psi^1$; this implies that  $\phi^1<\phi^2$ and thus the inequality holds for $\lambda^2$ sufficiently small.

We now go on to choose $\phi^3$ and $\lambda^3$.  Since $\psi^j\succsim \psi^i$ for all $j>2$ and $i=1,2$, we obtain $z^{ij}\geq 0$.  Consider two cases: when $\psi^3\succ \psi^2\succsim \psi^1$ and $\psi^3\sim \psi^2\succsim \psi^1$.  In the former case, both $\min_{j>2}z^{1j}>0$ and $\min_{j>2}z^{2j}>0$. Therefore
$$\phi^2<\min_{j>2}\{\phi^2+\lambda^2z^{2j}\}.$$
If $\phi^2=\phi^1$, obviously we also have
$$\phi^2<\min_{j>2}\{\phi^1+\lambda^1z^{1j}\}\mbox{;}$$
this inequality also holds if $\phi^2>\phi^1$ since in that case (\ref{new-1}) holds.  It follows that we can find $\phi^3$ such that
$$\phi^2<\phi^3<\min\left\{\min_{j>2}\{\phi^1+\lambda^1 z^{1j}\},\min_{j>2}\{\phi^2+\lambda^2z^{2j}\}\right\}.$$

We turn to the case where $\psi^3\sim \psi^2\succsim \psi^1$.  It follows from (\ref{new-1}) and (\ref{new-2}) that $\phi^2\leq \min_{j>2}\{\phi^1+\lambda^1z^{2j}\}$.  We also know that $z^{2j}\geq 0$ for all $j>2$.  Therefore, we can choose $\phi^3$ such that
$$\phi^2=\phi^3\leq \min\left\{\min_{j>2}\{\phi^1+\lambda^1 z^{1j}\},\min_{j>2}\{\phi^2+\lambda^2z^{2j}\}\right\}.$$
Now choose $\lambda^3>0$ sufficiently small so that
$$\phi^i\leq \phi^3+\lambda^3z^{3i}\:\mbox{ for $i=1,2$.}$$
Clearly that this inequality holds for any $\lambda^3>0$ if $z^{3i}\geq 0$. If $z^{3i}<0$ then $\psi^3\succ \psi^i$, in which case $\phi^3>\phi^i$ and the inequality will be satisfied for $\lambda^3$ sufficiently small.

Repeating this argument, we choose $\phi^t$ (for $t\leq T-1$) such that if $\psi^t\succ \psi^{t-1}$ then
\begin{equation}\label{new-3}
\phi^{t-1}<\phi^t<\min_{s\leq t-1}\left\{\min_{j>t-1}\{\phi^s+\lambda^sz^{sj}\}\right\}
\end{equation}
and if $\psi^t\sim \psi^{t-1}$ then
\begin{equation}\label{new-4}
\phi^{t-1}=\phi^t\leq \min_{s\leq t-1}\left\{\min_{j> t-1}\{\phi^s+\lambda ^s z^{sj}\}\right\}\mbox{;}
\end{equation}
and $\lambda^t>0$ (for $t=2,3,\ldots,T$) such that
\begin{equation}\label{new-5}
\phi^i\leq \phi^t+\lambda^t z^{ti}\:\mbox{ for $i\leq t-1$.}
\end{equation}
For a fixed $t'$, (\ref{new-3}) and (\ref{new-4}) guarantee that $\phi^{t'}\leq \phi^t+\lambda^t z^{tt'}$ for $t<t'$ while (\ref{new-5}) guarantees that this inequality holds for $t>t'$.  So we have found $\lambda^t$ and $\phi^t$ to obey condition (c), while the first two conditions hold by construction.  \hfill $\blacksquare$\medskip

We now return to the proof that (2) implies (3). Let $\succsim$ be a complete preorder that extends $\succeq_{p,\overline{\vartheta}}$ and $\succ_{p,\overline{\vartheta}}$ and let the numbers $\phi^t$ and $\lambda^t>0$ (for $t=1,2,\ldots, T$) satisfy properties (a) -- (c) in \reflem{lemma-Af}.   Define the function $U:X\times\Re_{-} \to\Re$ by
\begin{equation}\label{newyear}
U(x,-e)=\min_{t\in T}\{\phi^t+\lambda^t(\psi^t(x)-\vartheta^t e)\}.
\end{equation}
This function is an augmented utility function since it is strictly increasing in the last argument. We claim that this function also satisfies the property that, at each $t\in T$,
$$U(x^t,-\psi^t(x^t))\geq U(x,-(\vartheta^t)^{-1}\psi^t(x))\:\mbox{ for all $x\in X$.}$$
Indeed, at a given observation $s$, for any $t\neq s$, we have $\phi^t+\lambda^t(\psi^t(x^s)-\vartheta^t \psi^s(x^s))\geq \phi^s$ by condition (c); furthermore, $\phi^s+\lambda^s(\psi^s(x^s)-\vartheta^s \psi^s(x^s))\geq \phi^s$ since $\lambda^s>0$ and $\vartheta^s\in (0,1]$.  Therefore, $U(x^s,-\psi^s(x^s))\geq \phi^s$. On the other hand, by the definition of $U$,
$$U(x,-(\vartheta^s)^{-1}\psi^s(x))\leq \phi^s+\lambda^s(\psi^s(x)-\psi^s(x))\leq \phi^s.$$
So $U(x^s,-\psi^s(x^s))\geq U(x,-\vartheta^{-1}\psi^s (x))$ for all $x$. \hfill $\blacksquare$
\medskip

The augmented utility function $U$ at the price system $\psi$ induces an indirect utility given by $V(\psi)=\max_{x\in X}U(x,-\psi(x))$.  In the case where GAPP holds and exact rationalization is possible, one could also choose the rationalizing utility function $U$ so that its indirect utility $V$ agrees with any ordering over $\{\psi^t\}_{t=1}^T$ that is consistent with the revealed preference relations. (Note that this feature is also present in Afriat's Theorem; see Remark 2 in \refsec{sec:afriat}.)  The following result is used in \refsec{sec:morecomvar}.

\begin{theorem} \label{thm:ultimate2}
Suppose the data set $\mathcal{D}=\{(\psi^t,x^t)\}_{t=1}^T$ obeys GAPP and let $\succsim$ be a complete preorder on $\{\psi^t\}_{t=1}^T$ that extends $\succeq_{p}$ and $\succ_{p}$.  Then there is an augmented utility function $U:X\times\Re_{-}\to\Re$ that rationalizes $\mathcal{D}$ such that $V(\psi^{t'})=V(\psi^{t})$ if $\psi^{t'}\sim \psi^{t}$ and $V(\psi^t)>V(\psi^t)$ if $\psi^{t'}\succ \psi^t$ (where $\sim$ and $\succ$ are the symmetric and asymmetric parts of $\succsim$).
\end{theorem}\medskip

\noindent {\bf Proof.}\,  From the proof of \refthm{thm:ultimate}, we know that $U(x,-e)$ as given by (\ref{newyear}) (with $\theta^t=1$ for all $t$) rationalizes $\mathcal{D}$. We can then conclude that $V(\psi^t)=U(x^t,-\psi(x^t))=\phi^t$ because $\phi^t\leq \phi^{t'}+\lambda^{t'}(\psi^{t'}(x^t)-\psi^{t}(x^t))$ from part (c) of \reflem{lemma-Af}. Finally, $V$ satisfies the required properties because of (a) and (b) in \reflem{lemma-Af}.\hfill $\blacksquare$
\medskip

We end this subsection with the proof of \refthm{thm:GAPP_nonlinear}; this result is obtained as a corollary of \refthm{thm:ultimate}.
\medskip

\noindent {\bf Proof of \refthm{thm:GAPP_nonlinear}.}\,  Choosing  $\overline{\vartheta}=(1,1,\ldots,1)$, \refthm{thm:ultimate} states, in particular, that $\mathcal{D}=\{(\psi^t,x^t)\}_{t=1}^T$ can be rationalized by an augmented utility function if and only if it satisfies GAPP.  It remains for us to show that, under assumptions (i), (ii), and (iii), this utility function could be extended to one defined on a closed set $Y$ containing $X$ and that is increasing in $x_K$.  We know from the proof of \refthm{thm:GAPP_nonlinear} that the function $U:X\to\Re$ given by
\begin{equation*}
U(x,-e)=\min_{t\in T}\{\phi^t+\lambda^t(\psi^t(x)-e)\}.
\end{equation*}
rationalizes the data (see \ref{newyear}).  It suffices to show that each function $\psi^t$, which is defined on $X$ could be extended to a  continuous function on $Y$ that is strictly increasing in $x_K$, in which case we could correspondingly extend $U$ and the extension would be continuous and strictly increasing in $x_K$ (since $\lambda^T>0$).

That $\psi^t$ admits such an extension is guaranteed by (i), (ii), and (iii). A quick way of arriving at this conclusion is to appeal to Levin's Theorem, which is a version of Szpilrajn's Theorem for closed preorders (see \citet{nishimura2017} for a proof of Levin's Theorem).  Since $\psi^t$ is continuous, it induces a closed preorder $\succsim'$ on $X$ and therefore also on $Y$.\footnote{A preorder $\succsim'$ defined on a set $X$ is closed if $\{(a,b)\in X\times X: a\succsim' b\}$ is a closed subset of $X\times X$.}  For $K\subset L$, let $\geq_K$ be the partial order on $Y$ such that, for $x'$ and $x$ in $\Re^L$, we have $x'\geq_K x$ if $x'_i\geq x_i$ for all $i\in K$ and $x_i'=x_i$ for $i\notin K$. It is straightforward to check that, for any number $M$, the set
$$\{x\in Y:\mbox{there is $\tilde x\in X$ with $\tilde x\geq_K x$ and $M\geq \psi^t(\tilde x)$}\}$$
is a compact set in $Y$. (Recall that $Y$ is closed, contains $X$, and is contained in $\Re^L_+$.) Using this property, one could check that $\succsim''$, defined as the transitive closure of $\succsim'$ and $\geq_K$, is also a closed prorder on $Y$.  Levin's Theorem then guarantees that there is a {\em complete} and closed preorder $\succsim$ on $Y$ that extends $\succsim''$ and has a continuous representation $V:Y\to\Re$. In particular, $V$ must be strictly increasing in $x_K$ and satisfies the following property:  $V(x')\geq (>)\, V(x)$ if $\psi^t(x')\geq (>)\, \psi^t(x)$, for $x'$, $x\in X$. Furthermore, our assumptions guarantee that that $\{V(x):x\in X\}\subseteq\Re$ is a closed set.  These properties guarantee that we could choose a strictly increasing transformation $h$ defined on the range of $V$, i.e., the set $\{V(x):x\in Y\}$, so that $h(V(x))=\psi^t(x)$ for all $x\in X$.  Therefore the function $h\circ V:Y\to\Re$ is a continuous extension of $\psi^t:X\to\Re$ that is strictly increasing in $x_K$.  \hfill $\blacksquare$

\subsection{Rationality indices and their computation}\label{sec:rat-indices}

Given a data set $\mathcal{D}=\{(\psi^t,x^t)\}_{t=1}^T$, we know that it admits a $(\vartheta,\vartheta,\ldots,\vartheta)$-rationalization for some $\vartheta>0$ (see Remark 3 following \refthm{thm:ultimate}).  This guarantees that the {\em rationality index}, given by
\begin{equation*}
\vartheta^*=\sup\{\vartheta\in (0,1]:\mbox{$\mathcal{D}$ has a $(\vartheta,\vartheta,\ldots,\vartheta)$-rationalization}\},
\end{equation*}
is well-defined.  Note that this definition generalizes the definition provided in \refsec{sec:index} of the main paper, which applies to the linear price environment. A data set that can be rationalized exactly has a rationality index of 1 and we could use the closeness of $\vartheta^*$ to 1 as a measure of the data set's closeness to exactly rationality.

Given the characterization of $\overline{\vartheta}$-rationality stated in \refthm{thm:ultimate}, we also have
\begin{equation} \label{RI-GAPP}
\vartheta^*=\sup\{\vartheta\in (0,1]:\mbox{$\mathcal{D}$ satisfies $(\vartheta,\vartheta,\ldots,\vartheta)$-GAPP}\}.
\end{equation}
This identity provides us with a practical way of calculating $\vartheta^*$.  Indeed, $\vartheta^*$ can be obtained through a binary search algorithm that works as follows. We first set the lower and upper bounds on $\vartheta^*$ to be $\vartheta^L=0$ and $\vartheta^H=1$.  We then check (by checking $\overline{\vartheta}$-GAPP) whether the data set passes or fails the test at $\vartheta=(\vartheta^L+\vartheta^H)/2$ (to be precise, at $\overline{\vartheta}=(\vartheta,\vartheta,\ldots,\vartheta)$); if it passes the test, then we update both $\vartheta^*$ and its lower bound to $(\vartheta^L+\vartheta^H)/2$; if it fails the test, then we update $\vartheta^*$ to $\vartheta^L$ and the upper bound on $\vartheta^*$ to $(\vartheta^L+\vartheta^H)/2$. We then repeat the procedure, selecting and testing the new midpoint of the updated lower and upper bounds. The algorithm terminates when the lower and upper bounds are sufficiently close.

There are other plausible variations on the rationality index, based on the way one aggregates $\vartheta^t$ across observations.  Let $F:(0,1]^T\to \mathbb{R}_+$ be any weakly increasing function taking nonnegative values such that $F(1,1,\ldots,1)=1$.  We can then construct a generalized rationality index
$$F^*=\sup\{F(\overline{\vartheta}):\mbox{$\mathcal{D}$ has a $\overline{\vartheta}$-rationalization}\}.$$
The rationality index $\vartheta^*$ corresponds to the case where $F$ is defined by
$$F(\overline{\vartheta})=\min\{\vartheta^1,\vartheta^2,\ldots,\vartheta^T\}.$$
As an alternative to this, one could choose
$$F(\overline{\vartheta})=1-\sqrt{(1-\vartheta^1)^2+(1-\vartheta^2)^2+\ldots +(1-\vartheta^T)^2}\, ,$$
which leads to a measure of rationality based on the sum of square differences from the case of exact rationality (where $\overline{\vartheta}=(1,1,\ldots,1)$).

Computing these generalized rationality indices can be more demanding than computing the (basic) rationality index $\vartheta^*$ since in searching for those values of $\overline{\vartheta}$ that $\overline{\vartheta}$-rationalizes the data and maximizes $F(\overline{\vartheta})$, we would not in general be able to confine ourselves to the the case where $\vartheta^t=\vartheta^{t'}$ for all $t$, $t'$.  In the literature on measuring GARP violations, there are indices, such as the one proposed by \citet{varian1990}, that involve solving a maximization problem with the same mathematical structure. (In that case the problem is to find the best way to break up revealed preference cycles over consumption bundles rather than over price vectors.)  Algorithms that have been devised to compute Varian's index (see \citet{halevy2018} and \citet{polisson2020}) can also be used to compute $F^*$.

\medskip

\subsection{$\overline{\vartheta}$-GAPP and $\overline{\vartheta}$-GARP}\label{sec:thetaGPGP}

We confine our discussion to the environment where prices are linear, so the data set has the form $\mathcal{D}=\{(p^t,x^t)\}_{t=1}^T$.  Let $\overline{\vartheta}\in (0,1]^T$. We say that a utility function $\widetilde{U}:\mathbb{R}^L_+\to\mathbb{R}$ {\em $\overline{\vartheta}$-rationalizes $\mathcal{D}$ in the sense of Afriat} if $\widetilde{U}(x^t)\geq \widetilde{U}(x)$ for all $x\in B^t_{\overline{\vartheta}}$, where
$$B^t_{\overline{\vartheta}}=\{x\in\mathbb{R}^L_+:p^t\cdot x\leq \vartheta^t p^t\cdot x^t\}.$$
$\overline{\vartheta}$-rationalization in this sense admits a characterization similar to the one we gave for $\overline{\vartheta}$-rationalization in the augmented utility model.

Define the relations $\succeq_{x,\overline{\vartheta}}$ and $\succ_{x,\overline{\vartheta}}$ on the set $\{x^t\}_{t=1}^T$ in the following way:
\begin{quote}
$x^{t'}\succeq_{x,\vartheta} x^{t}$ if $p^{t'}\cdot x^t\leq \vartheta^{t'}p^{t'}\cdot x^{t'}$ and $x^{t'}\succ_{x,\vartheta} x^{t}$ if $p^{t'}\cdot x^t<\vartheta^{t'}p^{t'}\cdot 	x^{t'}$
\end{quote}
Denote the transitive closure of $\succeq_{x,\overline{\vartheta}}$ by $\succeq^*_{x,\overline{\vartheta}}$.  Obviously these definitions generalize the ones given for the revealed preference relations over bundles (see \refsec{sec:Afrsection} of the main paper). With these definitions in place, we can also generalize the definition of GARP.  We say that the data set $\mathcal{D}$ obeys {\em $\overline{\vartheta}$-GARP} if
\begin{quote}
{\em there do not exist observations $t,t'\in T$ such that $x^{t'}\succeq^*_{x,\overline{\vartheta}} x^{t}$ and $x^{t}\succ_{x,\overline{\vartheta}} x^{t'}$.}
\end{quote}

It is straightforward to show that $\overline{\vartheta}$-GARP is necessary for the $\overline{\vartheta}$-rationalization of $\mathcal{D}$ (in the sense of Afriat) by a locally nonsatiated utility function $\widetilde{U}:\mathbb{R}^L_+\to\mathbb{R}$.  It is also known (see \citet{halevy2018}) that $\overline{\vartheta}$-GARP is sufficient to guarantee the $\overline{\vartheta}$-rationalization of $\mathcal{D}$ (in Afriat's sense) by a continuous, strictly increasing and concave utility function $\widetilde{U}:\mathbb{R}^L_+\to\mathbb{R}$.\footnote{Indeed, we could obtain this result by modifying our proof of \refthm{thm:ultimate}.  First, $\overline{\vartheta}$-GARP guarantees that there is a complete preorder $\succsim$ on $\{x^t\}_{t=1}^T$ that extends $\succeq_{x,\overline{\vartheta}}$ and $\succ_{x,\overline{\vartheta}}$.  Then, by mimicking the proof of \reflem{lemma-Af}, one could guarantee the existence of numbers $\phi^t$ and $\lambda^t>0$ (for $t=1,2,\ldots, T$) with the following properties:\, (a) $\phi^{t'}>\phi^t$ if $x^{t'}\succ x^t$; (b) $\phi^{t'}=\phi^t$ if $x^{t'}\sim x^t$; and (c) $\phi^{t'}\leq \phi^{t}+\lambda^{t}p^t\cdot (x^{t'}-\vartheta^t x^t)$ for all $t\neq t'$.  The utility function $\widetilde{U}:\mathbb{R}^L_+\to\mathbb{R}$ given by
$$U(x)=\min_{t\in T}\{\phi^t+\lambda^tp^t\cdot (x-\vartheta^t x^t)\}$$
is a continuous, concave, and strictly increasing.  It is straightforward to check that property (c) guarantee that $\widetilde{U}$ rationalizes $\mathcal{D}$ in Afriat's sense.}  By definition, the critical cost efficiency index $c^*$ satisfies
$$c^*=\sup\{\vartheta\in (0,1]:\mbox{$\mathcal{D}$ has a $(\vartheta,\vartheta,\ldots,\vartheta)$-rationalization in the sense of Afriat}\}$$
and since $\overline{\vartheta}$-rationalization in Afriat's sense can be characterized by $\overline{\vartheta}$-GARP, we obtain
\begin{equation}\label{CCEI-GARP}
c^*=\sup\{\vartheta\in (0,1]:\mbox{$\mathcal{D}$ satisfies $(\vartheta,\vartheta,\ldots,\vartheta)$-GARP}\}.
\end{equation}
With these observations in place, the proof of \refprop{prop:GAPP-GARP} is now straightforward.\medskip

\noindent {\bf Proof of \refprop{prop:GAPP-GARP}.}\,  First we note that there is a generalization to Proposition \ref{prop:GAPPscaling}: it is straightforward to check $p^{t'}\succeq_{p,\overline{\vartheta}} p^t$ if and only if $\breve x^{t'}\succeq_{x,\overline{\vartheta}} \breve x^t$ and $p^{t'}\succ_{p,\overline{\vartheta}} p^t$ if and only if $\breve x^{t'}\succ_{x,\overline{\vartheta}} \breve x^t$. Thus, $\mathcal{D}$ satisfies $\overline{\vartheta}$-GAPP if and only if
$\breve{\mathcal{D}}$ satisfies $\overline{\vartheta}$-GARP. Then it follows immediately from (\ref{RI-GAPP}) and (\ref{CCEI-GARP}) that the critical cost efficiency index of $\breve{\mathcal{D}}$ is equal to the rationality index of $\mathcal{D}$.\hfill $\blacksquare$\medskip

\subsection{Allowing for variation in product characteristics across observations}\label{sec:differentiatedgoods}

In \refsec{sec:discretespace}(3) we considered a model of differentiated goods, where each product is represented by a vector of product characteristics in the space $\Re^L_+$. We assumed in that section that the set of available goods, $X$, is fixed across observations but that assumption is not crucial to our model or test.  We now allow the range of products available to the consumer to vary across observations.

The changes we have in mind include the introduction of new products and also changes to characteristics of an existing product.   The latter could be a substantive change --- for example, a change to the formula for a breakfast cereal --- or it could be a change (say) to the amount of money spent on advertising that alters a product's utility (in the broad sense).  All these cases could be formally captured by a data set  $\mathcal{D}=\{(\psi^t,x^t,X^t)\}_{t=1}^T$, where $X^t$ is the set of products available at observation $t$, $x^t$ (as usual) is the product chosen, and $\psi^t:X^t\to\mathbb{R}_+$ is the price system as observation $t$.  Notice that the price system at observation $t$ is defined on $X^t$ (the set of available products at observation $t$).  An augmented utility function $U:Y\times\Re_{-}\to\Re$, where $Y$ is a subset of $\Re^L_+$ containing $\cup_{t\in T}X^t$ rationalizes $\mathcal{D}$ if, at each observation $t$,
$$U(x^t,-\psi^t(x^t))\geq U(x,-\psi^t(x))\:\mbox{ for all $x\in X^t$;}$$
in other words, $x^t$ and its associated expenditure gives greater utility than any other product available at observation $t$. Sometimes, there is universal agreement that certain product characteristics $K\subset L$ will always make the product more desirable; in this case, we would also like the rationalizing utility function to be increasing in $x_K$.

Developing a test of whether $\mathcal{D}=\{(\psi^t,x^t,X^t)\}_{t=1}^T$ can be rationalized by an augmented utility function that is increasing in $x_K$ requires a modification of the notion of revealed preference.

We say that $\psi^{t'}$ is {\em directly revealed preferred} to $\psi^t$, and denote it by $\psi^{t'}\succeq_{vp}\psi^t$ if $\psi^{t'} (\hat x)\leq \psi^{t}(x^{t})$ where $\hat x\in X^{t'}$ and $\hat x\geq _K x^t$.\footnote{The partial order $\geq_K$ is defined as follows: $x''\geq_K x'$ if $x''_{-K}=x'_{-K}$ and $x''_{K}\geq x'_K$.} In other words, $\psi^{t'}$ is directly revealed preferred to $\psi^t$ if there is a product $\hat x$ available at $t'$ that is weakly superior to $x^t$ in the dimensions belonging to $K$, the same in the other dimensions, and costs less than $x^t$.   We say that $\psi^{t'}$ is {\em directly strictly revealed preferred} to $\psi^t$, and denote it by $\psi^{t'}\succ_{vp}\psi^t$ if $\psi^{t'}$ is directly revealed preferred to $\psi^t$ and, either $\psi^{t'} (\hat x)< \psi^{t}(x^{t})$ or $\hat x>_K x^t$. We denote the transitive closure of $\succeq_{vp}$ by $\succeq_{vp}^*$, that is, $\psi^{t'}\succeq_{vp}^* \psi^t$ if there are $t_1$,  $t_2,\ldots,t_N$ in $T$ such that $\psi^{t'}\succeq_{vp} \psi^{t_1}$, $\psi^{t_1}\succeq_{vp} \psi^{t_2},\ldots,\psi^{t_{N-1}}\succeq_{vp} \psi^{t_N}$, and $\psi^{t_N}\succeq_{vp} \psi^{t'}$; in this case we say that $\psi^{t'}$ is {\em revealed preferred} to $\psi^t$.  If anywhere along this sequence, it is possible to replace $\succeq_{vp}$ with $\succ_{vp}$ then we denote that relation by $\psi^{t'}\succ^*_{vp} \psi^{t}$ and say that $\psi^{t'}$ is {\em strictly revealed preferred} to $\psi^t$.

It is straightforward to check that if $\mathcal{D}$ can be rationalized by an augmented utility function that is strictly increasing in $x_K$ then it obeys {\em GAPP with respect to $\succeq^*_{vp}$ and $\succ^*_{vp}$}, in the following sense:
\begin{quote}
{\em there do not exist observations $t,t'\in T$ such that $\psi^{t'}\succeq^*_{vp} \psi^{t}$ and $\psi^{t}\succ^*_{vp} \psi^{t'}$.}
\end{quote}
The following theorem asserts that the converse is also true.\medskip

\begin{theorem}\label{thm:GAPP-advert}
Let the data set be $\mathcal{D}=\{(\psi^t,x^t,X^t)\}^T_{t=1}$, where $X^t$ is finite for all $t\in T$ and $\psi^t:X^t\to\Re_+$ is strictly increasing in $x_K$, i.e., if $x''>_K x'$ and both $x''$ and $x'$ are in $X^t$, then $\psi^t(x'')>\psi^t(x')$.  Let $Y$ be a closed set in $\Re^L_+$ containing $\cup_{t\in T}X^t$.

Then $\mathcal{D}$ can be rationalized by an augmented utility function $U:Y\times \Re_{-}\to\Re$  that is strictly increasing in $x_K$ if and only if satisfies GAPP with respect to $\succeq^*_{vp}$ and $\succ^*_{vp}$.
\end{theorem}

\noindent {\bf Proof.}\,  We skip the proof of the necessity of GAPP, which is straightforward, and turn to establishing its sufficiency. Let $X=\cup_{t\in T}X^t$.  We claim that we can extend the function $\psi^t:X^t\to\Re_+$ to a function $\underline\psi^t :X\to\Re$ that is increasing in $x_K$ and such that $\underline{\mathcal{D}}=\{(\underline\psi^t,x^t)\}_{t=1}^T$ satisfies GAPP (with respect to the revealed preference orders $\succeq^*_p$ and $\succ^*_p$ induced by $\underline{\mathcal{D}}$).   Then an application of \refthm{thm:GAPP_nonlinear} will guarantee that $\underline{\mathcal{D}}$, and thus also $\mathcal{D}$, can be rationalized by an augmented utility function $U:Y\times \Re_{-}\to\Re$ that is strictly increasing in $x_K$.

To guarantee that $\underline{\mathcal{D}}$ satisfies GAPP, with respect to $\succeq^*_p$ and $\succ^*_p$, we need to specify $\underline\psi^t(x)$, for $x\in X\setminus X^t$, in such a way that $\succeq^*_p=\succeq^*_{vp}$ and $\succ^*_p=\succ^*_{vp}$.  Then GAPP holds with respect to $\succeq^*_{p}$ and $\succ^*_{p}$ because GAPP holds with respect to $\succeq^*_{vp}$ and $\succ^*_{vp}$.  Because $X$ is finite, such an extension $\psi^t$ can be obtained with no technical difficulty. For $x\in X\setminus X^t$, if there is no $x'\in X^t$ such that $x'>_K x$, we choose  $\underline\psi^t (x)>\max\{\psi^s(x^s):s\in T\}$, while making sure that $\underline\psi^t$ remains increasing in $x_K$.  If there is $x'\in X^t$ such that $x'>_K x$, then choose $\underline\psi^t(x)$ to be strictly lower than $\psi^t(x')$, but if $x=x^s$ for some observation $s$, then choose $\underline\psi^t(x)=\underline\psi^t(x^s)>\psi^s(x^s)$ if $\psi^t(x')>\psi^s(x^s)$.  In this way, we guarantee $\succeq^*_p=\succeq^*_{vp}$ and $\succ^*_p=\succ^*_{vp}$.   \hfill $\blacksquare$

\section{More on compensating variation}\label{sec:morecomvar}

Our objective is to prove equation (\ref{justnice2}) from the body of the paper:
\begin{equation} \label{justnice-app}
\inf(\mu_c)= \max\{m_c^s:\mbox{$m_c^s$ satisfies (\ref{justnice}) for some $s\in S$}\}
\end{equation}
where (\ref{justnice}) requires $p^{t_2}x^s+m_c^s=p^sx^s$. \medskip

\noindent {\bf Proof.}\, Since $S$ is a finite set, there is $\bar s\in S$ that achieves the maximum on the right of (\ref{justnice-app}).  We have already shown that $\inf(\mu_c)\geq m_c^{\bar s}$, so it remains to show that they are equal.  We shall do this by producing, for any $\epsilon>0$, an augmented utility function rationalizing $\mathcal{D}$ for which the compensating variation is smaller than $m_c^{\bar s}+\epsilon$.

To this end, let $U$ be any augmented utility function that rationalizes $\mathcal{D}=\{(p^t,x^t)\}_{t=1}^T$; we know that $U$ exists since $\mathcal{D}$ obeys GAPP by assumption.  Let $\hat\psi:X\to\mathbb{R}_+$ be the nonlinear price system given by $\hat\psi (x)=p^{t_2}\cdot x+m_c^{\bar s}+\epsilon$ and suppose that $\hat x\in {\argmax}_{x\in X}U(x,-\hat\psi (x))$.  Now consider the data set $\mathcal{D}'=\mathcal{D}\cup \{(\hat\psi, \hat x)\}$.  Obviously this data set can be rationalized (in fact it is rationalized by $U$).  Furthermore, $\hat\psi \not\succeq_p p^s$ for any $s\in S$.  This is because
$$\hat\psi (x^s)=p^{t_2}x^s+m_c^{\bar s}+\epsilon > p^{t_2}\cdot x^s+m_c^s=p^sx^s$$
for any $s\in S$. (Recall that, be definition, $m_c^{\bar s}\geq m_c^s$ for all $s\in S$.)  Thus there is a complete preorder $\succsim$ on $\{p^t\}_{t=1}^T\cup \{\hat\psi\}$, completing the revealed preference relations on $\mathcal{D}'$ such that $p^{t_1}\succ \hat\psi$.
By \refthm{thm:ultimate2}, there is an augmented utility $\hat U$ rationalizing $\mathcal{D}'$ such that its indirect utility $\hat V$ satisfies  $\hat V(p^{t_1})>\hat V(\hat\psi)$. In other words,
$$\hat V(\hat\psi)=\max_{x\in X} \hat U(x,-p^{t_2}\cdot x-m_c^{\bar s}-\epsilon)<\hat U(x^{t_1},-p^{t_1}\cdot x^{t_1}).$$
So for the augmented utility function $\hat U$, the compensating variation must be smaller than $m_c^{\bar s}+\epsilon$. \hfill $\blacksquare$
\medskip

Our treatment of the compensating and equivalent variations can be easily extended to allow for nonlinear pricing.  We give a sketch of the procedure for calculating a bound on the compensating variation and leave the reader to fill in the details; this procedure is completely analogous to the one for linear prices described in \refsec{sec:compensation}

Let $U$ be the consumer's augmented utility function. Suppose that the initial price is $\psi^{t_1}$ and it changes to $\psi^{t_2}$, leading to a change in consumption from $x^{t_1}$ to $x^{t_2}$.   Then the compensating variation $\mu_c$ is, by definition, the variable that solves the equation
\begin{equation}\label{CVnonlinear}
\max\nolimits_{x\in \Re_+^L} U(x,-\psi^{t_2}(x)-\mu_c)=V(\psi^{t_1})=U(x^{t_1},-\psi^{t_1}(x^{t_1})).
\end{equation}
Note that $\mu_c$ is unique since $U$ is strictly increasing in the last argument. We could think of $\mu_c$ as the lump sum transferred {\em from} the consumer (if it is positive) or {\em to} the consumer (if it is negative) after the price change that will make her indifferent between the two situations.

Now suppose a data set $\mathcal{D}$ obeys GAPP and contains the observation $(\psi^{t_1},x^{t_1})$. How can we form a lower bound of the compensating variation of a price change from $\psi^{t_1}$ to $\psi^{t_2}$?  (Note that our discussion is valid whether or not $\psi^{t_2}$ is an observed price system in the $\mathcal{D}$.)  Formally, we wish to find
$$\inf\{\mu_c:\mbox{$\mu_c$ solves (\ref{CVnonlinear}) for some augmented utility function $U$ that rationalizes $\mathcal{D}$}\}.$$
Abusing terminology somewhat, we shall denote this term by $\inf({\mu}_{c})$.

We now describe how to compute this bound. Let $S\subset T$ be the set of observations such that $s\in S$ if $\psi^{s}\succeq^*_p \psi^{t_1}$.  This set is nonempty since it contains $p^{t_1}$ itself.  For each $s\in S$, there is $m_c^s$ such that
\begin{equation}\label{justnice-nl}
\psi^{t_2}(x^s) +m_c^s=\psi^s(x^s).
\end{equation}
For any $U$ that rationalizes $\mathcal{D}$, the compensating variation $\mu_c\geq m_c^s$. Indeed, if $m<m_c^s$, then $m\neq \mu_c$ for any utility function rationalizing $\mathcal{D}$ because
\begin{multline*}
\max\nolimits_{x\in \Re_+^L} U(x,-\psi^{t_2}(x)-m)\geq U(x^s,-\psi^{t_2}(x^s)-m)> U(x^s,-\psi^{t_2}(x^s)-m_c^s)\\
=U(x^s,-\psi^s(x^s))\geq U(x^{t_1},-\psi^{t_1}(x^{t_1}))=V(\psi^{t_1}).
\end{multline*}
Thus $\inf(\mu_c)\geq m_c^s$ for all $s\in S$.  In fact, by adapting the argument we provided for the case of linear prices in the earlier part of this section, we could show that
\begin{equation}  
\inf(\mu_c)= \max\{m_c^s:\mbox{$m_c^s$ satisfies (\ref{justnice-nl}) for some $s\in S$}\}.
\end{equation}
Since the right side of this equation can be computed from the data, we have found a practical way of calculating $\inf(\mu_c)$.
\medskip

Notice that if $\psi^{t_2}$ is revealed preferred to $\psi^{t_1}$, in the sense that there is $s'\in S$ such that $m_c^{s'}\geq 0$, then $\inf(\mu_c)\geq 0$; in other words, a lump sum {\em tax} of $\inf(\mu_c)$ will leave the agent no worse off than at $t_1$ and potentially better off.  On the other hand, if $\psi^{t_2}$ is {\em not} revealed preferred to $\psi^{t_1}$, that is, for every $s\in S$, we have $m_c^s< 0$, then $\inf(\mu_c)< 0$; in other words, at $\psi=\psi^{t_2}$, a lump sum {\em transfer} of $\inf(\mu_c)$ to the agent will guarantee that the agent no worse off than at $t_1$ and potentially better off.

\section{Proof of \refthm{thm:StGAPPTest}}\label{sec:appendix_RAUM}

\noindent It remains for us to show that if there is $\nu\in \Re^{|\mathcal{A}|}_+$ such that $A\nu=\pi$ then there is a probability space $(\Omega, \mathcal{F},\mu)$ and a random variable $\chi:\Omega\to (\mathbb{R}^{L}_{+})^T$ such that $\{(p^t,\chi^t(\omega))\}_{t\in T}$ obeys GAPP almost surely and that (\ref{eqn:condi}) in the main paper holds, that is,
\begin{equation}\label{eqn:Appcondi}
\mathring{\pi}^t(Y)=\mu(\{\omega\in\Omega: \chi^t(\omega)\in Y\})\:\mbox{ for any measurable $Y\subset \Re^L_+$.}
\end{equation}

Given $\mathring\pi^t$ we define $\tilde\pi^{i_t,t}$ to be the conditional distribution of demand at observation $t$ when it restricted to the cone
$K^{i_t,t}=\{r\cdot x: x\in B^{i_t,t},\: r>0\}$.   Thus, if $Y$ is a measurable subset of $\Re^L_+$, then
$$\mathring\pi^t(Y\cap K^{i_t,t}) =\pi^{i_t,t}\,\tilde\pi^{i_t,t}(Y).$$
(Recall that, by definition, $\pi^{i_t,t}=\mathring\pi^t(K^{i_t,t})$.)  Of course, if $Y\cap K^{i_t,t}=\emptyset$ then $\tilde\pi^{i_t,t}(Y)=0$.

Given $a$ and $t$, there is a unique $i'_t$ such that $a^{i'_t,t}=1$; let $K^t_a=K^{i'_t,t}$ and let $\tilde\pi^t_a$ be the probability measure on $\Re^L_+$ such that $\tilde\pi^t_a=\tilde\pi^{i'_t,t}$.  Obviously, $\tilde\pi^t_a(K^t_a)=1$.

Let $\lambda_a$ be the product measure on $(\Re^L_+)^T$ given by $\lambda_a=\times_{t\in T} \tilde\pi^t_a$.  It follows from the definition of $a$ that
$$ \times_{t\in T} K^t_a \subset \left\{x\in (\Re^L_+)^T:\mbox{ $\{(p^t,x^t)\}_{t\in T}$ satisfies GAPP}\right\}$$
and since $\tilde\pi^t_a(K^t_a)=1$ for all $t$, we obtain
\begin{equation}\label{eqn:asu}
\lambda_a\left(\left\{x \in (\Re^L_+)^T:\mbox{$\{(p^t,x^t))\}_{t\in T}$ satisfies GAPP}\right\}\right)=1.
\end{equation}
Note that $x^t$ refers to the $t$th entry of $x$).

Define $\Omega=\mathcal{A}\times (\Re^L_+)^T$ and the probability measure $\mu$ on $\Omega$ by $\mu (\{a\}\times Y)=\nu_a\lambda_a(Y)$ for any measurable set $Y\subseteq (\Re^L_+)^T$, where $\nu_a$ refers to the $a$th entry of $\nu$. Lastly, define $\chi:\Omega\to (\Re^L_+)^T$ by $\chi((a,x))=x$.  Then, using (\ref{eqn:asu}), we obtain
\begin{multline*}
\mu\left(\left\{\,(a,x)\in\Omega: \{(p^t,\chi^t(a,x))\}_{t\in T}\mbox{ satisfies GAPP}\, \right\}\right) \\ =\sum_{a\in \mathcal{A}}\nu_a\lambda_{a}\left(\left\{x\in(\Re^L_+)^T: \{(p^t,\chi^t(a,x))\}_{t\in T}\mbox{ satisfies GAPP}\right\}\right)
= \sum_{a\in A}\nu_a =1.
\end{multline*}
It remains for us to show that (\ref{eqn:Appcondi}) holds.  Let $Y$ be a measurable set in $\Re^L_+$. For any $K^{i_t,t}$,
\begin{eqnarray*}
\mu(\{(a,x)\in\Omega: \chi^t(a,x)\in Y\cap K^{i_t,t}\})&=& \sum_{a\in\mathcal{A}}\nu_a\lambda_a(\{x\in (\Re^L_+)^T:\chi^t(a,x)\in Y\cap K^{i_t,t}\})  \\
&=&  \sum_{a\in\mathcal{A}}\nu_a\lambda_a(\{x\in (\Re^L_+)^T: x^t\in Y\cap K^{i_t,t}\}) \\
&=&\sum_{a\in \mathcal{A}}\nu_a\tilde\pi^t_a(\{x^t\in \Re^L_+: x^t\in Y\cap K^{i_t,t}\})
\end{eqnarray*}
Recall that $\mathcal{A}^{i_{t},t}=\{a\in\mathcal{A}: a^{i_{t},t}=1\}$, so for any $a\notin \mathcal{A}^{i_{t},t}$,
$$\tilde\pi^t_a(\{x^t\in \Re^L_+: x^t\in Y\cap K^{i_t,t}\})=0.$$
Thus
\begin{eqnarray*}
\sum_{a\in \mathcal{A}}\nu_a\tilde\pi^t_a(\{x^t\in \Re^L_+: x^t\in Y\cap K^{i_t,t}\})
&=& \sum_{a\in \mathcal{A}^{i_t,t}}\nu_a\tilde\pi^t_a(\{x^t\in \Re^L_+: x^t\in Y\cap K^{i_t,t}\}) \\
&=& \frac{\mathring\pi^t(Y\cap K^{i_t,t})}{\pi^{i_t,t}}\sum_{a\in \mathcal{A}^{i_t,t}}\nu_a \\
&=& \mathring\pi^t(Y\cap K^{i_t,t}),
\end{eqnarray*}
where the last equation follows from the fact that $A\nu=\pi$. Thus we have shown that, for all $K^{i_t,t}$,
$$\mu(\{(a,x)\in\Omega: \chi^t(a,x)\in Y\cap K^{i_t,t}\})=\mathring\pi^t(Y\cap K^{i_t,t}).$$
This in turn guarantees that (\ref{eqn:Appcondi}) holds.\hfill $\blacksquare$

\section{Omitted Details from \refsec{sec:metrics}} \label{sec:appendix_metrics}

In this section, we formally develop our bootstrap procedure from \refsec{sec:econ_wel}. We begin by describing Weyl-Minkowski duality\footnote{See, for example, Theorem 1.3 in \citet{Ziegler}.}
which is used for the equivalent (dual) restatement \eqref{eq:h-test} of our test \eqref{eq:v-test}. As we mentioned earlier, we will also appeal to this duality in the proof of the asymptotic validity of our testing procedure.

\begin{theorem}\label{thm:weyl-minkowsky} (Weyl-Minkowski Theorem for Cones)  A subset $\mathcal{C}$ of ${\Re}^I$
	is a finitely generated cone	
	\begin{equation}\label{eq:v-rep}
	\mathcal{C}=\{\nu _{1}a_{1}+...+\nu _{|A|}a_{|A|}:\nu _{h}\geq 0\} \text{ for some } A = [a_1,...,a_H] \in {\Re}^{I\times |A|}
	\end{equation}
	if, and only if, it is a finite intersection of closed half spaces
	\begin{equation}\label{eq:h-rep}
	\mathcal{C}=\{t\in \Re^{I}|Bt\leq 0\} 	\text{ for some } B \in {\Re}^{m\times I}.
	\end{equation}
\end{theorem}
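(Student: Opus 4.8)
The plan is to prove the two implications separately, using Fourier--Motzkin elimination as the main tool for the direction \eqref{eq:v-rep}$\Rightarrow$\eqref{eq:h-rep}, and then bootstrapping off that direction, together with a separating hyperplane argument, for \eqref{eq:h-rep}$\Rightarrow$\eqref{eq:v-rep}.

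\textbf{From a V-representation to an H-representation.} Suppose $\mathcal{C}=\{A\nu:\nu\geq 0\}$. The key observation is that $\mathcal{C}$ is the image, under the coordinate projection $(t,\nu)\mapsto t$, of the polyhedron
\[
P=\{(t,\nu)\in\Re^{I}\times\Re^{|A|}\,:\,t-A\nu\leq 0,\ -t+A\nu\leq 0,\ -\nu\leq 0\},
\]
whose defining inequalities are all homogeneous. I would then eliminate the coordinates $\nu_{1},\dots,\nu_{|A|}$ one at a time by Fourier--Motzkin elimination: at each step one sorts the current inequalities by the sign of the coefficient on the variable to be removed and replaces them by the finitely many nonnegative combinations of pairs with opposite-signed coefficients on that variable (which cancel it), together with the inequalities in which it does not appear. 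Each such step produces a finite system of linear inequalities whose solution set is exactly the projection onto the remaining coordinates, and it preserves homogeneity. After $|A|$ steps one is left with a finite homogeneous system $Bt\leq 0$ satisfied precisely by those $t$ lying in the projection of $P$, i.e.\ by $t\in\mathcal{C}$; this is \eqref{eq:h-rep}.

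\textbf{From an H-representation to a V-representation.} Suppose now $\mathcal{C}=\{t:Bt\leq 0\}$, and let $b_{1},\dots,b_{m}$ denote the rows of $B$. Consider the finitely generated cone $K:=\{B^{\top}\lambda:\lambda\in\Re^{m}_{+}\}$. By the first part, $K$ has an H-representation $K=\{y:Cy\leq 0\}$; let $c_{1},\dots,c_{k}$ be the rows of $C$, and set $L:=\{C^{\top}\mu:\mu\in\Re^{k}_{+}\}$. I claim $\mathcal{C}=L$. The inclusion $L\subseteq\mathcal{C}$ is immediate: each $b_{i}=B^{\top}e_{i}$ lies in $K$, so $Cb_{i}\leq 0$, i.e.\ $c_{j}\!\cdot\! b_{i}\leq 0$ for all $i,j$; hence for $t=\sum_{j}\mu_{j}c_{j}$ with $\mu\geq 0$ we get $b_{i}\!\cdot\! t=\sum_{j}\mu_{j}(c_{j}\!\cdot\! b_{i})\leq 0$, so $Bt\leq 0$. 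For $\mathcal{C}\subseteq L$, note that $L$ is a finitely generated cone, hence by the first part a polyhedron, hence closed and convex; if some $t^{\star}\in\mathcal{C}$ failed to lie in $L$, the separating hyperplane theorem would give a $y$ with $y\!\cdot\! t^{\star}>0$ and $y\!\cdot\! t\leq 0$ for all $t\in L$; taking $t=c_{j}$ shows $Cy\leq 0$, so $y\in K$, say $y=B^{\top}\lambda$ with $\lambda\geq 0$; but then $y\!\cdot\! t^{\star}=\lambda^{\top}(Bt^{\star})\leq 0$, contradicting $y\!\cdot\! t^{\star}>0$. Thus $\mathcal{C}=L$ is finitely generated.

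\textbf{Main obstacle.} The only step requiring genuine care is the Fourier--Motzkin elimination lemma used in the first part: one must check that eliminating a single variable from a finite system of homogeneous linear inequalities again yields a finite homogeneous system, and that its solution set is exactly the projection. Once this is in hand, the first implication is immediate and the second is a short polarity-and-separation argument. (One could instead deduce \eqref{eq:v-rep} from \eqref{eq:h-rep} via Farkas' lemma, but since Farkas is essentially equivalent to this theorem, the self-contained route above seems preferable.)
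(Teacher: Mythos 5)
Your proof is correct. Note that the paper does not prove this statement at all: it is quoted as a classical result, with a pointer to Theorem 1.3 in Ziegler's \emph{Lectures on Polytopes}, so there is no in-paper argument to compare against. Your route---Fourier--Motzkin elimination to pass from the $\mathcal{V}$- to the $\mathcal{H}$-representation, and then polarity plus a separating-hyperplane argument (taking the cone $K$ generated by the rows of $B$, applying the first direction to get its $\mathcal{H}$-representation, and showing the cone generated by the resulting rows equals $\mathcal{C}$) for the converse---is essentially the standard textbook proof, including the one in the cited source. A nice touch is that you obtain closedness of the finitely generated cone $L$ by appealing to the already-proved first direction rather than proving it separately, which is where sloppier write-ups often leave a gap; the only point you gloss over is upgrading affine separation to conic separation ($y\cdot t^{\star}>0$ and $y\cdot t\le 0$ on all of $L$), which follows in one line from $0\in L$ and positive homogeneity and is fine to leave implicit.
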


The expressions in \eqref{eq:v-rep} and \eqref{eq:h-rep} are called a $\mathcal{V}$-representation (as in ``vertices'') and a $\mathcal{H}$-representation (as in ``half spaces'') of $\mathcal{C}$, respectively. In what follows, we use an $\mathcal{H}$-representation of $\mathrm{cone}(A)$ corresponding to  a $m \times I$ matrix $B$ as implied by \refthm{thm:weyl-minkowsky}.

We are now in a position to show that the bootstrap procedure defined in \refsec{sec:econ_wel} is asymptotically valid. Note first that ${\Theta} = [\underline{\theta},\widebar{\theta}]$, where
\begin{eqnarray}
\widebar{\theta}&=&\max_{\nu \in  \Delta^{|A|-1}} \rho\nu = \max_{1 \leq j \leq |A|} \rho_j     \\ \label{eq:utheta}
\underline{\theta}&=&\min_{\nu \in  \Delta^{|A|-1}} \rho\nu = \min_{1 \leq j \leq |A|} \rho_j,  \label{eq:ltheta}
\end{eqnarray}
where $\rho_j$ denotes the $j$th component of $\rho$. We normalize $(\rho,\theta)$ such that $\Theta = [\underline{\theta},\underline{\theta}+1]$. Next, define
\begin{eqnarray}
\mathcal H &:=&\{1,2,...,|A|\} \label{eq:H-def} \\
\widebar{\mathcal H} &:=& \{j \in \mathcal H\; |\; \rho_j = \widebar{\theta}\} \\
\underline{\mathcal H} &:=& \{j \in \mathcal H \; | \; \rho_j = \underline{\theta}\} \\
\mathcal H_0 &:=& \mathcal H  \setminus (\widebar{\mathcal{H}} \cup \underline{\mathcal{H}}). \label{eq:H0-def}
\end{eqnarray}
Recall that $\tau_N$ is a tuning parameter chosen such that $\tau _{N}\downarrow 0$ and $\sqrt{N}\tau _{N}\uparrow \infty $. For $\theta \in \Theta_I$, we now formally define the $\tau_N$-tightened version of $\mathcal S$ as
$$
\mathcal S_{\tau_N}(\theta) := \{A\nu\; | \; \rho\nu = \theta, \nu \in \mathcal V_{\tau_{N}}(\theta)\},
$$
where
$$\mathcal V_{\tau_{N}}(\theta) := \left\{\nu \in \Delta^{|A|-1} \; \left| \; \begin{array}{l}	\nu_j \geq
\frac
{(\widebar{\theta} - \theta)\tau_{N}}
{|\underline{\mathcal{H}}   \cup \mathcal{H}_0|},
j \in  \underline{\mathcal{H}}, \;
\nu_{j'} \geq
\frac {(\theta - \underline{\theta})\tau_{N}}
{|\widebar{\mathcal{H}} \cup \mathcal{H}_0  |}, \;
j' \in \widebar{\mathcal{H}}, \;
\\	\\
\nu_{j''} \geq
\left[
1 -
\frac
{(\widebar{\theta} - \theta) |\underline{\mathcal{H}}|}
{|\underline{\mathcal{H}}   \cup \mathcal{H}_0|}
-
\frac {(\theta - \underline{\theta}) |\widebar{\mathcal{H}}|}
{|\widebar{\mathcal{H}} \cup \mathcal{H}_0  |}
\right]
\frac {\tau_N}{|\mathcal{H}_0|}, \;
j'' \in {\mathcal{H}}_0
\end{array}\right.
\right\}.$$

In applications where $\rho$ is binary, the above notation simplifies. Specifically, in our empirical application on deriving the welfare bounds, $\rho =  \mathbb{1}_{t\succeq^*_p t'}$ and $\theta = \mathcal{N}_{t\succeq^*_p t'}$. Here, $\widebar \theta = 1$, $\underline{\theta} = 0$, and $\widebar \theta - \underline{\theta} = 1$ holds without any normalization. Also, $\widebar {\mathcal H}$ ($\underline{\mathcal H}$) is just the set of indices for the types that (do not) prefer price $p^t$ compared to $p^{t'}$, while $\mathcal H_0$ is empty. We then have:
$$
\mathcal S_{\tau_{N}}(\mathcal{N}_{t\succeq^*_p t'}) = \left\{A\nu\; \left| \;  \mathbb{1}_{t\succeq^*_p t'}'\nu = \mathcal{N}_{t\succeq^*_p t'}, \nu \in \mathcal V_{\tau_{N}}(\mathcal{N}_{t\succeq^*_p t'})\right. \right\},
$$
where
$$\mathcal V_{\tau_{N}}(\mathcal{N}_{t\succeq^*_p t'}) = \left\{\nu \in \Delta^{|A|-1} \; \left| \; \nu_j \geq 	\frac{(1 - \mathcal{N}_{t\succeq^*_p t'})\tau_{N}}{|\underline{\mathcal{H}}|},
j \in  \underline{\mathcal{H}}, \;  \nu_{j'} \geq \frac {\mathcal{N}_{t\succeq^*_p t'}\tau_{N}}{|\widebar{\mathcal{H}}|}, \;
j' \in \widebar{\mathcal{H}} \right. \right\}.$$

\medskip

We now state the mild data assumptions.
\begin{assumption}
	\label{ass:proportions} For all $t = 1,...,T$, $\frac {N_t}{N} \rightarrow \kappa_t$ as $%
	N \rightarrow \infty$, where $\kappa_t >0$, $1 \leq t \leq T$.
\end{assumption}

\begin{assumption}\label{ass:sampling}
	The econometrican observes $T$ independent cross-sections of i.i.d. samples $\left\{
	x_{n(t)}^t\right\} _{n(t)=1}^{N_{t}},t=1,...,T$ of consumers' choices corresponding to the known price vectors $\{p_t\}_{t=1}^T$.
\end{assumption}

Next, let ${\bf d}^{i,t}_{n(t)} :=\textbf{1}\{x_{n(t)}^t \in B^{i,t}\}$, ${\bf d}_{n(t)}^t = [{\bf d}^{1,t}_{n(t)},...,{\bf d}^{{I_t},t}_{n(t)}]$, and  ${\bf d}_{n}^t = [{\bf d}^{1,t}_{n},...,{\bf d}^{{I_t},t}_{n}]$.
Let ${\bf d}_t$ denote the choice vector of a consumer facing price $p^t$ (we can, for example, let  ${\bf d}_t = {\bf d}_1^t$).     Define ${\bf d} = [{\bf d}_1',...,{\bf d}_T']'$: note, ${\mathrm{E}[{\bf d}] = \pi}$ holds by definition.  Among the rows of $B$ some of them correspond to constraints that hold trivially by definition, whereas some are for non-trivial constraints.  Let ${\mathcal K}^R$ be the index set for the latter.  Finally, let
\begin{eqnarray*}
	g &=& B{\bf d}
	\\
	&=& [g_1,...,g_m]'.
\end{eqnarray*}
With these definitions, consider the following requirement:
\begin{condition}
	\label{condition 1} For each  $k \in {\mathcal K}^R$,  {\rm{var}}$(g_k)  > 0$ and $\mathrm{E}[|g_k/\sqrt{\mathrm{var}(g_k)}|^{2+c_1}] < c_2$ hold, where $c_1$ and $c_2$ are positive constants.
\end{condition}
\noindent
This  guarantees the Lyapunov condition for the triangular array CLT used in establishing asymptotic uniform validity.  This type of condition has been used widely in the literature of moment inequalities; see \citet{andrews-soares}.

\noindent {\sc \textbf{Proof of \refthm{thm:validity}.}}

Define
$$
\mathcal C = {\mathrm {cone}}(A)
$$
and
$$
\mathcal T(\theta) = \{\pi = A \nu: \rho'\nu = \theta, \nu \in \Re^{|A|}\},
$$
an affine  subspace of $\Re^{I}$.  It is convenient to rewrite $\mathcal T(\theta)$ as
$\mathcal T(\theta)= \{t \in \Re^{I}: \tilde B t = d(\theta)\} $ where $\tilde B  \in \tilde m \times \Re^{I} $, $d(\cdot) \in \tilde m \times 1$, and $\tilde m$ all depend on $(\rho,A)$.  We let $\tilde b_j$ denote the $j$-th row of $\tilde B$. Then
$$
\mathcal{S}(\theta) = \mathcal C \cap \Delta^{|A|-1} \cap \mathcal T(\theta).
$$
By \refthm{thm:weyl-minkowsky}, $\mathcal C = \{\pi: B \pi \leq 0\}$, therefore
\begin{eqnarray}\label{eq:altS}
\mathcal{S}(\theta) &=& \{t \in \Re^{|A|}:  Bt \leq 0, \tilde Bt = d(\theta), {\bf 1}_H't = 1 \}.
\end{eqnarray}
Let
$$
\psi(\theta) = [\psi_1(\theta),...,\psi_H(\theta)]' \quad \theta \in \Theta
$$
with
$$
\psi_j(\theta) =\left\{\begin{array}{cl}
\frac
{(\widebar{\theta} - \theta)}
{|\underline{\mathcal{H}}   \cup \mathcal{H}_0|}
& \text{ if } \; j \in  \underline{\mathcal{H}},
\\
\frac
{(\theta - \widebar{\theta})}
{|\widebar{\mathcal{H}}   \cup \mathcal{H}_0|}
& \text{ if } \; j \in  \widebar{\mathcal{H}},
\\
\left[
1 -
\frac
{(\widebar{\theta} - \theta) |\underline{\mathcal{H}}|}
{|\underline{\mathcal{H}}   \cup \mathcal{H}_0|}
-
\frac {(\theta - \underline{\theta}) |\widebar{\mathcal{H}}|}
{|\widebar{\mathcal{H}} \cup \mathcal{H}_0  |}
\right]
\frac {1}{|\mathcal{H}_0|}
& \text{ if } \; j \in \mathcal{H}_0,
\end{array}
\right.
$$
where terms are defined in \eqref{eq:H-def}-\eqref{eq:H0-def}. Then
$$
\mathcal{S}_{\tau_N}(\theta) = \{\pi = A\nu: \nu \geq \tau_N \psi(\theta), \nu \in \Delta^{|A|-1}, \rho '\nu  = \theta  \}.
$$
Finally, let
$$
\mathcal C_{\tau_N} = \{\pi = A\nu: \nu \geq \tau_N \psi(\theta)\}.
$$
Then
$$
\mathcal{S}_{\tau_N}(\theta) = \mathcal C_{\tau_N} \cap \Delta^{|A|-1} \cap \mathcal T(\theta).
$$
Proceeding as in the proof of Lemma 4.1 in \citetalias{kitamura2018}, we can express the set $\mathcal C_{\tau_N}$ as
$$
\mathcal C_{\tau_N} = \{t: Bt \leq -\tau_N \phi(\theta)\}
$$
where
$$
\phi(\theta) = -BA\psi(\theta).
$$
As in Lemma 4.1 in \citetalias{kitamura2018}, let the first $\bar m$ rows of $B$ represent inequality constraints and the rest equalities, and also let $\Phi_{kh}$ the $(k,h)$-element of the matrix $-BA$.   We have
$$
\phi_k = \sum_{h=1}^{|A|}\Phi_{kh} \psi_h(\theta)
$$
where, for each $k \leq \bar m$, $\{\Phi_{kh}\}_{h=1}^{|A|}$ are all nonnegative, with at least some of them being strictly positive, and $\Phi_{kh} = 0$ for all $h$ if $\bar m < k \leq m$.  Since  $ \psi_h(\theta) > 0, 1 \leq h \leq |A|$ for every $\theta \in \Theta$ by definition, we have $ \phi_j(\theta) \geq C, 1 \leq j \leq \bar m$ for some positive constant $C$, and $ \phi_j(\theta) = 0, \bar m < j \leq  m$ for every $\theta \in \Theta$.  Putting these together, we have
\begin{eqnarray*}
	\mathcal{S}_{\tau_N}(\theta) &=& \{t \in \Re^{|A|}:  Bt \leq -\tau_N \phi(\theta), \tilde Bt = d(\theta), {\bf 1}_H't = 1 \}
\end{eqnarray*}
where ${\bf 1}_H$ denotes the $|A|$-vector of ones.
Define the $\Re^{I}$-valued random vector
$$
\pi_{\tau_N}^* := \frac 1 {\sqrt N} \zeta + \hat \eta_{\tau_N}, \; \zeta \sim \mathrm{N}(0,\hat S)
$$
where $\hat S$ is a consistent estimator for the asymptotic covariance matrix of $\sqrt N(\hat \pi - \pi)$.  Then (conditional on the data) the distribution of
$$
\delta^*(\theta) := N\min_{\eta \in \mathcal S_{\tau_N}(\theta)}[\pi^*_{\tau_N} - \eta]'\Omega[\pi^*_{\tau_N} - \eta]
$$
corresponds to that of the bootstrap test statistics.  Let
$$
B_* :=
\begin{bmatrix}
B \\
\tilde B \\
{\bf 1}_H'
\end{bmatrix}
$$
Define $\ell = \mathrm{rank}(B_*)$ for the augmented matrix $B_*$ instead of $B$ in \citetalias{kitamura2018}, and let the $\ell \times m$-matrix $K$ be such that $KB_*$ is a matrix whose rows consist of a basis of the row space row $(B_*)$. Also let $M$ be an $(I-\ell )\times I$ matrix whose rows form an orthonormal basis of ker$B_*=$ ker$(KB_*)$, and define $P=\binom{KB_*}{M}$. Finally, let $\hat{g}=B_*\hat{\pi}$.

Define
\begin{eqnarray*}
	T(x,y)&:=&\binom{x}{y}^{\prime }{P^{-1}}^{\prime }\Omega P^{-1}\binom{x}{y},\quad x\in \Re^{\ell },y\in \Re^{I-\ell } \\
	t(x)&:=&\min_{y\in \Re^{I-\ell }}T(x,y) \\
	\quad s(g)&:=&\min_{\gamma = [{\gamma^\leq}',{\gamma^=}']', \gamma^\leq \leq 0, \gamma' \in \text{col}(B)}t(K[g-\gamma])
\end{eqnarray*}%
with
$$
\gamma^=  =
\begin{bmatrix}
{\bf 0}_{m - \bar m} \\
d(\theta) \\
1
\end{bmatrix}
$$
where ${\bf 0}_{m - \bar m}$ denotes the $(m - \bar m)$-vector of zeros.
It is easy to see that $t:\Re^{\ell }\rightarrow \Re_{+}$ is a
positive definite quadratic form.  By \eqref{eq:altS}, we can write
\begin{eqnarray*}
	\delta_{N}(\theta) =Ns(\hat{g})	=s(\sqrt{N}\hat{g}).
\end{eqnarray*}%
Likewise, for the bootstrapped version of $\delta$ we have
\begin{eqnarray*}
	\delta^*(\theta) &=& N\min_{\eta \in \mathcal S_{\tau_N}(\theta)}[\pi^*_{\tau_N} - \eta]'\Omega[\pi^*_{\tau_N} - \eta]  \\
	&=&s(\varphi_{N}(\hat \xi)   + \zeta),
\end{eqnarray*}%
where $\hat \xi = B_* \hat \pi/\tau_N$.  Note the function  $\varphi_N(\xi)   = [\varphi_N^1(\xi),...,\varphi^m_N(\xi) ]$  for $\xi =
(\xi_1,...,\xi_m)^{\prime } \in \text{col}(B_*)$.  Moreover, its $k$-th element $\varphi_N^k$ for $k \leq \bar m$ satisfies
\begin{equation*}
\varphi_N^k(\xi) = 0
\end{equation*}
if $|\xi^k| \leq \delta$ and $\xi^j \leq \delta, 1 \leq j \leq m$, $\delta > 0$, for large
enough $N$ and $\varphi_N^k(\xi) =  0$ for $k > \bar m$.   This follows (we use some notation in the proof of Theorem 4.2 in \citetalias{kitamura2018}, which the reader is referred to) by first noting that it suffices to show that for small enough $\delta >0$, every element $x^*$ that fulfills equation (9.2) in \citetalias{kitamura2018} with its RHS intersected with $\cap_{j=1}^{\tilde m} \tilde S_j(\delta), \tilde{S}_j(\delta) = \{x:|\tilde b_j'x - d_j(\theta)| \leq \tau \delta \}$ satisfies
$$
x^*|\mathcal S(\theta) \in \cap_{j=1}^q H_j^\tau \cap L \cap \mathcal{T}(\theta).
$$
If not, then there exists $(\tilde a,\tilde x) \in F \cap \mathcal{T}(\theta) \times\cap_{j=1}^q H_j \cap L \cap \mathcal{T}(\theta)$ such that
$$
(\tilde a - \tilde x)'(\tilde x|{\mathcal S}_{\tau}(\theta) - \tilde x) = 0,
$$
where $\tilde x|{\mathcal S}_{\tau}(\theta)$ denotes the orthogonal projection of
$\tilde x$ on ${\mathcal S}_{\tau}(\theta)$.
This, in turn, implies that there exists a triplet $(a_0,a_1,a_2) \in \mathcal A \times \mathcal A \times \mathcal A $ such that $(a_1 - a_0)'(a_2 - a_0) <0$.   But as shown in the proof of Theorem 4.2 in \citetalias{kitamura2018}, this cannot happen.  The conclusion then follows by Theorem 1 of \citet{andrews-soares}. \hfill $\blacksquare$

\bibliographystyle{econometrica}
\bibliography{GAPP_Refs3}

\end{document}